\definecolor{forestgreen}{rgb}{0.13, 0.55, 0.13}
\newcommand{\Xomit}[1]{}
\titleclass{\subsubsubsection}{straight}[\subsection]
\newcounter{subsubsubsection}[subsubsection]
\renewcommand\thesubsubsubsection{\thesubsubsection.\arabic{subsubsubsection}}
\def\toclevel@subsubsubsection{4}
\def\l@subsubsubsection{\@dottedtocline{4}{7em}{4em}}
\crefname{equation}{}{}
\crefname{lem}{Lemma}{Lemmas}
\crefname{section}{Section}{Sections}
\crefname{subsubsubsection}{Section}{Sections}
\crefname{rem}{Remark}{Remarks}
\crefname{figure}{Figure}{Figures}
\crefname{table}{Table}{Tables}
\Crefname{lem}{Lemma}{Lemmas}
\crefname{thm}{Theorem}{Theorems}
\Crefname{thm}{Theorem}{Theorems}
\newtheorem{thm}{Theorem}[section]
\newtheorem{example}{Example}[section]
\newtheorem{lem}{Lemma}[section]
\newtheorem{proposition}[thm]{Proposition}
\newtheorem{corollary}[thm]{Corollary}
\theoremstyle{definition}
\newtheorem{fact}{Fact}[section]
\theoremstyle{definition}
\newtheorem{defn}{Definition}[section]
\title{Virtues of Patience in Strategic Queuing Systems}
\author{Jason Gaitonde\thanks{Supported by NSF grant CCF-1408673 and AFOSR grant FA9550-19-1-0183.} \\
Cornell University\\
\texttt{jsg355@cornell.edu}
\and 
\'Eva Tardos\thanks{Supported in part by NSF grant CCF-1408673 and AFOSR grant FA9550-19-1-0183.} \\
Cornell University\\
\texttt{eva.tardos@cornell.edu}
}
\begin{document}
\maketitle
\thispagestyle{empty}

 \begin{abstract}
We consider the problem of selfish agents in discrete-time queuing systems, where competitive queues try to get their packets served. In this model, a queue gets to send a packet each step to one of the servers, which will attempt to serve the oldest arriving packet, and unprocessed packets are returned to each queue. We model this as a repeated game where queues compete for the capacity of the servers, but where the state of the game evolves as the length of each queue varies, resulting in a highly dependent random process. In classical work for learning in repeated games, the learners evaluate the outcome of their strategy in each step---in our context, this means that queues estimate
their success probability at each server. Earlier work by the authors [in EC'20] 
shows that with no-regret learners the system needs twice the capacity as would be
required in the coordinated setting to ensure queue lengths remain stable, despite the selfish behavior of the queues. In this paper, we demonstrate that this myopic way of evaluating outcomes is suboptimal: if more patient queues choose strategies that selfishly maximize their \emph{long-run success rate}, stability can be ensured with just $\frac{e}{e-1}\approx 1.58$ times extra capacity, strictly better than what is possible assuming the no-regret property.

 As these systems induce highly dependent random processes, our analysis draws heavily on techniques from the theory of stochastic processes to establish various game-theoretic properties of these systems. Though these systems are random even under 
 fixed stationary policies by the queues, we show using careful probabilistic arguments that surprisingly, under such fixed policies, these systems have essentially deterministic and explicit asymptotic behavior. We show that the growth rate of a set can be written as the ratio of a submodular and modular function, and
 use the resulting explicit description to show that the subsets of queues with largest growth rate is closed under union and non-disjoint intersections, which we use in turn 
 to prove the claimed sharp bicriteria result for the equilibria of the resulting system. Our equilibrium analysis relies on a  
 novel deformation argument towards a more analyzable
 solution that is quite different from classical price of anarchy bounds. While the intermediate points in this deformation will \emph{not} be Nash, the structure will ensure the relevant constraints and incentives similarly hold to establish monotonicity along this continuous path.
\end{abstract}

\newpage
\setcounter{page}{1}
\section{Introduction}

A fundamental aim at the intersection of economics and computer science is to understand the efficiency of systems when the dynamics are governed by the actions of strategic and competitive agents. 
A large body of work has established bounds on the \emph{price of anarchy} of various well-studied games, which measures the gap between the social welfare at the worst-case Nash equilibrium with that of the social optimum attainable via coordination by the agents \cite{DBLP:conf/stacs/KoutsoupiasP99}; moreover, recent work has shown that in many cases, price of anarchy bounds often seamlessly extend when agents employ simple no-regret learning algorithms in repeated games \cite{DBLP:conf/stoc/BlumHLR08,DBLP:journals/jacm/Roughgarden15,DBLP:conf/soda/LykourisST16}.

However, a critical assumption of these models is that in each round, agents play an ``independent'' copy of the \emph{same} game; that is, the past sequence of play does not fundamentally change the nature of the game except through their learning. In many settings, this approximation might be innocuous and valid, for instance in modeling routing in the context of the morning rush-hour traffic. In other applications, this assumption clearly does not hold; in modeling packet routing in computer networks, if a packet gets dropped, then this packet will be resent
in future rounds and increase congestion. Therefore, developing a deeper theory of the efficiency of strategic agents in repeated games that retain state is of large practical importance.

Recent work by the authors \cite{DBLP:conf/sigecom/GaitondeT20}, to our knowledge, initiates the study of price-of-anarchy-style bounds when the repeated games carry state in a discrete-time queuing system. In this model, queues must compete for servers to clear their own packets which arrive at heterogeneous rates and servers give priority to serving older packets. This priority scheme induces strong dependencies between rounds, as a queue that fails to clear packets will have priority in future rounds. Their main result is that if queues use no-regret algorithms, then just a factor of 2 higher service rates than what is needed for centralized stability is sufficient to ensure selfish queues also remain stable.

In this paper, we show
that no-regret behavior can nonetheless look short-sighted when considering the long-run dependencies at play in this queuing system, as seen in \Cref{ex:ex1}. While no-regret learning is well-suited to ``independent'' repeated games, strategies satisfying no-regret need not be so well-adapted when adding state. 
We explore the outcomes of repeated games with state in this queuing setting 
when queues are sufficiently \emph{patient} in evaluating the results of their strategies and select fixed randomizations over servers optimally conditioned on the policies of the other queues. We connect the \emph{game-theoretic incentives} of this system with the \emph{asymptotic properties of these random dynamics} by enforcing that each queue aims to minimize their linear rate of growth under these dynamics. Using a novel deformation argument, we show that just $\frac{e}{e-1}< 2$ times extra server capacity suffices to ensure every queue is stable in any Nash equilibrium of this game, strictly better than what is possible under no-regret dynamics.\\

 \subsection{Overview of Results and Techniques}
 In this paper, we consider a discrete-time queuing system where $n$ queues receive packets at heterogeneous rates $\bm{\lambda}=(\lambda_1,\ldots,\lambda_n)\in (0,1)^n$. In each round, any queue that has any remaining packets must select exactly one of $m$ servers with heterogeneous success probabilities $\bm{\mu}=(\mu_1,\ldots,\mu_m)\in [0,1]^{m}$, to attempt to clear a single packet. Each server can only succeed in clearing at most one packet in each round, and most importantly, returns each unprocessed packet to the original queue,  assuming for simplicity that servers have no buffer. We assume that servers choose to serve the \emph{oldest packet} it receives in each round, thus giving priority to queues that have not been able to clear packets efficiently.\footnote{An alternate choice is to assume that servers select packets uniformly at random among packets it receives in a given round to attempt to serve. However, it was shown in \cite{DBLP:conf/sigecom/GaitondeT20} that adding even $\mathrm{poly}(n)$ slack to the central feasibility conditions (given by (\ref{eq:centfeas})) cannot guarantee stability in general, and patience by the queues does not help in this case either.} 
 Queue lengths can grow arbitrarily, so the efficiency we consider is under what conditions on the service and arrival rates can it be ensured that the system remain stable? See  \Cref{section:standardsystem} for precise definitions and the full specification. 
 
 The main result of this paper is to show that with patience by the queues, a factor of $\frac{e}{e-1}\approx 1.58$ extra server capacity \jgedit{over what is needed in the centralized setting} suffices to guarantee the stability of the system, despite selfishness. This is in contrast to \jgedit{the main result of} \cite{DBLP:conf/sigecom/GaitondeT20}, where it was shown that if queues use no-regret learning, the system needs a factor of $2$ extra server capacity. \jgedit{Their analysis derives this factor by leveraging the no-regret property along with concentration and a potential function argument to ensure that older queues use the most efficient servers sufficiently many times on long enough windows, or else some such queue must experience regret; the factor $2$ arises in a natural way from these considerations, and can be shown to be essentially unimprovable just assuming the no-regret property}. \jgedit{While their result provides a constant factor bicriteria result}, \Cref{ex:ex1} illustrates the value of patience in our model. In this example, there is a \emph{unique} fixed no-regret policy for each agent, 
given the behavior of the other agent, 
\emph{but both agents would have been better off long-term had one even slightly deviated to an inferior server and the other stayed the same}. In the classical setting with ``independent'' repeated games, this cannot occur. 
\begin{example}
\label{ex:ex1}
Suppose there are two queues with arrival rates $\lambda_1=\lambda_2=.51$ and two servers with $\mu_1=1$ and $\mu_2=.49$. In this case, each queue receives a new packet roughly once every two periods on average. One can show that if both servers send to the top rate server every period, the sequence of play will satisfy the no-regret property, as they roughly split the top server equally. Each server then roughly clears at a rate of $1/2$ which is strictly better than deviating to the lower server, but this system will not be stable; packets arrive at a total rate of $1.02$ while are cleared at a rate of $1$ in expectation, leading to linear growth. However, if one queue commits to slightly deviate towards the inferior server, one can show that the resulting system will be stable. 
\end{example}

No-regret algorithms like EXP3 empirically seem to exhibit similar behavior: the queue sending to the second server does have regret, despite doing better in the long term. However, if queues are viewed as sufficiently patient and could choose \emph{fixed} strategies that optimize long-run stability by sometimes sending to suboptimal servers, they could experience better long-run stability. In the above example, this happens because if one queue starts mixing slightly on the slow server, then the other queue clears faster as she faces less competition on the high rate server. In turn, although the first queue sends to the fast server less often, she will tend to have priority more often when she does do so, so the long-run effectiveness of this server actually increases for this agent. These effects cause both servers to simultaneously clear, as we will see.
Though such deviations look suboptimal, if the deviating queue is sufficiently patient, she will see that this change yields sharp asymptotic benefits.

\subsubsection{Patient Selfishness}
While vanilla no-regret learning seemed to be the ``correct'' notion to study repeated games without carryover, examples like this suggest that perhaps this is not so when \jgedit{outcomes from previous rounds of the game directly affect the nature of the games in future iterations}. 
In this work, we formulate and study a \emph{patient} version of this model where queues optimize over the long term effect of \emph{fixed} randomized strategies over servers. Each queue $i$'s choice over servers 
can be described by a fixed vector $p_i\in \Delta^{m-1}$, where $\Delta^{m-1}$ is the probability simplex over the $m$ servers. We study this as a traditional game and consider the resulting Nash equilibria when each queue aims to choose their fixed randomization to minimize their \emph{long-run aging rate} (equivalently, their long-run growth rate, see \Cref{section:standardsystem}) conditioned on the others. Our main interest is understanding under what conditions on the service rates and arrival rates will the system remain stable in \emph{every Nash equilibrium}? To study this, we face significant probabilistic and game-theoretic challenges: probabilistic challenges to establish the close form of asymptotic growth rates for given strategies, and game-theoretic challenges bounding the quality of Nash equilibria of this game. The techniques we use will prove useful in addressing these conceptually distinct difficulties, and thereby unifying the game-theoretic and probabilistic properties of our systems.

\begin{enumerate}

\item\textbf{Asymptotic Growth Rates:} In the above discussion, we stated that each queue aims to select a fixed randomization over servers to minimize their long-run aging rate in this system given the randomizations of the others. Our first task, to do any game-theoretic analysis of this system, is to analyze the long-run properties of this random process of queue ages (which typically will not even be recurrent). A major technical component of our work is showing that for any fixed, independent randomizations $\mathbf{p}$ by the queues over servers, not only do these long-run growth rates exist almost surely, they are \emph{deterministic} and can be explicitly computed as a function of the strategies.
    
    \begin{thm}[\Cref{thm:gamemain}, informal]
    There exists an explicit, continuous function $r:(\Delta^{m-1})^n\to \mathbb{R}^n_{\geq 0}$ such that, if queues independently randomize over servers according to $\mathbf{p}\in (\Delta^{m-1})^n$, then the (random) long-run growth rate of each queue $i$ is $r_{i}(\mathbf{p})$ almost surely.
    \end{thm}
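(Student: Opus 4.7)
The approach is to give an explicit recursive formula for $r(\mathbf{p})$ via set-function optimization, establish almost-sure concentration of the empirical growth rates to that formula, and finally verify continuity. The central objects are the modular arrival function $A(S) = \sum_{i \in S} \lambda_i$ and the submodular coverage function $C_{\mathbf{p}}(S) = \sum_{j=1}^m \mu_j \bigl(1 - \prod_{i \in S}(1 - p_{ij})\bigr)$ on $S \subseteq [n]$, which together determine the candidate per-queue growth rate $g_{\mathbf{p}}(S) = (A(S) - C_{\mathbf{p}}(S))/|S|$. The key structural inequality is that server $j$ can clear a packet from $S$ only if some queue in $S$ targets it, so $C_{\mathbf{p}}(S)$ upper-bounds the aggregate clearing rate available to $S$. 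I let $S_1^\star(\mathbf{p})$ denote the \emph{maximal} maximizer of $g_{\mathbf{p}}$, whose existence I would derive from the closure-under-union property claimed in the abstract (itself a consequence of the submodularity of $C_{\mathbf{p}}$ plus modularity of $A$). The formula for $r$ then asserts that all queues in $S_1^\star$ grow at rate $\max\{g_{\mathbf{p}}(S_1^\star), 0\}$; I recurse on $[n] \setminus S_1^\star$ with residual server rates $\mu'_j = \mu_j \prod_{i \in S_1^\star}(1 - p_{ij})$ and iterate until the top tier has nonpositive $g$.

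For concentration, the lower bound $r_i(\mathbf{p}) \geq g_{\mathbf{p}}(S_1^\star)$ for $i \in S_1^\star$ is obtained by applying Azuma-Hoeffding to $\sum_{i \in S_1^\star} Q_i(t)$: arrivals contribute a deterministic expected rate $A(S_1^\star)$, while the service captured by $S_1^\star$ in a round is bounded by the number of distinct servers targeted by a member of $S_1^\star$, which has expectation exactly $C_{\mathbf{p}}(S_1^\star)$. Bounded-difference concentration then yields almost-sure linear growth at rate at least $A(S_1^\star) - C_{\mathbf{p}}(S_1^\star)$, and by pigeonhole at least one queue in $S_1^\star$ attains this average. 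The matching upper bound is the crux of the proof. I would couple the true process to an idealized one in which priority between $S_1^\star$ and its complement is always resolved in favor of $S_1^\star$, then argue that once $\sum_{i \in S_1^\star} Q_i$ grows linearly, the age gap to $[n] \setminus S_1^\star$ also does, so the coupling is exact after an almost surely finite transient. Equality of growth rates across all members of $S_1^\star$ is forced by maximality: if some proper subset $T \subsetneq S_1^\star$ grew strictly faster, then subtracting the modular and submodular contributions of $S_1^\star \setminus T$ would yield $g_{\mathbf{p}}(T) > g_{\mathbf{p}}(S_1^\star)$, contradicting optimality, while closure under non-disjoint intersection rules out parallel sub-tiers within $S_1^\star$. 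For the queues outside $S_1^\star$, the strong law of large numbers yields that server $j$ is free of $S_1^\star$ traffic with long-run frequency $\prod_{i \in S_1^\star}(1 - p_{ij})$, validating the recursion on $[n]\setminus S_1^\star$ with capacities $\mu'_j$.

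For continuity, each $g_{\mathbf{p}}(S)$ is a polynomial in $\mathbf{p}$, so $\max_S g_{\mathbf{p}}(S)$ is continuous as a maximum of finitely many continuous functions; although the combinatorial identity of $S_1^\star$ can jump at ties, closure under union ensures that the union of competing maximizers is itself a maximizer, which produces the same residual rates $\mu'_j$, so continuity propagates through the recursion. The main obstacle will be the upper bound in the previous step: the priority rule creates long-range dependencies, and justifying that these dependencies decouple cleanly across growth tiers---so that the coverage bound $C_{\mathbf{p}}(S_1^\star)$ is actually \emph{attained} rather than merely an upper bound---will likely require a renewal-style argument showing that the age gap between tiers grows unboundedly and that excursions against this trend are almost surely summable.
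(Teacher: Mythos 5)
Your overall architecture---a recursive tier decomposition built from the submodular coverage function and the modular arrival function, closure of optimizers under union, a concentration lower bound on an aggregate plus a priority-coupling upper bound, Borel--Cantelli to handle excursions, and continuity via maxima of finitely many polynomials with ties absorbed by taking unions---is the same as the paper's. But your explicit formula is wrong, and the error propagates through the rest of the plan. You normalize by cardinality, $g_{\mathbf{p}}(S)=(A(S)-C_{\mathbf{p}}(S))/|S|$, whereas the correct normalization is by the arrival mass: the paper's tight sets minimize $f(S)=C_{\mathbf{p}}(S)/\lambda(S)$ with $\lambda(S)=\sum_{i\in S}\lambda_i$, every queue in the tier then has \emph{age} growing at the common rate $1-f(S)$, and queue $i$'s backlog grows at $\lambda_i(1-f(S))$, which differs across members of the tier. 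Concretely, take two queues with $\lambda_1=0.95$, $\lambda_2=0.15$ and a single server with $\mu=1$: your objective is maximized by $\{1,2\}$ with value $0.05$, so you would assign both queues growth rate $0.05$, but the true backlog growth rates are $0.95/11\approx 0.086$ and $0.15/11\approx 0.014$ (both ages grow at $1/11$). This breaks your ``pigeonhole plus maximality forces equality'' step: maximality only rules out a subset with a higher \emph{average}, and here the individual rates genuinely differ from the average, so no contradiction is available. It also breaks the concentration step as stated: the paper must track the $\lambda_i$-weighted sum $\sum_{i\in S_1}\lambda_i T^i_t$ precisely so that the sandwich $\min_i T^i_t\le \sum_i\lambda_i T^i_t/\lambda(S_1)\le\max_i T^i_t$ collapses against the matching upper bound on the maximum; your unweighted sum of backlogs does not collapse this way.

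The second consequence is that your $S_1^\star$ need not be the set of fastest-\emph{aging} queues, and it is age---not backlog---that determines priority and hence justifies the recursion. With $\lambda_1=0.9,\lambda_2=0.1$ and coverages $C(\{1\})=0.45$, $C(\{2\})=0.05$, $C(\{1,2\})=0.5$, your objective selects $S_1^\star=\{1\}$, yet both queues age at rate $0.5$, so no asymptotic age gap opens between $\{1\}$ and its complement; the coupling ``resolve all priority in favor of $S_1^\star$'' and the residual capacities $\mu_j'=\mu_j\prod_{i\in S_1^\star}(1-p_{ij})$ are then unjustified (and would give wrong answers if the two queues shared servers). The fix is exactly the reweighting: work with $f=C/\lambda$, use the mediant identity $f(S\cup S')=f(S)\oplus f(S'\mid S)$ to get closure and strict monotonicity of rates across tiers (which is what guarantees the age gaps between consecutive tiers grow linearly and makes the priority coupling valid after a finite transient), and carry out your concentration argument on the $\lambda_i$-weighted ages. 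With that substitution your outline matches the paper's proof; without it, the stated formula, the tier identification, and the within-tier equality argument all fail for heterogeneous arrival rates.
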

    
\noindent   To prove this result, we provide an alternate, \emph{algorithmic} description of the long-run rates in \Cref{sec:algdescription}. Working just with this alternative definition, we show that the queues partition into 
groups such that all queues in a group age asymptotically at the same rate. We then return to the task of establishing that the true, long-run asymptotic aging rates of the queues for any choice of strategies coincides with the output of the algorithm. To do this, we repeatedly appeal to concentration bounds to show 
that each subset in the partition grows at the desired rate; as the priority structure changes rapidly round-to-round, we do so via a delicate argument that accounts for these changes. We then carefully apply the Borel-Cantelli lemma to establish the result. After more formally defining the various parameters of our queuing process, we prove 
this result in \Cref{section:convergence}, but defer some 
of the highly nontrivial and technical details to \cref{sec:appendixconv}.
    
\item\textbf{Game-Theoretic Properties: Equilibria and Price of Anarchy.} Once we show that these limits almost surely are equal to an explicit, 
deterministic function of $\mathbf{p}$, it might still not be the case that a Nash equilibrium exists in the induced game. However, we show that the cost function exhibits significant analytic properties which lets us reason about the structure of the sets that arise in the partition for any fixed strategy profile. Concretely, we show that each level set of the cost function corresponds to the minimizing subset of 
the ratio of a submodular and modular set function; this significant structure allows us to show that the subsets that minimize this ratio are closed under union and non-empty intersection. In particular, these considerations will be enough to show continuity as a function of the strategies (\Cref{thm:ratecontinuity}) 
which along with other properties will enable us to show that an equilibrium exists using Kakutani's Theorem
(\Cref{thm:nashexists}). While we show that the cost function of our game has significant structure, the correspondence between actions (randomizations) and costs is quite nonlinear, imposing new technical challenges.

Recall that our goal is to ensure stability in any Nash equilibrium, assuming some relationship on the service rates to the arrival rates. Our main result shows
    that the correct constant of system slack is $\frac{e}{e-1}\approx 1.58$,  beating the best achievable constant of $2$ in the no-regret setting of \cite{DBLP:conf/sigecom/GaitondeT20}: 
    \begin{thm}[Main, \Cref{thm:ebound}, informal]
    \label{thm:intromain}
    If the service capacity is large enough so that the system would remain feasible when centrally managed even if capacities are scaled down by $\frac{e}{e-1}$, then in every equilibrium of this game, all queues are stable.
    \end{thm}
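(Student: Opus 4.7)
The plan is to argue by contradiction. Suppose $\mathbf{p}^*$ is a Nash equilibrium in which some queue has strictly positive growth rate. By \Cref{thm:gamemain} together with the closure of maximum-growth-rate subsets under union, there is an inclusion-maximal set $S \subseteq [n]$ of queues all sharing the maximum system growth rate $\rho > 0$. Since queues in $S$ are the oldest and hence have priority at every server they share with queues outside $S$, this common rate admits the explicit form
\[
\rho \;=\; \frac{\lambda(S) - f(S,\mathbf{p}^*)}{|S|}, \qquad f(S,\mathbf{p}) \;:=\; \sum_{j=1}^{m}\mu_j\bigl(1 - \prod_{i \in S}(1-p_{ij})\bigr),
\]
where $f(S,\cdot)$ is the coverage-style submodular throughput. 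The goal is to prove $\rho \le 0$, contradicting instability. Both the form of $f$ and the constant $\tfrac{e}{e-1}$ strongly suggest that the argument must lower-bound $f(S,\mathbf{p}^*)$ by a factor of $1-1/e$ of the throughput $S$ could attain under an ideal centrally-planned assignment, and then invoke the scaled feasibility hypothesis.

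Directly applying the coverage inequality to $\mathbf{p}^*$ fails because a Nash profile need not resemble a product of symmetric spread distributions. I would therefore pass through a continuous deformation: fix a fractional centrally-feasible assignment $(q_{ij})$ for the $\tfrac{e-1}{e}$-scaled instance provided by the hypothesis, and define a path $\mathbf{p}(t)$, $t\in[0,1]$, that moves the strategies of queues in $S$ from $p^*_i$ towards $q_i$ while queues outside $S$ are frozen. At $t=0$ the Nash best-response inequalities yield, for each $i\in S$, that $\nabla_{p_i}f(S,\mathbf{p}^*)\cdot\dot p_i(0)\le 0$ (since unilaterally lowering $r_i$ is impossible at a Nash profile); summing over $i\in S$ gives $f'(S,\mathbf{p}(0))\le 0$, i.e.\ $\rho(\mathbf{p}(t))$ is instantaneously nondecreasing along the path. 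The substantive step is to upgrade this infinitesimal fact into the global monotonicity $\rho(\mathbf{p}(t)) \ge \rho(\mathbf{p}^*)$ for every $t\in[0,1]$, or equivalently $f(S,\mathbf{p}^*) \ge f(S,\mathbf{p}(t))$, even though $\mathbf{p}(t)$ is not Nash for $t>0$.

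At the endpoint $t=1$ the queues in $S$ play the product distributions $q_i$, and the coverage inequality $1-\prod_{i\in S}(1-q_{ij}) \ge (1-1/e)\min\{1,\sum_{i\in S} q_{ij}\}$ applied server-by-server, combined with the $\tfrac{e-1}{e}$-scaled feasibility of $(q_{ij})$, gives
\[
f(S,\mathbf{p}(1)) \;\ge\; \bigl(1-\tfrac{1}{e}\bigr)\cdot T^\star(S) \;\ge\; \lambda(S),
\]
where $T^\star(S)$ is the maximum throughput $S$ could centrally attain at the original (unscaled) capacities. Combined with the global monotonicity $f(S,\mathbf{p}^*) \ge f(S,\mathbf{p}(1))$ this forces $\lambda(S) \le f(S,\mathbf{p}^*)$, hence $\rho \le 0$, the desired contradiction.

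The step I expect to be the main obstacle is the global monotonicity of $f(S,\mathbf{p}(t))$, since Nash controls only the first derivative at $t=0$ and a generic convex combination loses this control immediately. Propagating the sign of the derivative throughout $[0,1]$ is what the abstract advertises as the novel deformation argument, and I would expect it to rest on the ratio-of-submodular-and-modular representation of the rate function together with the union and non-disjoint-intersection closure of maximum-rate subsets. These structural facts should ensure that (i) the set $S$ remains a legitimate maximum-rate subset at every intermediate profile, so that the displayed formula for $\rho$ continues to be the correct one, and (ii) the one-sided Nash incentive inequality at $t=0$ admits an extension that persists at every $t\in[0,1]$ rather than only infinitesimally. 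Establishing (i) and (ii) is the delicate work; once they are in place, the $(1-1/e)$ coverage bound and the scaled feasibility hypothesis plug in essentially mechanically.
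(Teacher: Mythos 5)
Your high-level plan matches the paper's: deform the Nash profile toward a balanced assignment, apply a $(1-1/e)$ coverage bound at the endpoint, and invoke the scaled feasibility condition. But the two steps you flag as "the delicate work" are exactly where the proof lives, and the way you set them up contains a genuine gap. Your starting inequality $\nabla_{p_i}f(S,\mathbf{p}^*)\cdot\dot p_i(0)\le 0$ for every $i\in S$ is not a consequence of Nash. A queue's rate is not the single function $1-\alpha(S_1)/\lambda(S_1)$ of its own strategy: $S_1$ is the union of possibly many proper tight subsets, and the Nash condition only guarantees that for each deviation direction there \emph{exists some} tight subset $T\ni i$ whose $f$-value does not increase --- which subset certifies this can differ across deviations and across queues (see the paper's four-queue, three-server example). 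So a queue may well have a direction in which $f(S_1)$ strictly increases, blocked only because a smaller tight subset containing it gets worse. Bridging from "some tight subset" to "the maximal tight subset $[k]$" is precisely what the paper's level-partition result (\Cref{prop:levels}) provides, and even then only directly for queues at the top level; lower-level queues require the deformation to be staged level by level. Without this, your claim that summing the per-queue inequalities gives $f'(S,\mathbf{p}(0))\le 0$ is unsupported. (Separately, your formula $\rho=(\lambda(S)-f(S,\mathbf{p}^*))/|S|$ normalizes by $|S|$ instead of $\lambda(S)$; the sign is unaffected, but the correct rate is $1-\alpha(S)/\lambda(S)$.)

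The second gap is the mechanism for global monotonicity. You propose interpolating linearly toward an arbitrary feasible product profile $q$ and hope that $S$ remains a maximum-rate subset throughout so that the Nash derivative condition "persists." The paper does neither: the intermediate profiles are not Nash, $S_1$ of the intermediate profile is irrelevant, and the function tracked is simply $f([k]\vert\cdot)$. What makes the derivative condition transportable is the specific combinatorial structure of the deformation --- mass moves only from \emph{oversaturated} servers ($\sum_{i\in[k]}p_{ij}>x_j$) to \emph{undersaturated} ones, processed from the top level of the partition downward. Because of this, at every intermediate profile one has $p'_{rj}\le p_{rj}$ at the source server and $p'_{rj'}\ge p_{rj'}$ at the target, so the comparison $\mu_j\prod_r(1-p_{rj})$ versus $\mu_{j'}\prod_r(1-p_{rj'})$ that governs the sign of the directional derivative of $f([k])$ is only \emph{more} favorable than it was at Nash; a profitable move at the intermediate point would back-propagate to a deviation improving $f(T)$ for \emph{every} tight $T\ni i$ at Nash, contradicting equilibrium via \Cref{prop:levels}. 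A generic convex combination toward $q$ destroys this one-sided control, so your point (ii) cannot be established the way you suggest. The endpoint computation (column sums equal to $x_j$, AM--GM symmetrization, $1-(1-1/k)^k>1-1/e$) is essentially as in the paper and is fine once the deformation is justified.
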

\noindent    This result is tight---in a symmetric system where $n=m$, each queue has the same arrival rate, each server has the same success rate, and each queue chooses to uniformly randomize over servers, a simple balls-in-bins analysis yields this constant as $m,n\to \infty$.

    To prove this theorem, we provide a novel argument that establishes the result by continuously deforming any Nash profile towards a carefully constructed strategy profile, while only monotonically decreasing the rate at which the top group clears. We then analyze the resulting profile to give a lower bound on the value of the Nash profile. The key difficulty is that the relevant incentives for each queue correspond to possibly \emph{many different} subsets of queues that have maximal aging rate, subsets that collectively clear packets at the lowest rate relative to the rate they receive them. These constraints are difficult to directly compare; different choices of deviations in the strategy by a queue at any Nash equilibrium may violate distinct constraints, making it unclear how to argue about the quality of these equilibria. In particular, there does not seem to be a direct analogue of the \emph{Nash indifference principle} in finite-action games where utilities are affine in the randomizations of each agent (recall \Cref{ex:ex1}, where the queue moving to the lesser server will still appear to prefer the better server). To overcome these difficulties, we show that one can significantly reduce the number of incentive constraints one must consider for each queue (\Cref{prop:levels}). We can carefully perform our deformation of the collective strategy vector of the queues according to the structure of these sparsified constraints, and show that our deformation only hurts the quality of the Nash solution to provide a valid lower bound. We elaborate on these difficulties and prove \Cref{thm:intromain} in \Cref{section:poa}.
    
    In contrast, almost every known price-of-anarchy-style result can be viewed via the very general \emph{smoothness} framework of Roughgarden \cite{DBLP:journals/jacm/Roughgarden15}, which connects an equilibrium with the social optimum via discrete changes in the strategy profile. Our argument instead relies on a careful equilibrium analysis that smoothly interpolates between the equilibrium and a ``good'' profile which is easy to explicitly bound; however, during these deformations, \emph{these intermediate strategy profiles will not be equilibria}. To prove the monotonicity of this deformation, we connect the incentives at Nash to the structure of the subset of maximizers of the long-run rate function, and show that the Nash constraints still hold in the directions we deform.
    \end{enumerate}
    
   The organization of this paper is as follows: after formalizing the strategic queuing model in \Cref{section:standardsystem} and the relevant game in \Cref{section:patient}, we describe how to compute the resulting aging rates for each fixed strategy $\mathbf{p}$ by the queues in \Cref{sec:algdescription}. Roughly speaking, given any fixed $\mathbf{p}$, the set of queues partitions into subsets that all age at equal asymptotic rates determined by fractional bottlenecks in the system. 
   Using this algorithmic description, we will then turn to establishing analytic properties in  \Cref{section:basicproperties}; from these properties, we will be able to deduce the existence of pure equilibria in this queuing game in  \Cref{section:equilibria}. We then provide the proof that the algorithmic description of long-run queuing rates given in the previous sections indeed holds 
   with probability one in \Cref{section:convergence}; as the details are rather involved, however, we defer some of the more technical auxilliary claims to \Cref{sec:appendixconv}. Finally, we establish our tight bound on the price of anarchy of $\frac{e}{e-1}$ in \Cref{section:poa}.

\subsection{Related Work}
Our work falls in a long tradition of establishing \emph{price of anarchy} bounds for various games \cite{DBLP:conf/stacs/KoutsoupiasP99}, which roughly quantifies the difference in the performance of a competitive and selfish system with the social optimum that can be achieved through explicit coordination. As mentioned, our work is most closely related to the prior work in \cite{DBLP:conf/sigecom/GaitondeT20}. While our stability objective differs from usual objectives in this literature, our results qualitatively also resemble the bicriteria result of Roughgarden and Tardos \cite{DBLP:journals/jacm/RoughgardenT02}, which shows that in nonatomic routing, 
the cost incurred at any Nash flow is at most the optimal cost when twice the flow is routed. 
Unlike most such bounds which follow Roughgarden's \emph{smoothness} framework \cite{DBLP:journals/jacm/Roughgarden15}, our argument is an \emph{equilibrium analysis} that is more similar 
to Johari and Tsitsiklis \cite{DBLP:journals/mor/JohariT04}, who establish equilibrium conditions and modify their problem while maintaining the equilibrium condition to arrive at a version that is easy to analyze. In our argument, we also modify the equilibrium itself 
towards a more tractable solution, but 
the intermediate points in this deformation will \emph{not} be Nash, requiring additional arguments. 

While the goal of our work is in establishing price-of-anarchy-style bounds in dependent systems, this necessitates a careful understanding of the analytic properties of our random queuing dynamics. In the probability literature, the long-run aging rates that form the \emph{incentives} are known as (linear) escape rates of random walks \cite{MR3616205}. A rich theory has emerged to study this in special networks, but it is unclear how to apply these techniques in our setting. Our work relies on careful, self-contained estimates leveraging concentration, coupling, and supermartingale arguments.

Our queuing setting bears resemblance to \emph{stochastic games}, a generalization of Markov decision processes where multiple players competitively and jointly control the actions and transitions (see, for instance \cite{filar2012competitive,neyman2003stochastic}). However, our work differs from this line in multiple ways: in our model, queues are unaware of the system state and parameters, and most importantly, we are interested in explicit bounds to derive price-of-anarchy-style results for stability. 

\section{Preliminaries}
\textbf{Notation.} We use the following \textbf{fractional sum} operation $\oplus: \mathbb{R}^2\times \mathbb{R}^2_{\geq 0}\to \mathbb{R}_{\geq 0}$:
\begin{equation*}
    \frac{a}{b}\oplus \frac{c}{d}\triangleq \frac{a+c}{b+d}.
\end{equation*}
We will later repeatedly use the following simple fact:
\begin{fact}
\label{fact:dansineq}
For all $a_1,\ldots,a_n\geq 0$ and $b_1,\ldots,b_n> 0$,
\begin{equation*}
    \min_{i\in [n]}\frac{a_i}{b_i}\leq \frac{a_1}{b_1}\oplus \ldots\oplus \frac{a_n}{b_n}=\frac{\sum_{i=1}^n a_i}{\sum_{i=1}^n b_i}\leq \max_{i\in [n]}\frac{a_i}{b_i}.
\end{equation*}
Moreover, equality holds in either of the inequalities if and only if both inequalities are tight.
\end{fact}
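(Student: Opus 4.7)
The plan is to prove the equality in the middle by induction on $n$, and then to prove the two inequalities together by a direct algebraic comparison.

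First I would observe that the middle equality $\frac{a_1}{b_1}\oplus\cdots\oplus\frac{a_n}{b_n}=\frac{\sum_i a_i}{\sum_i b_i}$ follows immediately by induction on $n$ from the definition $\frac{a}{b}\oplus\frac{c}{d}=\frac{a+c}{b+d}$; the base case $n=2$ is the definition, and the inductive step applies $\oplus$ to $\frac{\sum_{i<n}a_i}{\sum_{i<n} b_i}$ and $\frac{a_n}{b_n}$. Note that we need $\sum b_i>0$, which follows from the assumption that each $b_i>0$.

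Next, for the two inequalities, set $M=\max_i \frac{a_i}{b_i}$ and $m=\min_i \frac{a_i}{b_i}$. Since each $b_i>0$, the inequality $\frac{a_i}{b_i}\leq M$ is equivalent to $M b_i - a_i\geq 0$, and summing yields $M\sum_i b_i - \sum_i a_i\geq 0$; dividing by $\sum_i b_i>0$ gives the upper bound. The lower bound follows identically using $a_i - m b_i\geq 0$.

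For the equality condition, suppose $\frac{\sum_i a_i}{\sum_i b_i}=M$. Then $\sum_i(Mb_i-a_i)=0$, and since each summand is non-negative, each term must vanish; that is, $a_i=Mb_i$, so $\frac{a_i}{b_i}=M$ for every $i$. But then $m=M$ as well, so both inequalities are simultaneously tight. The same argument works for the lower bound by symmetry. Conversely, if all ratios $a_i/b_i$ are equal to a common value $c$, then trivially the middle expression also equals $c$, so both inequalities become equalities. There is no real obstacle here; the statement is an elementary mediant inequality and the proof is just bookkeeping, with the only mild subtlety being to ensure we exploit $b_i>0$ when clearing denominators.
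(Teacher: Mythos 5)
Your proof is correct. The paper states this as an elementary fact without providing any proof, and your argument (the standard mediant-inequality computation: clear denominators using $b_i>0$, sum the nonnegative terms $Mb_i-a_i$, and note that a vanishing sum of nonnegative terms forces every term to vanish, which handles the equality case) is exactly the bookkeeping one would supply.
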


Given a $n\cdot m$-dimensional vector $\mathbf{p}=(p_1,\ldots,p_n)$, where $p_i\in \mathbb{R}^m$, we will write $p_{ij}$ for the $j$th element of $p_i$. Given a vector $\mathbf{x}\in \mathbb{R}^n$ and a subset $I\subseteq [n]$, we write $\mathbf{x}_I$ to denote the vector restricted to the components in $I$. As is standard, we say ``almost sure" to mean with probability one. Given a set $S$, we will write $\mathcal{P}(S)$ to denote the power set. We write $\text{Bern}(\lambda)$ to denote a Bernoulli random variable that is $1$ with probability $\lambda$ and $0$ with probability $1-\lambda$. We write $\text{Geom}(\lambda)$ to denote a geometric random variable with parameter $\lambda$.\\

\subsection{Queuing Model}
\label{section:standardsystem}
 We study the strategic queuing model introduced by the authors \cite{DBLP:conf/sigecom/GaitondeT20}, which is a competitive version of a queuing model considered by Krishnasamy et al \cite{DBLP:conf/nips/KrishnasamySJS16}. As described above, queues receive packets with some fixed probability, and must select a server with heterogeneous success rates to send their oldest packet to. Each server chooses only the \emph{oldest} packet it receives to attempt to clear, and returns each unprocessed packet to the original queue (as well as the packet it attempted to clear if it fails). Each queue receives only bandit feedback of whether their packet was cleared or not. In this work, we instead work with an equivalent, \emph{deferred-decisions} version of this model that keeps track only of the oldest packet at each queue:
\begin{enumerate}
    \item Time progresses in discrete steps $t=0,1,\ldots$. At each time $t$, $\widetilde{T}^i_t$ is the \textbf{timestamp} of the oldest unprocessed packet of queue $i$ at time $t$.
    $T^i_t=\max\{0,t-\widetilde{T}^i_t\}$ is the \textbf{age} of the current oldest packet of queue $i$ at time step $t$. That is, $T^i_t$ measures how old the current oldest unprocessed packet for queue $i$ is.\footnote{Note that while $T^i_t\geq 0$ by definition, it is possible that $\widetilde{T}^i_t>t$. The interpretation of this is that the queue has cleared all of her packets at time $t$ and will receive her next one at time $t=\widetilde{T}^i_t$, or equivalently, in $\widetilde{T}^i_t-t$ steps in the future from the perspective at time $t$.}
    \item Queue $i$ can send a packet to any server $j$ in this time step if $t-\widetilde{T}^i_t\geq 0$. Each server $j$ attempts to serve only the oldest packet it receives, and succeeds with probability $\mu_j$. If queue $i$'s packet is successfully served, set $\widetilde{T}^{i}_{t+1}= \widetilde{T}^i_t+X^i$, 
    where $X^i\sim \text{Geom}(\lambda_i)$ is independent of all past events, and otherwise $\widetilde{T}^{i}_{t+1}=\widetilde{T}^{i}_{t}$.
\end{enumerate}
See \cite{DBLP:conf/sigecom/GaitondeT20} for a more formal explanation of the equivalence: it follows because the gap between a sequence of $\text{Bern}(\lambda)$ trials follows a $\text{Geom}(\lambda)$ distribution.
\begin{defn}
\label{defn:dualstable}
The queuing system under some dynamics is \textbf{stable} if for each $i\in [n]$, $T^i_t/t\to 0$ almost surely. The  system is \textbf{strongly stable} if, for any $r\geq 0$ and any $i\in [n]$, the random process $\{T_t^i\}^{\infty}_{t=0}$ satisfies $\mathbb{E}[(T^i_t)^r]\leq C_r$ where $C_r$ is a fixed constant depending only on $r$, not on $t$.
\end{defn}

\noindent It is shown in \cite{DBLP:conf/sigecom/GaitondeT20} that strong stability is indeed stronger than stability, for here it implies that almost surely $T^i_t =o(t^c)$ for \emph{every} $c>0$. 

To set a baseline measure of when stability is possible under any coordinated policy, it is shown in \cite{DBLP:conf/sigecom/GaitondeT20} that a simple criteria relating $\bm{\lambda}$ and $\bm{\mu}$ is both necessary and sufficient.

\begin{thm}[Theorem 2.1 of \cite{DBLP:conf/sigecom/GaitondeT20}]
\label{thm:feasibility}
Suppose that $1>\lambda_1\geq \ldots\geq \lambda_n>0$ and $1\geq \mu_1\geq\ldots\geq \mu_m\geq 0$.\footnote{The assumption that $\lambda<1$ is merely to avoid simple edge cases that can be separately handled easily.} 
Then the above queuing system is strongly stable for some centralized (coordinated) scheduling policy if and only if for all $1\leq k\leq n$,
\begin{equation}
\label{eq:centfeas}
    \sum_{j=1}^k \mu_j > \sum_{i=1}^k \lambda_i.
\end{equation}
When (\ref{eq:centfeas}) holds, we say that the queuing system is \textbf{(centrally) feasible}.
\end{thm}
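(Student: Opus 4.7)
The plan is to prove the two directions separately. The necessity direction is a drift/counting argument on the top-$k$ queues: any centralized policy can clear these queues no faster than $\sum_{j \leq k} \mu_j$ per round. The sufficiency direction constructs an explicit randomized schedule from a fractional matching guaranteed by (\ref{eq:centfeas}) and proves strong stability by a standard Lyapunov argument.

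For necessity, suppose (\ref{eq:centfeas}) fails for some $k$, so $\sum_{j=1}^k \mu_j \leq \sum_{i=1}^k \lambda_i$. In any round, the top $k$ queues collectively send at most $k$ packets, so any policy can clear at most $\sum_{j=1}^k \mu_j$ of them in expectation, with that bound achieved only by directing each to a distinct one of the top $k$ servers. Meanwhile these queues receive $\sum_{i=1}^k \lambda_i$ arrivals per round in expectation. Working with the equivalent (non-deferred) formulation in which $Q^i_t$ denotes the number of unprocessed packets at queue $i$, the aggregate $S_t = \sum_{i \leq k} Q^i_t$ therefore has nonnegative expected drift: it grows linearly in the strict case and, as a reflected zero-drift random walk, still grows like $\Theta(\sqrt{t})$ in expectation in the equality case. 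Since $T^i_t$ is at least a constant multiple of $Q^i_t/\lambda_i$ in expectation (the oldest packet has age at least the queue's depletion time), some $\mathbb{E}[T^i_t]$ must blow up, contradicting strong stability in the sense of \Cref{defn:dualstable}.

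For sufficiency, I would first use (\ref{eq:centfeas}) to exhibit a fractional matching $x \in \mathbb{R}_{\geq 0}^{n \times m}$ satisfying $\sum_i x_{ij} \leq 1$, $\sum_j x_{ij} \leq 1$, and $\sum_j x_{ij} \mu_j \geq (1+\epsilon) \lambda_i$ for every $i$ and some fixed $\epsilon > 0$. Existence follows by LP duality on a transportation-type program (the sorted cumulative inequalities in (\ref{eq:centfeas}) being exactly the conditions that rule out a dual separator), or equivalently by a direct water-filling construction that shifts surplus capacity from high-$\mu$ servers down to queues whose diagonal assignment would be insufficient. Given $x$, the Birkhoff–von Neumann theorem decomposes it into a convex combination of integral matchings, and the centralized policy samples one matching each round according to this decomposition, then routes each queue's oldest packet to its matched server. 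Under this policy each non-empty queue $i$ clears a packet with probability $\sum_j x_{ij} \mu_j \geq (1+\epsilon) \lambda_i$, so $Q^i_t$ has uniform negative drift of at least $\epsilon \lambda_i$ whenever $Q^i_t \geq 1$. A Foster--Lyapunov / Hajek-style bound using an exponential Lyapunov function such as $V(\mathbf{Q}) = \sum_i e^{c Q^i_t}$ for small $c>0$ then yields uniformly bounded moments of all orders for $Q^i_t$, and by $T^i_t \lesssim Q^i_t/\lambda_i$ (up to bounded geometric fluctuations) the same follows for $T^i_t$, giving strong stability.

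The main obstacle is the LP-feasibility step for sufficiency: verifying that (\ref{eq:centfeas}) is precisely the right condition for a fractional matching with strict slack to exist. The combinatorial picture that excess capacity at the top servers cascades down to cover deficits at lower queues is intuitive, but formalizing it—whether via LP duality or an inductive water-filling—requires care because of the three-way interplay between the queue-capacity constraint $\sum_j x_{ij} \leq 1$, the server-capacity constraint $\sum_i x_{ij} \leq 1$, and the rate lower bound. Once this step is established, the Birkhoff--von Neumann rounding and the standard Lyapunov drift analysis are essentially routine.
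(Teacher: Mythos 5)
This paper does not actually prove \Cref{thm:feasibility}: it is cited verbatim from \cite{DBLP:conf/sigecom/GaitondeT20} as a known result, so there is no in-text proof to compare against. Taken on its own, your plan is the natural one and is essentially correct in structure: necessity via a nonnegative-drift counting argument on the top-$k$ queues, sufficiency via a fractional matching with slack $\epsilon>0$, rounded by Birkhoff--von Neumann, followed by a Lyapunov-type drift bound.

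A few places need tightening. First, your necessity step passes silently from unboundedness of $\mathbb{E}[Q^i_t]$ to unboundedness of $\mathbb{E}[(T^i_t)^r]$; these are related (via the geometric interarrival gaps) but the paper's definition of strong stability is phrased purely in terms of ages $T^i_t$, so this reduction has to be argued explicitly in a process where $Q^i_t$ and the interarrival times are dependent. Second, in the boundary case $\sum_{j\le k}\mu_j=\sum_{i\le k}\lambda_i$ the aggregate $S_t=\sum_{i\le k}Q^i_t$ is a submartingale whose drift is zero only in the interior and strictly positive near the boundary, so invoking the ``reflected zero-drift random walk'' heuristic hides some work; it is cleaner to argue $\mathbb{E}[S_t^2]\gtrsim t$ directly from the submartingale second-moment recursion together with a uniform variance lower bound on increments coming from $\lambda_i\in(0,1)$. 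Third, you are right that the LP-feasibility step is the crux: the sorted cumulative inequalities in (\ref{eq:centfeas}) are exactly the Hall condition for the doubly-substochastic LP (the worst deficient set of queue-rows is the top $k$, whose maximum capture is $\sum_{j\le k}\mu_j$), and strictness buys $\epsilon$, but this should be written out via max-flow/min-cut or an explicit water-filling induction rather than asserted. Finally, note the paper's stated toolkit for strong stability is Pemantle's drift theorem (\Cref{thm:pemantle}), which turns a negative-drift-plus-bounded-moments condition directly into uniformly bounded moments of all orders; using that in place of an exponential Lyapunov function would match the paper's machinery and avoid re-deriving a Hajek-type bound from scratch.
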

Until now, we have mostly left the manner in which queues choose servers unspecified. One natural model is that each queue uses a standard \emph{no-regret learning algorithm} that learns from their previous history of successes at each server to make a (randomized) choice of server in each subsequent round.
In \cite{DBLP:conf/sigecom/GaitondeT20}, it is shown that if the queues 
each use sufficiently good no-regret algorithms, then under mild technical restrictions, 
the system remains stable with just an extra factor of $2$ on the right side of (\ref{eq:centfeas}):

 \begin{thm}[Theorem 3.1 of \cite{DBLP:conf/sigecom/GaitondeT20}, informal]
 \label{thm:gtmain}
 Suppose each queue uses sufficiently good no-regret learning algorithms, and that for all $k\leq n$,
 \begin{equation*}
     \sum_{j=1}^k \mu_j > 2\sum_{i=1}^k \lambda_i
\end{equation*}
 then the random process $\mathbf{T}_t$ under these dynamics is strongly stable. 
 \end{thm}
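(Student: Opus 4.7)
The plan is a Foster--Lyapunov style argument applied to the family of potential functions
\[
\Phi_k(t) \;=\; \sum_{i \in S_k(t)} T^i_t, \qquad k = 1, \ldots, n,
\]
where $S_k(t)$ denotes the set of the $k$ queues with the largest ages at time $t$. I would show that whenever $\Phi_k(t)$ is large, its expected change over a sufficiently long window is strictly negative, and then upgrade this drift estimate into bounded moments of every order via a standard iterated-potential argument, induction on $k$ supplying the different feasibility constraints.

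Fix $k$, and suppose $\Phi_k(t_0) \geq H$ for a large threshold $H$. Then every queue in $S_k(t_0)$ has age at least $H/k$, while queues outside $S_k(t_0)$ have ages at most roughly $H/k$. Choose a window $W$ polynomial in $H$ but with $W \ll H/k$. First, a concentration argument (using that each age changes by at most a geometric amount when cleared and by $1$ otherwise) shows that with very high probability the identity of $S_k(t)$ does not change during $[t_0, t_0+W]$, and in particular the top-$k$ queues have priority over every non-top-$k$ queue at every server throughout the window. So for the purpose of accounting for the evolution of $\Phi_k$ on this window, the only competition experienced by queues in $S_k(t_0)$ comes from each other.

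Next, apply the no-regret guarantee to each queue $i \in S_k(t_0)$. For a fixed comparator action ``always play server $j$,'' the counterfactual per-round clearing probability is $\mu_j$ times the indicator that no other top-$k$ queue also played $j$ in that round. No-regret therefore yields, for each $i$ and each $j$, an $o(1)$-approximate lower bound on the realized clearing rate of $i$ in terms of $\mu_j$ and the empirical load that the other top-$k$ queues put on server $j$. Summing these inequalities carefully across the $k$ queues in $S_k(t_0)$ and across the top $k$ servers (with the best assignment of comparators, using that load on any ``uncontested'' server converts to full rate), one extracts the lower bound
\[
\text{collective clearing rate of } S_k(t_0) \;\geq\; \tfrac{1}{2}\sum_{j=1}^k \mu_j - o(1).
\]
The factor $\tfrac{1}{2}$ is the source of the slack in the theorem: in the worst case, no-regret can only guarantee that a queue recovers half of a contested server's rate when exactly one other top-$k$ queue also plays there. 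Combining this with the fact that the top-$k$ queues receive new packets at total rate $\sum_{i=1}^k \lambda_i$, and accounting for the $\mathrm{Geom}(\lambda_i)$ jumps in $\widetilde{T}^i_t$ when $i$ is cleared, gives
\[
\mathbb{E}\!\left[\Phi_k(t_0+W) - \Phi_k(t_0) \,\big|\, \mathcal{F}_{t_0}\right] \;\leq\; W\sum_{i=1}^k \lambda_i \;-\; \tfrac{W}{2}\sum_{j=1}^k \mu_j \;+\; o(W),
\]
which under the hypothesis $\sum_{j=1}^k \mu_j > 2\sum_{i=1}^k \lambda_i$ is at most $-\varepsilon W$ for some fixed $\varepsilon > 0$. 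Together with polynomial tail estimates on single-step changes of $\Phi_k$, a standard Foster--Lyapunov moment bound (e.g.\ Pemantle--Rosenthal) then yields $\mathbb{E}[(T^i_t)^r] \leq C_r$ uniformly in $t$ for every $r$, which is strong stability.

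The hard part will be the no-regret-to-collective-rate step: the no-regret inequalities are \emph{individual} and \emph{averaged} over the window, and one needs to convert them into a \emph{collective, pointwise-in-expectation} bound on clearing rates, all while controlling the small fluctuations in $S_k(t)$ during the window and handling the joint randomness of the queues' algorithms. A secondary technical annoyance is bookkeeping between the age $T^i_t$ and clearing events in the deferred-decisions model, since clearing drops the age by a random amount with mean $1/\lambda_i$ rather than a fixed decrement; this requires a second-moment bound on the cleared ``bites'' so the expected drift calculation survives concentration, and is what forces the hypothesis that the no-regret algorithms are ``sufficiently good'' (i.e.\ have high-probability, not just in-expectation, regret bounds).
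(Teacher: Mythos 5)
This statement is not proved in the present paper: it is Theorem~3.1 of the cited EC'20 paper, restated informally as background, and the only ``proof'' available here is the one-sentence summary in the introduction (no-regret property plus concentration plus a potential-function argument over long windows). Your proposal does reconstruct that general shape --- window-based negative drift, a no-regret-to-collective-clearing-rate step yielding the factor $\tfrac12$, and a Pemantle--Rosenthal moment bound --- so at the level of architecture it matches the known argument.

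However, there are three concrete gaps. First, your potential $\Phi_k(t)=\sum_{i\in S_k(t)}T^i_t$ is unweighted, but the drift inequality you display is the one for the $\lambda$-weighted sum $\sum_{i\in S_k(t)}\lambda_i T^i_t$: clearing a packet decreases $T^i$ by $1/\lambda_i$ in expectation, so the unweighted drift is $k-\sum_i c_i/\lambda_i$ (with $c_i$ the clearing rate of $i$), and a lower bound on $\sum_i c_i$ alone does not control this. The weighted potential is exactly what makes ``collective clearing rate exceeds $\sum\lambda_i$'' translate into negative drift (compare \Cref{prop:lowerbound} in this paper). Second, $\Phi_k(t_0)\ge H$ does \emph{not} imply every queue in $S_k(t_0)$ has age at least $H/k$ with the others below it --- one huge queue and $k-1$ young ones is consistent with a large sum --- so the claimed priority of $S_k(t_0)$ over its complement throughout the window does not follow. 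Establishing collective priority without an age gap between the $k$th and $(k{+}1)$th oldest queues is precisely the delicate part (handled in this paper's analogous \Cref{lem:decreaseage} by induction on the number of near-maximal queues with a chaining/win-win analysis), and your proposal assumes it away. Third, the collective bound $\ge\tfrac12\sum_{j\le k}\mu_j$ is not obtained by summing individual regret inequalities; the actual mechanism is a contrapositive charging argument --- if the top-$k$ queues collectively succeed fewer than $\tfrac{w}{2}\sum_{j\le k}\mu_j$ times, some top-$k$ server is won by them on fewer than half its successful rounds, and the least successful old queue then has regret against always playing that server. You correctly flag this as the hard step, but the ``sum the inequalities'' route you sketch would not go through as stated.
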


\subsection{Patient Queuing Systems}
\label{section:patient}
We now formally define the patient queuing game that is the focus of this work.
\begin{defn}
The \textbf{patient queuing game} $\mathcal{G}=([n],(c_i)_{i=1}^n,\bm{\mu},\bm{\lambda})$ is defined as follows: the strategy space for each queue is $\Delta^{m-1}$. Let $\mathbf{p}=(p_1,\ldots,p_n)\in (\Delta^{m-1})^n$ denote the vector of fixed distributions for all queues over servers. The cost function $c_i$ for queue $i$ is
\begin{equation*}
    c_i(p_i,p_{-i}):=\lim_{t\to \infty} \frac{T^i_t}{t},
\end{equation*}
where $T^i_t$ is again the age of queue $i$ at time $t$ in the random queuing process induced by running the queuing system with $\bm{\mu}$ and $\bm{\lambda}$ the system parameters when each queue chooses a server by independently randomizing according to $\mathbf{p}$ in each time step she has a packet. 

Each queue $i$ wishes to minimize $c_i$. We say $\mathbf{p}$ is a \textbf{Nash equilibrium} of $\mathcal{G}$ if for all $i\in [n]$, $p_i\in \arg\min_{p'\in \Delta^{m-1}} c_i(p',p_{-i})$. That is, each queue chooses $p_i$ to minimize their cost function conditioned on the strategies $p_{-i}$ of the other queues.
\end{defn}
As mentioned, we show that this cost function is an explicit, deterministic quantity (see \Cref{thm:gamemain}), from which we will actually show that Nash equilibria exist in \Cref{thm:nashexists}. Our main focus in this work will be to give guarantees 
on the quality of all Nash equilibria. In a slight abuse of the price of anarchy terminology, we make the following definition:

\begin{defn}
Let $\bm{\lambda}$ and $\bm{\mu}$ satisfy the conditions of \Cref{thm:feasibility}, so that $\bm{\mu}$ strictly majorizes $\bm{\lambda}$. For $\alpha\geq 1$, let $\mathcal{G}(\alpha)=([n],(c_i)_{i=1}^n,\bm{\mu},\alpha^{-1}\bm{\lambda})$, and let $\mathcal{N}(\alpha)$ denote the set of Nash equilibria of $\mathcal{G}(\alpha)$. The \textbf{price of anarchy} of $\mathcal{G}=\mathcal{G}(1)$ is defined as the supremum of $\alpha$ values 
such that there exists a Nash equilibrium $\mathbf{p}^*\in \mathcal{N}(\alpha)$ and some $i\in [n]$ such that $c_i(\mathbf{p}^*)>0$ in $\mathcal{G}(\alpha)$.\footnote{In words, the price of anarchy of a centrally feasible system is the supremum of values of $\alpha$ 
such that, when all queue arrival rates are scaled down by $\alpha$, there still nonetheless 
exists a Nash equilibrium and some queue 
that suffers nonzero 
linear aging.} The price of anarchy of the patient queuing game is the supremum over all instances of $\mathcal{G}$.
\end{defn}

\section{Long-Run Strategies}
\label{sec:longrunstrats}

In this section, we extensively study the properties of the cost function $c(\mathbf{p})$, which is currently written as the limit of a random process, which is possibly a random quantity and so of uncertain game-theoretic use. There are immediate difficulties with this: first, the limit need not even exist pathwise, in which case the notion of ``long-run growth rate" itself does not make sense. Just as alarmingly, it could be that the limit does exist almost surely, but \emph{the limit may be a random quantity} and thus of little game-theoretic use.
Finally, even if the limits exist and are equal to a constant almost surely, we require \emph{explicit} knowledge of what these rates are as a function of $\mathbf{p}$ to argue about the existence of equilibria and incentives at an equilibrium.

Our first task is to provide an alternative, algorithmic description of $c(\mathbf{p})$, which we initially denote $r(\mathbf{p})$ (for ``rates'') in \Cref{sec:algdescription} (deferring the proof that this function is equal to the random limit above almost surely to \Cref{section:convergence}); in particular, $c$ will be a deterministic and explicit function of $\mathbf{p}$.
Importantly, we show that $r$ has significant analytic structure. In particular, we show that the level subsets (in $[n]$) of $r(\mathbf{p})$ enjoy convenient closure properties, which will be enough to establish continuity and other properties, which will help to prove the existence of Nash equilibria. We finally return to showing that $c(\mathbf{p})=r(\mathbf{p})$ almost surely, i.e. this alternate description indeed gives the almost sure asymptotic queue aging rates.\\

\subsection{Algorithmic Description of $c$}
\label{sec:algdescription}
We now construct a function $r:(\Delta^{m-1})^n \to [0,1]$ that we will show is equivalent to $c$.
We will show that for any fixed $\mathbf{p}$, the set $[n]$ of queues partitions into subsets $S_1,S_2,\ldots$, where each queue in $S_i$ group has the same aging rate and $S_1$ ages the fastest, then $S_2$, etc, according to $r$ (and so for $c$ as well).

To get a sense of the quantities that will arise before considering the general case, consider the simplest setting of a single queue and a single server (where there are no nontrivial strategies nor competition), with rates $\lambda>\mu$. In any round where the queue has an uncleared packet, the age will first increase by $1$ deterministically. With probability $\mu$, the queue will succeed in clearing this packet, and the age will go down in expectation by $\mathbb{E}[X]=1/\lambda$, where $X\sim \text{Geom}(\lambda)$ is independent of whether or not the server succeeds. Therefore, the expected change in this queue's age will be $1-\mu/\lambda>0$, and we expect that the queue will asymptotically age at this rate.

In general, with multiple queues and servers, the actual values of $c_i$ are best described via a recursive algorithm that computes the rates, which we give below. The intuition is that $S_1(\mathbf{p})$ will be the subset that minimizes 
the ratio of expected packets they clear collectively given $\mathbf{p}$, assuming they have priority over all other queues, divided by their sum of arrival rates. This quantity arises by viewing this subset as a single large queue as in the single queue 
example above. Conditioned on this set $S_1$ of queues growing fastest, they will typically have priority, and then we recurse to find the lower groups. The algorithm begins by initializing $k=1$ and $I=[n]$:

\begin{enumerate}
    \item  Compute the minimum value over all nonempty subsets $S\subseteq I$ of
    \begin{equation*}
        \frac{\sum_{j=1}^m \mu_j(1-\prod_{i\in S}(1-p_{i,j}))}{\sum_{i\in S} \lambda_i}.
    \end{equation*}
    This gives the \emph{expected number of packets cleared by $S$ if all queues in $S$ send in a time step and they have priority over all other queues, divided by their sum of arrival rates.}
    
    \item If this value is at least $1$, then no subset of queues will have linear aging, so set $S_k=I$,  
    $r_i(\mathbf{p})=0$ for all $i\in S_k$, and terminate. Otherwise, set $S_k$ to be the minimizer of the previous quantity over all nontrivial subsets of $I$, chosen to be of largest cardinality in the case of degeneracies.\footnote{We show in \Cref{lem:closure} that this choice is unique and canonical.} 
    In this case, for each $i\in S_k$, $r_i(\mathbf{p})$ gets set to 
    \begin{equation*}
        1-\frac{\sum_{j=1}^m \mu_j(1-\prod_{i\in S_k}(1-p_{i,j}))}{\sum_{i\in S_k} \lambda_i}.
    \end{equation*}
    For $k=1$, we refer to \emph{any} subset with the minimum ratio  as a \textbf{tight}, or \textbf{minimizing}, subset.
    
    \item Update the server rates $\mu_j$ as $
        \mu_j\leftarrow \mu_j\prod_{i\in S_k}(1-p_{i,j}).$
    That is, $\mu_j$ gets discounted by the probability a queue from $S_k$ sends to server $j$ (assuming all these queues are sending).
    Update $I\leftarrow I\setminus S_k$, $k\leftarrow k+1$, and recurse on $I$ with $\bm{\mu}$ and $\mathbf{p}_I$ if nonempty.
\end{enumerate}

As many of these quantities will appear often, we make the following conventions: for any subsets $S,S'$ such that $S\subseteq [n]\setminus S'$, 
define
$\lambda(S)$ as the sum of arrival rates of packets to a set of queues $S$, and $\alpha(S\vert \mathbf{p},\bm{\mu},S')$ as the expected number of packets cleared from queues in $S$ with service rates $\bm{\mu}$, if the queues in $S'$ have priority, $S$ has priority over all other queues, and all queues in $S\cup S'$ send packets in the round: 

\begin{equation*}
    \alpha(S\vert \mathbf{p},\bm{\mu},S')\triangleq \sum_{j=1}^m \mu_j\prod_{i\in S'}(1-p_{i,j})(1-\prod_{i\in S}(1-p_{i,j})), \quad\quad\quad
    \lambda(S) \triangleq \sum_{i\in S} \lambda_i,
\end{equation*}
and then let
\begin{equation*}
    f(S\vert \mathbf{p},\bm{\mu},\bm{\lambda},S')\triangleq \frac{\sum_{j=1}^m \mu_j\prod_{i\in S'}(1-p_{i,j})(1-\prod_{i\in S}(1-p_{i,j}))}{\sum_{i\in S}\lambda_i} = \frac{\alpha(S\vert \mathbf{p},\bm{\mu},S')}{\lambda(S)},
\end{equation*}
denote the ratio of expected number of packets cleared by $S$ when having priority over all members but $S'$, normalized by the expected number of new packets received in each round by $S$. 
Let $S_{k}(\mathbf{p},\bm{\mu},\bm{\lambda})$ be the ${k}$th set output by the above algorithm.
When $\mathbf{p},\bm{\mu},\bm{\lambda}$ are clear from context, we will suppress them. We write $U_k=\cup_{\ell=1}^kS_{\ell}$ as the set of queues in the top $k$ groups outputted by the algorithm, with $U_0=\emptyset$. We will write $f_k = f(S_k\vert U_{k-1})$, and we use $g_k=\max\{0,1-f_k\}$ for the rate of the $k$th outputted set, which is equal 
to  $r_i(\mathbf{p},\bm{\mu},\bm{\lambda})$ for any $i\in S_k(\mathbf{p},\bm{\mu},\bm{\lambda})$.
From the recursive construction,
\begin{equation}
\label{eq:recurrence}
S_{k+1}(\mathbf{p},\bm{\mu},\bm{\lambda})=S_1(\mathbf{p}_{[n]\setminus U_k},\bm{\mu}',\bm{\lambda}_{[n]\setminus U_k})
\end{equation}
where $
    \mu'_j = \mu_j\prod_{i\in U_k(\mathbf{p},\bm{\mu},\bm{\lambda})}(1-p_{ij})
$ for all $j\in [m]$.
 In words, having found $U_k$, $S_{k+1}$ is the largest minimal set among the remaining elements, but where the $\bm{\mu}$ rates have been reweighed by the probability no element of $U_k$ sends to each server. These quantities are compiled in a table in \Cref{sec:table} for easy reference.

\subsection{Properties of $r$}
\label{section:basicproperties}
We first record basic properties of the output of the algorithm that will be useful in studying the analytic properties, not to mention in proving that this algorithm gives the correct asymptotic rates. Clearly, for fixed $S$, the function $f(S\vert T)$ is nonincreasing in $T$ as a set function. We repeatedly use the following fact, which can be seen simply by expanding the definition of $f$:

\begin{fact}
\label{fact:fracdecomp}
Suppose $S,S',T$ are such that $S,S'\subseteq [n]\setminus T$ and are disjoint. Writing $f$ in the form of the quotient $\alpha/\lambda$, then
\begin{equation*}
    f(S\cup S'\vert T)=f(S\vert T)\oplus f(S'\vert S\cup T).
\end{equation*}
\end{fact}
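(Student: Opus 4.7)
The plan is to verify this identity by a direct expansion, exploiting the fact that the fractional sum $\oplus$ is, by definition, coordinatewise addition in numerator and denominator. Thus it suffices to check separately that (a) the denominators add, i.e., $\lambda(S\cup S') = \lambda(S) + \lambda(S')$, and (b) the numerators add, i.e., $\alpha(S\cup S'\vert \mathbf{p},\bm{\mu},T) = \alpha(S\vert \mathbf{p},\bm{\mu},T) + \alpha(S'\vert \mathbf{p},\bm{\mu}, S\cup T)$. Part (a) is immediate from the disjointness of $S$ and $S'$ together with the definition of $\lambda$ as a sum over arrival rates.

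For part (b), I would work term-by-term in $j\in[m]$. Since $S\cap T=\emptyset$ by hypothesis, one has $\prod_{i\in S\cup T}(1-p_{ij}) = \prod_{i\in T}(1-p_{ij})\cdot \prod_{i\in S}(1-p_{ij})$, so both summands share the common factor $\mu_j\prod_{i\in T}(1-p_{ij})$. Pulling this out, the bracketed expression on the right becomes
\[
\bigl(1-\prod_{i\in S}(1-p_{ij})\bigr) + \prod_{i\in S}(1-p_{ij})\cdot\bigl(1-\prod_{i\in S'}(1-p_{ij})\bigr),
\]
in which the two occurrences of $\prod_{i\in S}(1-p_{ij})$ telescope, leaving $1-\prod_{i\in S}(1-p_{ij})\prod_{i\in S'}(1-p_{ij})$. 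Using $S\cap S' = \emptyset$, this product combines to $\prod_{i\in S\cup S'}(1-p_{ij})$, and summing over $j$ recovers exactly $\alpha(S\cup S'\vert \mathbf{p},\bm{\mu},T)$.

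Conceptually, this is just the inclusion-exclusion identity underlying the definition of $\alpha$: the probability that at least one queue of $S\cup S'$ sends to server $j$ (while none of $T$ does) decomposes into ``at least one of $S$ does'' plus ``none of $S$ does but at least one of $S'$ does''. There is no real obstacle here; the only point that needs care is to use each disjointness hypothesis in exactly the right place --- $S\cap T=\emptyset$ to split the outer product over $S\cup T$, and $S\cap S'=\emptyset$ to recombine into a product over $S\cup S'$ at the end.
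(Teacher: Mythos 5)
Your proof is correct and is exactly the expansion the paper has in mind (the paper states this as a Fact "which can be seen simply by expanding the definition of $f$" and gives no further details). Both the modularity of $\lambda$ and the telescoping of the $\prod_{i\in S}(1-p_{ij})$ terms in the numerator are verified correctly, with each disjointness hypothesis used in the right place.
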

\emph{Throughout this paper, we will view $f$ as the quotient $\alpha/\lambda$ when invoking \Cref{fact:fracdecomp}.}

Next, we characterize some structure in the minimizing subsets at each step of the algorithm, which will allow us to choose the $S_k$ canonically as the largest cardinality maximizer. To do this, we first show that the function $\alpha(\cdot)$ is \emph{submodular}:

\begin{lem}[Submodular]
\label{lem:submodular}
For fixed $S'$, $\mathbf{p},\bm{\mu},\bm{\lambda}$ the function $\alpha(S\vert \mathbf{p},\bm{\mu},\bm{\lambda},S')$ is submodular is $S$, i.e. for any $S,T\subseteq [n]\setminus S'$, $\alpha(S\cap T\vert S')+\alpha(S\cup T\vert S')\leq \alpha(S\vert S')+\alpha(T\vert S')$.
\end{lem}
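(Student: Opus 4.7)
The plan is to reduce the claim to the submodularity of a simple per-server coverage-type function, using the explicit formula defining $\alpha$. Fix $S'$, $\mathbf{p}, \bm{\mu}, \bm{\lambda}$. For each server $j$, set $w_j := \mu_j \prod_{i \in S'}(1-p_{ij}) \geq 0$ and define the set function
\begin{equation*}
h_j(S) := 1 - \prod_{i \in S}(1-p_{ij}), \qquad S \subseteq [n]\setminus S'.
\end{equation*}
Then by the definition of $\alpha$,
\begin{equation*}
\alpha(S \vert \mathbf{p}, \bm{\mu}, S') = \sum_{j=1}^m w_j\, h_j(S).
\end{equation*}
Since nonnegative linear combinations of submodular functions are submodular, it suffices to show that each $h_j$ is submodular in $S$.

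To verify submodularity of $h_j$, fix $j$ and write $q_i := 1 - p_{ij} \in [0,1]$. Given $S, T \subseteq [n]\setminus S'$, set $A := \prod_{i \in S \cap T} q_i$, $B := \prod_{i \in S \setminus T} q_i$, and $C := \prod_{i \in T \setminus S} q_i$, all of which lie in $[0,1]$. The inequality $h_j(S \cap T) + h_j(S \cup T) \leq h_j(S) + h_j(T)$ is equivalent, after cancelling the $1$'s and multiplying through by $-1$, to
\begin{equation*}
\prod_{i \in S} q_i + \prod_{i \in T} q_i \;\leq\; \prod_{i \in S \cap T} q_i + \prod_{i \in S \cup T} q_i,
\end{equation*}
which in the $A, B, C$ notation becomes $AB + AC \leq A + ABC$, i.e.\ $A(1-B)(1-C) \geq 0$. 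This is immediate since $A, 1-B, 1-C \geq 0$.

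Summing over $j$ with weights $w_j \geq 0$ then yields
\begin{equation*}
\alpha(S \cap T \vert S') + \alpha(S \cup T \vert S') \leq \alpha(S \vert S') + \alpha(T \vert S'),
\end{equation*}
which is the desired conclusion. There is essentially no serious obstacle here; the only thing to notice is the probabilistic interpretation (which also gives a one-line justification if desired): $h_j(S)$ is the probability that at least one $i \in S$ ``fires'' under independent Bernoulli$(p_{ij})$ events, i.e.\ a coverage function, and coverage functions are classically submodular.
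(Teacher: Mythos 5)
Your proof is correct. It differs from the paper's in mechanics, though the underlying structure is the same: both arguments work server by server and exploit the product form of $\prod_{i\in S}(1-p_{ij})$. The paper verifies the \emph{diminishing marginal returns} characterization of submodularity: it computes the marginal gain $\alpha(V\cup\{i\})-\alpha(V)=\sum_j \mu'_j p_{ij}\prod_{k\in V}(1-p_{kj})$ and observes that this is nonincreasing in $V$, then invokes the standard equivalence of that condition with the union/intersection inequality (citing Schrijver). You instead verify the union/intersection inequality directly, decomposing $\alpha$ into a nonnegative combination of per-server coverage functions $h_j$ and reducing the claim to the identity $A(1-B)(1-C)\geq 0$ with $A=\prod_{i\in S\cap T}q_i$, $B=\prod_{i\in S\setminus T}q_i$, $C=\prod_{i\in T\setminus S}q_i$; your algebra checks out. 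What your route buys is self-containedness — it proves exactly the inequality stated in the lemma without appealing to the equivalence of the two definitions of submodularity — and the coverage-function interpretation is a clean conceptual one-liner. What the paper's route buys is an explicit formula for the marginal gain, though that formula is not reused elsewhere, so nothing is lost by your approach.
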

\begin{proof}
Fix $S',\mathbf{p},\bm{\mu},\bm{\lambda}$, and we supress the dependence in $\alpha$. To prove submodularity, recall that an equivalent definition is that for any $S,T$ satisfying $S\subseteq T$, and $i\not\in T$, then $\alpha(S\cup \{i\})-\alpha(S)\geq \alpha(T\cup \{i\})-\alpha(T)$ \cite{schrijver2003combinatorial}. Let $S\subseteq T$ and $i\not\in T$. A simple computation shows that for any $V\subseteq [n]\setminus S'$,
\begin{equation*}
    \alpha(V\cup \{i\})-\alpha(V)=\sum_{j=1}^m \mu'_j p_{ij}\prod_{k\in V}(1-p_{kj}),
\end{equation*}
where $\mu'_j$ is $\mu_j \cdot \prod_{i\in S'}(1-p_{ij})$ from priority. As each factor in the product is at most $1$, this is clearly decreasing in $V$ as a set function, establishing submodularity.
\end{proof}

Now, recall that the relevant functions in the construction of the above algorithm is the set function $f=\alpha/\lambda$. As a consequence of the fact that this function is the ratio of a submodular function with a modular function, we will be able to gain significant closure properties of the tight subsets (as defined above), which will end up being critical in establishing both game-theoretic and probabilistic properties of our systems.\footnote{In an unrelated context, similar ideas were used by Benjamini, et al. to show that the edge-isoperimetric ratio is not attained in certain infinite networks \cite{benjamini1999group,MR3616205}.} 
\begin{lem}[Closure]
\label{lem:closure}
For each fixed $\mathbf{p}$ and $k\geq 1$, the set of minimizers of $f(\cdot \vert U_{k-1})$ in $\mathcal{P}([n]\setminus U_{k-1})$ are closed under union and non-disjoint intersection; that is, if $S,S'\subseteq [n]\setminus U_{k-1}$ are minimizers, then
so is $S\cup S'$, as well as $S\cap S'$ if nonempty. Moreover, if $S\cap S'$ is empty, then the queues in $S$ and $S'$ must send to disjoint subsets of servers.\footnote{While not necessary for our results, one can use the proof of \Cref{lem:closure} to show that the maximizing subsets can be computed in strongly polynomial time using the fact that the function $\alpha(\cdot)-y\cdot \lambda(\cdot)$ is submodular.}

In particular, the minimizing set with largest cardinality is unique, and is the union of all minimizing sets at step $k$. If $S$ is considered at step $k$ of the algorithm, but $S$ is not a subset of $S_k$, then $f(S\vert U_{k-1})> f(S_k\vert U_{k-1})$.
\end{lem}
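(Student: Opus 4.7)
The plan is to leverage the quotient structure $f = \alpha/\lambda$ with $\alpha$ submodular (\Cref{lem:submodular}) and $\lambda$ modular, together with \Cref{fact:dansineq} and \Cref{fact:fracdecomp}. Throughout, I fix $k$, work relative to $U_{k-1}$, and let $y^{\star}$ denote the minimum value of $f(\cdot\,|\,U_{k-1})$ over nonempty subsets of $[n]\setminus U_{k-1}$; let $S, S'$ be two minimizers, each attaining $y^{\star}$.

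When $S \cap S' \neq \emptyset$, I would show simultaneously that both $S \cap S'$ and $S \cup S'$ are minimizers. Submodularity of $\alpha$ and modularity of $\lambda$ give
\begin{equation*}
\alpha(S \cup S'\,|\,U_{k-1}) + \alpha(S \cap S'\,|\,U_{k-1}) \leq \alpha(S\,|\,U_{k-1}) + \alpha(S'\,|\,U_{k-1}) = y^{\star}\bigl(\lambda(S \cup S') + \lambda(S \cap S')\bigr).
\end{equation*}
Viewing the left side as the numerator of the fractional sum of $f(S \cup S'\,|\,U_{k-1})$ and $f(S \cap S'\,|\,U_{k-1})$ (with denominators $\lambda(S \cup S')$ and $\lambda(S \cap S')$), \Cref{fact:dansineq} yields $\min\{f(S \cup S'\,|\,U_{k-1}),\, f(S \cap S'\,|\,U_{k-1})\} \leq y^{\star}$. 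Minimality of $y^{\star}$ forces both quantities to be at least $y^{\star}$, so both equal $y^{\star}$, establishing closure under non-disjoint intersection and closure under union in this case.

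When $S \cap S' = \emptyset$, \Cref{fact:fracdecomp} gives
\begin{equation*}
f(S \cup S'\,|\,U_{k-1}) = f(S\,|\,U_{k-1}) \oplus f(S'\,|\,S \cup U_{k-1}).
\end{equation*}
Because $\alpha(S'\,|\,T)$ is nonincreasing in $T$ (each product $\prod_{i \in T}(1-p_{ij})$ only shrinks as $T$ grows), we have $f(S'\,|\,S \cup U_{k-1}) \leq f(S'\,|\,U_{k-1}) = y^{\star}$. \Cref{fact:dansineq} then forces $f(S \cup S'\,|\,U_{k-1}) \leq y^{\star}$, and minimality pushes this to equality; the equality clause of \Cref{fact:dansineq} then yields $f(S'\,|\,S \cup U_{k-1}) = y^{\star}$, and in particular $\alpha(S'\,|\,S \cup U_{k-1}) = \alpha(S'\,|\,U_{k-1})$. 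Comparing this equality summand by summand over $j$ and using nonnegativity, for every $j$ with $\mu_j\prod_{i \in U_{k-1}}(1-p_{ij}) > 0$ we must have either $1 - \prod_{i \in S'}(1-p_{ij}) = 0$ (no queue in $S'$ sends to $j$) or $\prod_{i \in S}(1-p_{ij}) = 1$ (no queue in $S$ sends to $j$); by symmetry this is the required disjoint-servers condition, and $S \cup S'$ is also a minimizer.

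For the remaining claims, closure under union (both cases above) implies by induction on the finite collection of minimizers that their union $M$ is itself a minimizer, hence the unique minimizer of maximum cardinality, so the algorithm's choice yields $S_k = M$. Any minimizer $S$ therefore satisfies $S \subseteq S_k$, so contrapositively if $S \subseteq [n]\setminus U_{k-1}$ is considered at step $k$ but $S \not\subseteq S_k$, then $f(S\,|\,U_{k-1}) > f(S_k\,|\,U_{k-1})$. The main obstacle is the disjoint case: translating the fractional-sum equality into the combinatorial disjoint-servers statement requires the termwise-nonnegative comparison and the observation that the relevant summands can each be made to vanish independently.
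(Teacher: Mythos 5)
Your proof is correct and follows essentially the same route as the paper's: submodularity of $\alpha$ plus modularity of $\lambda$ combined with \Cref{fact:dansineq} and \Cref{fact:fracdecomp} for the non-disjoint and disjoint cases respectively, then taking the union of all minimizers for the final claims. If anything, your termwise nonnegativity argument in the disjoint case spells out a detail the paper leaves implicit when it asserts that $f(S'\vert S)=f(S)$ holds ``if and only if $S$ and $S'$ disjointly mix among servers.''
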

\begin{proof}
The last
statement is an immediate consequence of the first, so we focus on the first part. We show this just for $k=1$; the general case for $k>1$ follows from 
the recurrence in (\ref{eq:recurrence}).

By \Cref{lem:submodular}, $\alpha$ is submodular, and it is immediate that $\lambda(\cdot)$ is a \emph{modular} function, i.e. $\lambda(S\cup S')+\lambda(S\cap S')=\lambda(S)+\lambda(S')$. We claim that if a function $f$ is a ratio of a nonnegative submodular function and a nonnegative supermodular function, then the set of minimizers of $f$ is closed under union and non-disjoint intersection. To see this, suppose $S,S'$ are minimizers. Then we see the following inequalities, writing $f$ always in the form of the quotient $\alpha/\lambda$, and using \Cref{fact:fracdecomp}  and \Cref{fact:dansineq} 
\begin{equation*}
     \max\{f(S),f(S')\} 
     \ge f(S)\oplus f(S') \geq f(S\cap S')\oplus f(S\cup S') \geq \min\{f(S\cap S'),f(S\cup S')\}
\end{equation*}
(where we omit $S\cap S'$ if it is empty), as the inequalities in the numerator and denominator go the correct way by super/submodularity, and we use \Cref{fact:dansineq}. But as $S,S'$ were minimizers, these must be equalities, which occurs iff $S\cap S'$ (if nonempty) and $S\cup S'$ are both minimizers. As $f=\alpha/\lambda$ here, this applies for our functions.

If $S,S'$ are both minimizers and are disjoint, then from \Cref{fact:fracdecomp}, it follows that $f(S\cup S') = f(S) \oplus f(S'\vert S)$. As $S,S',S\cup S'$ are evidently minimal, this equation can only hold if $f(S'\vert S)=f(S)$, which occurs if and only if $S$ and $S'$ disjointly mix among servers.
\end{proof}

From \Cref{lem:closure}, it will nearly immediately follow that the outputted rates are strictly monotonic decreasing in the groups:
as mentioned, $[n]=S_1\sqcup S_2\sqcup\ldots$ is meant to give a partition into groups that age together, where $S_1$ is the fastest aging group, $S_2$ the next fastest, etc. As such, the disjoint subsets iteratively output by the algorithm satisfy the intuition that motivates the construction.

\begin{lem}[Monotonicity]
\label{lem:monotonicity}
Let $S_1,S_2,\ldots$ be the outputs of the algorithm in order. Then $g_k> g_{k+1}$ for each $k\geq 1$.
\end{lem}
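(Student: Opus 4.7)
The plan is to reduce the claim to showing $f_{k+1} > f_k$ whenever the algorithm reaches step $k+1$, and then obtain this strict inequality from the closure property of the minimizing sets combined with the fractional-sum identity for $f$.

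First I would dispense with the trivial case: if $g_{k+1}=0$, then the algorithm continued past step $k$ only because the minimum at step $k$ was strictly less than $1$, i.e.\ $f_k<1$, which immediately gives $g_k = 1-f_k > 0 = g_{k+1}$. So the remaining case is $g_{k+1}>0$, where both $f_k$ and $f_{k+1}$ lie in $[0,1)$ and the claim $g_k>g_{k+1}$ is equivalent to $f_k < f_{k+1}$.

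Next I would consider the set $S_k \cup S_{k+1}$, which is a nonempty subset of $[n]\setminus U_{k-1}$ (since $S_{k+1}\subseteq [n]\setminus U_k$), and crucially is \emph{not} contained in $S_k$ because $S_{k+1}$ is nonempty and disjoint from $S_k$. By the last sentence of \Cref{lem:closure}, any such set strictly exceeds the optimum at step $k$, so
\begin{equation*}
    f(S_k\cup S_{k+1}\vert U_{k-1}) > f(S_k\vert U_{k-1}) = f_k.
\end{equation*}
On the other hand, \Cref{fact:fracdecomp} decomposes the left side as
\begin{equation*}
    f(S_k\cup S_{k+1}\vert U_{k-1}) = f(S_k\vert U_{k-1}) \oplus f(S_{k+1}\vert U_{k-1}\cup S_k) = f_k \oplus f_{k+1},
\end{equation*}
using that $U_k = U_{k-1}\cup S_k$. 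Combining the two displays gives $f_k \oplus f_{k+1} > f_k$.

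Finally, I would invoke \Cref{fact:dansineq}: the fractional sum $f_k \oplus f_{k+1}$ always lies between $\min(f_k,f_{k+1})$ and $\max(f_k,f_{k+1})$. If we had $f_{k+1}\leq f_k$, then $f_k\oplus f_{k+1}\leq \max(f_k,f_{k+1}) = f_k$, contradicting what we just derived. Therefore $f_{k+1} > f_k$, and hence $g_k = 1 - f_k > 1 - f_{k+1} = g_{k+1}$. The only place any subtlety enters is the appeal to \Cref{lem:closure}'s uniqueness/largest-cardinality conclusion to rule out equality, and this is the step I would expect to verify most carefully; the rest is pure algebra on the $\oplus$ operation.
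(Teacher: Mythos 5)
Your proposal is correct and follows essentially the same route as the paper: consider $S_k\cup S_{k+1}$, use the last statement of \Cref{lem:closure} to get $f(S_k\cup S_{k+1}\vert U_{k-1})>f_k$, decompose via \Cref{fact:fracdecomp} as $f_k\oplus f_{k+1}$, and conclude $f_{k+1}>f_k$ from \Cref{fact:dansineq}. Your explicit handling of the $g_{k+1}=0$ case and the observation that $f_k<1$ whenever the algorithm reaches step $k+1$ is left implicit in the paper but is a welcome clarification.
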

\begin{proof}
Consider $S_{k}\cup S_{k+1}$: this set is considered at stage $k$ of the algorithm, and evidently was not selected. From the previous lemma, we must have
\begin{equation*}
    f(S_{k}\vert U_{k-1})< f(S_{k}\cup S_{k+1}\vert U_{k-1})<  f(S_{k+1}\vert U_{k-1}\cup S_k)=f(S_{k+1}\vert U_{k}).
\end{equation*}
The first inequality follows from the selection criteria of the algorithm (and maximality of $S_{k}$), while the second follows from writing (via \Cref{fact:fracdecomp})
\begin{equation*}
    f(S_{k}\cup S_{k+1}\vert U_{k-1})=f(S_{k}\vert U_{k-1})\oplus f(S_{k+1}\vert U_{k-1}\cup S_{k}).
\end{equation*}
As we have already proven the left hand side of the inequality, the right hand side must follow from \Cref{fact:dansineq}. This implies the claim.
\end{proof}

With these two basic properties, we can obtain an important structural result that will prove fruitul in establishing the existence of equilibria in the next section: 

\begin{thm}[Continuity of $r$]
\label{thm:ratecontinuity}
The function $r:(\Delta^{m-1})^n\to [0,1]^n$ given by $r(\mathbf{p})=(r_1(\mathbf{p}),\ldots,r_n(\mathbf{p}))$ is continuous.
\end{thm}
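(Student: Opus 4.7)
The plan is a sequential continuity argument combined with a careful analysis of how the algorithmic partition can degenerate in the limit. Let $\mathbf{p}_n \to \mathbf{p}$; to prove $r(\mathbf{p}_n) \to r(\mathbf{p})$, it suffices by boundedness to show that every convergent subsequence of $r(\mathbf{p}_n)$ has limit $r(\mathbf{p})$. Since $[n]$ admits only finitely many ordered partitions, I pass to a further subsequence along which the output partition $(S_1(\mathbf{p}_n), \ldots, S_L(\mathbf{p}_n))$ equals a fixed $(T_1, \ldots, T_L)$ for all $n$. Because $f(T_k \vert T_1 \cup \ldots \cup T_{k-1}, \cdot)$ is a polynomial with constant denominator, hence continuous in $\mathbf{p}$, the level-$k$ rates satisfy $g_k(\mathbf{p}_n) \to h_k := \max\{0, 1 - f(T_k \vert T_1 \cup \ldots \cup T_{k-1}, \mathbf{p})\}$. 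By \Cref{lem:monotonicity} applied at each $\mathbf{p}_n$, $h_1 \geq h_2 \geq \ldots \geq h_L \geq 0$, now with possible equalities.

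The key observation is that passing to the limit in the minimizer property yields: each $T_k$ is a minimizer (not necessarily of maximal cardinality) of $f(\cdot \vert T_1 \cup \ldots \cup T_{k-1}, \mathbf{p})$ over subsets of $[n] \setminus (T_1 \cup \ldots \cup T_{k-1})$. The challenge is that consecutive groups $T_k$ with equal limiting rate should merge into a single group in the canonical partition of $\mathbf{p}$. I make this explicit: choose $0 = \ell_0 < \ell_1 < \ldots < \ell_M = L$ so that $h_{\ell_{j-1}+1} = \ldots = h_{\ell_j} > h_{\ell_j+1}$, and let $V_j := T_{\ell_{j-1}+1} \cup \ldots \cup T_{\ell_j}$ with common rate $\tilde h_j := h_{\ell_j}$.

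The main claim, proven by induction on $j$, is that $S_j(\mathbf{p}) = V_j$ and $g_j(\mathbf{p}) = \tilde h_j$. Iterated application of \Cref{fact:fracdecomp} together with the equality case of \Cref{fact:dansineq} shows that $V_j$ is a minimizer of $f(\cdot \vert V_1 \cup \ldots \cup V_{j-1}, \mathbf{p})$ with value $1 - \tilde h_j$; by \Cref{lem:closure}, $S_j(\mathbf{p}) \supseteq V_j$. The harder part is ruling out strict containment: if $S_j(\mathbf{p}) \supsetneq V_j$, set $W := S_j(\mathbf{p}) \setminus V_j \subseteq [n] \setminus (T_1 \cup \ldots \cup T_{\ell_j})$, which is nonempty. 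Decomposing $f(S_j(\mathbf{p}) \vert V_1 \cup \ldots \cup V_{j-1}, \mathbf{p}) = f(V_j \vert V_1 \cup \ldots \cup V_{j-1}, \mathbf{p}) \oplus f(W \vert V_1 \cup \ldots \cup V_j, \mathbf{p})$ via \Cref{fact:fracdecomp}, and using that both the left-hand side and the first summand equal $1 - \tilde h_j$, the equality case of \Cref{fact:dansineq} forces $f(W \vert T_1 \cup \ldots \cup T_{\ell_j}, \mathbf{p}) = 1 - \tilde h_j$. But the limit minimizer property applied at level $\ell_j + 1$ says $f(\cdot \vert T_1 \cup \ldots \cup T_{\ell_j}, \mathbf{p}) \geq 1 - h_{\ell_j+1} > 1 - \tilde h_j$ uniformly on that domain, a contradiction.

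The main obstacle is that the partition map $\mathbf{p} \mapsto (S_1(\mathbf{p}), \ldots)$ is generally discontinuous, even though the rate vector is. The plan decouples the two issues: freeze the partition along a subsequence, take limits of the minimizer characterizations, and then invoke \Cref{lem:monotonicity} (strict separation between consecutive distinct levels at each $\mathbf{p}_n$) together with \Cref{lem:closure} (closure of minimizers under union, so that the max-cardinality minimizer at $\mathbf{p}$ is well-defined) to identify precisely how the limit partition arises from merging consecutive equal-rate $T_k$'s. Edge cases where $h_k = 0$ for some $k$ (so the algorithm at $\mathbf{p}$ terminates early via the $f \geq 1$ condition) are absorbed by the same merging mechanism: all tail groups with zero limit rate coalesce into the final zero-rate block.
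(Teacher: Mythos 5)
Your argument is correct, and it reaches the conclusion by a genuinely different route than the paper. The paper works ``forward'' from the limit point: it defines $\delta$ as the minimum nonzero gap among the finitely many values $f(S\vert S',\mathbf{p}^*)$, restricts to a tail of the sequence where all such values have moved by less than $\varepsilon/3<\delta/3$, and then shows inductively that the algorithm run at $\mathbf{p}^k$ must output tight subsets of $S_1(\mathbf{p}^*)$ until $S_1(\mathbf{p}^*)$ is exhausted, with rates within $\varepsilon/3$ of the target, before recursing to lower groups. You instead work ``backward'': extract a subsequence along which the output partition is a fixed $(T_1,\ldots,T_L)$, pass to the limit in the (weak) minimizer inequalities, and reconstruct the canonical partition at $\mathbf{p}$ by merging consecutive equal-rate blocks, using \Cref{fact:fracdecomp}, the equality case of \Cref{fact:dansineq}, and \Cref{lem:closure} exactly where the paper uses them. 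The two proofs are dual --- the paper shows nearby partitions refine the limit partition in order, you show the limit partition coarsens the nearby ones --- and rest on the same structural lemmas; yours trades the explicit $\delta$--$\varepsilon/3$ bookkeeping for a compactness argument and a slightly more delicate identification of the limit partition, while the paper's version additionally yields a locally uniform statement (every $\mathbf{p}$ in a neighborhood of $\mathbf{p}^*$ has a refining partition), which it does not actually need. One point to tighten in a write-up: when the algorithm at $\mathbf{p}_n$ terminates via the ``value at least $1$'' branch, the final outputted set is the entire remaining index set and need not itself be a minimizer, so the ``limit minimizer property'' should be stated as: for every $S$ in the remaining domain, $f(S\vert T_1\cup\cdots\cup T_{k-1},\mathbf{p})\geq 1-h_k$ (which holds in both the terminal and non-terminal cases and is all your contradiction step and your zero-rate tail block actually use).
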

\begin{proof}
Fix $\mathbf{p}^*$, a point we wish to show continuity at, and let $\mathbf{p}^k\to \mathbf{p}^*$ be a convergent sequence in $(\Delta^{m-1})^n$. It is easy to see that because the function $\mathbf{p}\mapsto \max_{S\subseteq [n]} f(S\vert \mathbf{p})$ is clearly continuous as the maximum of continuous functions, the function $\mathbf{p}\mapsto \max\{ 1-f(S_1(\mathbf{p})\vert \mathbf{p}),0\} = g(S_1(\mathbf{p}))$ is continuous. Therefore, if $\max_{i\in [n]} r_i(\mathbf{p}^*)=g_1(\mathbf{p}^*)=0$, then by \Cref{lem:monotonicity}, as $\max_i r_i(\mathbf{p}^k)\to 0$, monotonicity yields $r_i(\mathbf{p}^k)\to 0$ along the sequence for every $i\in [n]$, proving continuity. We will now assume the harder case $g_1(\mathbf{p}^*)>0$.

Before proceeding, we define $\delta>0$ to be the minimal nonzero gap between $f(S\vert S',\mathbf{p}^*)$ and $f(T\vert S',\mathbf{p}^*)$ over all choices of $S,T, S'$ such that $S,T\subseteq [n]\setminus S'$, i.e.
\begin{equation}
\label{eq:mingap}
    \delta \triangleq \min_{S,T,S': f(S\vert S',\mathbf{p}^*)\neq f(T\vert S',\mathbf{p}^*)} \vert f(S\vert S',\mathbf{p}^*)- f(T\vert S',\mathbf{p}^*) \vert. 
\end{equation}
Note that $\delta$ is strictly positive as there are only finitely many choices of $S,T,S'$.

 Now, fix $0<\varepsilon<\delta$, and recall that for any fixed $S,S'$ such that $S\subseteq [n]\setminus S'$, the function $f(S\vert \mathbf{p}, S')$ is continuous as a function of $\mathbf{p}$. In particular, for this choice of 
 $\epsilon$, we may restrict to a tail of the sequence $\{\mathbf{p}^k\}$ and reindex so that for all $k\geq 1$, and all $S,S'$
 \begin{equation}
 \label{eq:epsclose}
     \vert f(S\vert \mathbf{p}^k,S') - f(S\vert \mathbf{p}^*,S')\vert < \varepsilon/3.
 \end{equation}
 
 We now claim that for every $k\geq 1$, the following holds in the algorithm's outputted rates on $\mathbf{p}^k$: while there exists any element $i\in S_1(\mathbf{p}^*)$ that has not been outputted, the union of outputted subsets to that point must itself be a minimizing subset with respect to $f(\cdot)$ evaluated with profile $\mathbf{p}^*$, and that each element outputted so far has $f$ value at $p^k$ at most $f(S_1\vert \mathbf{p}^*)+\varepsilon/3$. 
 
 To see this, we proceed inductively: at the beginning of the algorithm, for every tight subset $S\subseteq S_1(\mathbf{p}^*)$ (by \Cref{lem:closure}), we have by \Cref{eq:epsclose}
 \begin{equation*}
     f(S\vert \mathbf{p}^k) <f(S\vert \mathbf{p}^*)+\varepsilon/3=f(S_1(\mathbf{p}^*)\vert \mathbf{p}^*)+\varepsilon/3,
 \end{equation*}
 while for every subset $T$ that is not minimal, we have by \Cref{eq:mingap,eq:epsclose} that
 \begin{equation*}
     f(T\vert \mathbf{p}^k) >f(T\vert \mathbf{p}^*)-\varepsilon/3\geq f(S_1(\mathbf{p}^*)\vert \mathbf{p}^*)+\delta-\varepsilon/3 >f(S_1(\mathbf{p}^*)\vert \mathbf{p}^*)+2\varepsilon/3.
 \end{equation*}
 In particular, the first outputted subset must be a tight subset for $\mathbf{p}^*$, and the rate of each element in that subset is at least the desired amount. 
 
 Suppose this holds inductively, and now let $S\subseteq S_1(\mathbf{p}^*)$ be the union of the initial outputted sets, which we know is tight. If $S=S_1(\mathbf{p}^*)$, we are done, so suppose there exists $i\in S_1(\mathbf{p}^*)\setminus S$. Suppose $T$ is disjoint and such that $S\cup T\subseteq S_1(\mathbf{p}^*)$ is tight. 
 Such sets exist, for instance $T=S_1(\mathbf{p}^*)\setminus S$.
 From \Cref{fact:fracdecomp},
 \begin{equation*}
     f(S\cup T\vert \mathbf{p^*}) = f(S\vert \mathbf{p}^*)\oplus f(T\vert S,\mathbf{p}^*),
 \end{equation*}
 and by \Cref{fact:dansineq}, the fact that both the left side and the first term on the right are minimal implies $f(T\vert S,\mathbf{p}^*)=f(S_1(\mathbf{p}^*)\vert \mathbf{p}^*)$.
 Then at the next step of the algorithm, we again have by \Cref{eq:epsclose}
 \begin{equation*}
     f(T\vert S,\mathbf{p}^k)< f(T\vert S, \mathbf{p}^*)+\varepsilon/3=f(S_1(\mathbf{p}^*)\vert \mathbf{p}^*)+\varepsilon/3.
 \end{equation*}
 For any disjoint subset $T'$ such that $S\cup T'$ is not tight, note that by \Cref{fact:fracdecomp},
 \begin{equation*}
     f(S\cup T'\vert \mathbf{p}^*) = f(S\vert \mathbf{p}^*)\oplus f(T'\vert S,\mathbf{p}^*).
 \end{equation*}
 By \Cref{fact:dansineq}, minimality of the first term on the right, and the fact that the left term is \emph{not} minimal, it follows that $f(T'\vert S,\mathbf{p}^*)\geq f(S_1(\mathbf{p}^*)\vert \mathbf{p}^*)+\delta$ by definition.
 We then have by \Cref{eq:epsclose} that
 \begin{equation*}
     f(T'\vert S,\mathbf{p}^k) >f(T'\vert S, \mathbf{p}^*)-\varepsilon/3\geq f(S_1(\mathbf{p}^*)\vert \mathbf{p}^*)+\delta-\varepsilon/3 >f(S_1(\mathbf{p}^*)\vert \mathbf{p}^*)+2\varepsilon/3.
 \end{equation*}
 In particular, the next outputted set is such that $S\cup T\subseteq S_1(\mathbf{p}^*)$ is tight, and the rate condition holds as well. We can iteratively apply this while $S_1(\mathbf{p}^*)$ is not exhausted, proving that every element of $S_1(\mathbf{p}^*)$ is outputted before any other element when running the algorithm on $\mathbf{p}^k$, with rate within $\varepsilon/3$ of $r_i(\mathbf{p}^*)$. As $\varepsilon$ was arbitrary, this shows continuity on all components of $S_1(\mathbf{p}^*)$.
 
 Because we have shown that every queue of $S_1(\mathbf{p}^*)$ is outputted before every queue not in $S_1$, we can apply the recurrence as discussed in \Cref{eq:recurrence} to show continuity for each queue in $S_2(\mathbf{p}^*)$, just discounting $\bm{\mu}$ as usual. The same argument restricted to $[n]\setminus S_1(\mathbf{p}^*)$ nearly shows continuity; the only difference is the discounting of $\bm{\mu}$ by the queues in $S_1(\mathbf{p}^*)$ depends on $\mathbf{p}^k$, not $\mathbf{p}^*$, but as each $f(S\vert S_1(\mathbf{p}^*),\mathbf{p},\bm{\mu}')$ is jointly continuous in $\mathbf{p},\bm{\mu}'$, and the composition of continuous functions is continuous, the same argument holds with minimal modification. This proves continuity for the components of each subsequent group recursively, and thus of each component in $[n]$.
\end{proof}

\subsection{Existence of Equilibria}
\label{section:equilibria}
With these structural results, we can turn to showing our first game-theoretic property of this game, for now assuming that the costs are given by $r$, the output of the algorithm of \Cref{sec:algdescription}: namely, that equilibria exist. While the cost functions are not quite convex, by restricting each component to a line that varies only a single queue's strategy, one can deduce enough structure that allows for an application of Kakutani's Theorem. We now carry out this plan.

We proceed as follows: fix any queue $i$, as well as any fixed probability choices $p_{-i}\in (\Delta^{m-1})^{n-1}$ by the other players, and any two $p,p'\in \Delta^{m-1}$. Define for $t\in [0,1]$
\begin{equation*}
    h(t)=r_i(tp+(1-t)p',p_{-i}).
\end{equation*}

\begin{lem}
\label{lem:nolocalmax}
For any fixed $i$, $p_{-i}\in (\Delta^{m-1})^{n-1}$, and $p_i,p'_i\in \Delta^{m-1}$, the function $h(t)$ is piecewise linear and has no local maxima on the interior.
\end{lem}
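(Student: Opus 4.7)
The plan is to establish piecewise linearity by a direct structural observation, and then rule out strict interior local maxima by a slope analysis at each breakpoint, leveraging the algorithm's selection rules together with \Cref{lem:closure} and \Cref{fact:fracdecomp}.

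First I would show piecewise linearity. For any fixed subsets $S, T \subseteq [n]$ with $S \cap T = \emptyset$, the quantity $f(S|T,\mathbf{p}(t))$ is a linear function of $t$: the only $t$-dependence in $\alpha(S|T,\mathbf{p}(t))$ enters through the factor $(1 - p_{ij}(t)) = 1 - tp_j - (1-t)p'_j$, which is linear in $t$ and appears at most once in each product over $S$ or $T$, while $\lambda(S)$ is constant. Because there are only finitely many such pairs $(S,T)$, the strict ordering of the values $\{f(S|T,\mathbf{p}(t))\}$ switches at only finitely many values of $t \in [0,1]$. Away from these breakpoints, the algorithm's output $(S_1, S_2, \ldots)$ is a fixed sequence of subsets of $[n]$, and $h(t)$ is either identically zero or agrees with the affine function $1 - f(S_{k(t)} | U_{k(t)-1},\mathbf{p}(t))$, hence is linear in $t$. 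So $h$ is piecewise linear with finitely many pieces.

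Next, to rule out strict interior local maxima, suppose for contradiction that $t^* \in (0,1)$ is such a point. Since $h \geq 0$, we must have $h(t^*) > 0$ (otherwise $h \equiv 0$ near $t^*$, which is not a strict maximum). I then argue that the group index $k^*$ such that $i \in S_{k^*}(\mathbf{p}(t))$ is locally constant around $t^*$: by \Cref{thm:ratecontinuity} the rate $r_i$ is continuous, and each $g_k$ is continuous, so if $k^*$ changed across $t^*$, continuity would force two distinct $g_k$ values to coincide at $t^*$, contradicting the strict monotonicity $g_1 > g_2 > \cdots$ of \Cref{lem:monotonicity} for positive rates. The core slope argument now handles the simplest breakpoint scenario, where only the choice of $S_{k^*}$ differs between the two sides while the predecessor set $U = U_{k^*-1}$ is common. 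Let $S_L, S_R$ denote the respective choices for $t$ slightly below and above $t^*$; both contain $i$. Continuity of $h$ combined with the linearity of $f(S_L|U,\mathbf{p}(\cdot))$ and $f(S_R|U,\mathbf{p}(\cdot))$ in $t$ forces $f(S_L|U,\mathbf{p}(t^*)) = f(S_R|U,\mathbf{p}(t^*)) = 1 - h(t^*)$. Because the algorithm picks $S_L$ (not $S_R$) as the largest minimizer just below $t^*$, we have $f(S_L|U,\mathbf{p}(t^*-\epsilon)) < f(S_R|U,\mathbf{p}(t^*-\epsilon))$, which combined with equality at $t^*$ and linearity forces the slope in $t$ of $f(S_L|U,\mathbf{p}(\cdot))$ at $t^*$ to strictly exceed that of $f(S_R|U,\mathbf{p}(\cdot))$ at $t^*$. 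Writing $h = 1 - f$, this means the left slope of $h$ at $t^*$ is strictly less than its right slope at $t^*$, which is incompatible with a strict local maximum (which would require the opposite inequality).

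The main obstacle is the more general case, where earlier steps of the algorithm also change at $t^*$, so that $U_L \neq U_R$ and $S_L, S_R$ may have substantially different structure through cascading modifications. I would handle this by applying \Cref{lem:closure} at every affected step: at $t^*$, the collection of minimizing subsets at each step is closed under union, so the largest minimizer aggregates the left and right choices at each level where they diverge. Using \Cref{fact:fracdecomp} to relate $f$-values on the aggregate, left, and right predecessor configurations, the strict preference of the algorithm's choice at each affected level on each side still yields a slope inequality in the same direction as in the simple case, and composing these through the chain of steps that produce $i$'s final group ultimately forces the slope of $f$ at $i$'s configuration to strictly increase (not decrease) across $t^*$, so $h$'s slope weakly increases across $t^*$, again ruling out a strict local maximum.
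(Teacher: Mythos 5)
Your piecewise-linearity argument is correct and matches the paper's, and your ``simple case'' slope analysis at a breakpoint is the same crossing argument the paper uses. But the rest has genuine gaps. First, the claim that the group index $k^*$ with $i\in S_{k^*}(\mathbf{p}(t))$ is locally constant is false, and the justification does not work: if $i$'s group merges into the one above it at $t^*$, continuity of $r_i$ only forces $\lim_{t\to t^{*-}}g_{k^*}(t)=g_{k^*-1}(t^*)$, an equality between rates of \emph{different} partitions at \emph{different} parameter values; it never produces two distinct $g_k$'s of a single profile coinciding, so \Cref{lem:monotonicity} is not contradicted and the index can indeed jump. Second, and more seriously, the cascading case---which you correctly identify as the main obstacle---is asserted rather than proved: ``applying \Cref{lem:closure} at every affected step \ldots ultimately forces'' is a plan, not an argument, and it aims at a stronger conclusion (left slope $\le$ right slope at \emph{every} breakpoint, i.e.\ convexity of $h$) than the lemma needs or than is at all obvious once predecessor sets change.

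The idea that makes the cascade tractable, and which your proposal never uses, is that only coordinate $i$ of $\mathbf{p}(t)$ varies, so every value $f(S\vert S',\mathbf{p}(t))$ with $i\notin S\cup S'$ is \emph{constant} in $t$. Consequently, at any step $\ell\le k$ the algorithm's selection can only change at $t^*$ because some set \emph{containing} $i$ newly dips below the previously selected set; the paper then only needs to rule out a positive-to-nonpositive slope transition, and in both subcases (previously selected set contains $i$, or is a constant not containing $i$) the downward crossing in $f$ forces the newly relevant rate to have strictly larger---hence still positive---slope. Note also that this ``once increasing, always increasing'' form of the conclusion is what \Cref{cor:globalmins} actually needs: ruling out only \emph{strict} interior local maxima, as your write-up does, would still permit an interior plateau lying strictly above both endpoints, which would break that corollary.
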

\begin{proof}
Let $\mathbf{p}(t)=(tp_i+(1-t)p'_i,p_{-i})$.
By \Cref{thm:ratecontinuity}, $h$ is continuous as the restriction of a continuous function, and is easy to see it must be piecewise linear in $t$ by inspection. Indeed, as the algorithmic description of $c$ takes minimums and maximums of finitely many linear functions, this yields a piecewise linear function with no jump discontinuities.

We now prove the last claim. It is sufficient to show that if $h$ is increasing at $t'$, then it is increasing for all $t''>t'$. Suppose that this is violated for some $t'<t''$; by piecewise linearity, there must exist some $t^*$ such that as $t'<t^*<t''$ where two lines intersect in the graph, and so that as $t\to t^{*-}$, the slope is increasing while it is non-increasing for $t\to t^{*+}$. 

Suppose that for all $t$ that are sufficiently close to $t^*$ from the left, $i$ is outputted at step $k$ of the algorithm. The only reason the slope can go from positive to nonpositive at $t^*$ is there is a change in which sets are outputted in the algorithm at some step $\ell\leq k$,  which can happen only if some new set $S$ including $i$ gets selected for $t\geq t^*$. However, as the rates of all sets not including $i$ fixing any other disjoint set having priority are all constants with respect to $t$, this can only occur because at $t^*$, some linear function $f(S\vert \mathbf{p}(t),S')$ went below the $f(S''\vert \mathbf{p}(t),S')$ that was previously selected at step $\ell$, where $S'$ is the union of all sets outputted prior at $t$ and $S''$ is the set that was outputted next for all $t$ close enough to the left of $t^*$. If $S''$ included $i$, this can only occur if $r(S\vert \mathbf{p}(t),S'):=\max\{0,1-f(S\vert\mathbf{p}(t),S')\}=1-f(S\vert \mathbf{p}(t),S')\}$ has larger positive slope than $r(S''\vert \mathbf{p}(t),S')$, so the slope of $h$ would be strictly larger (and in particular, increasing) for all $t$ sufficiently close to $t^*$ on the right, contradicting our assumption that it is non-increasing. If $S''$ does not include $i$, $r(S''\vert \mathbf{p}(t),S')$ is a constant with respect to $t$, so for $r(S\vert \mathbf{p}(t),S')$ to exceed it for $t$ larger than $t^*$ but be lower for $t$ less than $t^*$, the slope of $r(S\vert \mathbf{p}(t),S')$ must also be positive, another contradiction. Both cases lead to a contradiction, proving the claim.
\end{proof}

\begin{corollary}
\label{cor:globalmins}
Fix $p_{-i}\in (\Delta^{m-1})^{n-1}$. Then the set of global minimizers of $r_i(\cdot,p_{-i})$ 
form a nonempty, closed convex set.
\end{corollary}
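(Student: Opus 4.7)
The plan is to handle nonemptiness and closedness by standard continuity--compactness arguments, then derive convexity from the no-interior-local-maximum property established in \Cref{lem:nolocalmax}.

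For the first two properties, $\Delta^{m-1}$ is compact and $r_i(\cdot, p_{-i})$ is continuous by \Cref{thm:ratecontinuity}, so the infimum $r^* := \min_{p \in \Delta^{m-1}} r_i(p, p_{-i})$ is attained; the set of minimizers equals the preimage $r_i(\cdot, p_{-i})^{-1}(\{r^*\})$ of a point under a continuous map, and is hence closed.

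For convexity, let $p, p' \in \Delta^{m-1}$ be two minimizers, so $r_i(p, p_{-i}) = r_i(p', p_{-i}) = r^*$, and set $h(t) := r_i(tp + (1-t)p', p_{-i})$ for $t \in [0,1]$. By \Cref{lem:nolocalmax}, $h$ is continuous and piecewise linear, with $h(0) = h(1) = r^*$ and $h(t) \geq r^*$ throughout. Suppose for contradiction that $h(t_0) > r^*$ for some $t_0 \in (0,1)$. Let $t_L$ and $t_R$ be, respectively, the infimum and supremum of the set of points in $[0,1]$ at which $h$ attains its maximum; since the maximum strictly exceeds $r^*$ while $h(0) = h(1) = r^*$, we must have $0 < t_L \leq t_R < 1$. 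Continuity and piecewise linearity then force a linear piece of strictly positive slope somewhere in $(0, t_L)$ and a linear piece of strictly negative slope somewhere in $(t_R, 1)$. But this exhibits a point $t_a$ with strictly positive slope followed by a later point $t_b > t_a$ with non-positive slope, directly contradicting the conclusion of \Cref{lem:nolocalmax} that once $h$ is increasing at some point it remains increasing at all larger values of $t$. Hence $h \equiv r^*$ on $[0,1]$, so every convex combination of $p$ and $p'$ is also a minimizer, proving convexity.

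The conceptual work is entirely buried in \Cref{lem:nolocalmax}; once the restriction of $r_i$ to any line segment in $\Delta^{m-1}$ is known to be quasiconvex in this sense, the corollary is the familiar statement that sublevel sets of a quasiconvex function on an interval are convex, and no further obstacle arises.
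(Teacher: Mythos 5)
Your proof is correct and follows essentially the same route as the paper: nonemptiness and closedness from continuity (\Cref{thm:ratecontinuity}) plus compactness, and convexity by restricting to the segment between two minimizers and invoking \Cref{lem:nolocalmax} to force $h$ to be constant there. Your unpacking via $t_L, t_R$ is just a more explicit version of the paper's observation that the maximum of $h$ must lie at an endpoint.
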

\begin{proof}
Note that global minima exist by continuity from \Cref{thm:ratecontinuity} and the Extreme Value Theorem. Let $p_i,p'_i$ be global minimizers; if we form the line between them in $\Delta^{n-1}$ (which is of course convex) and consider the function $h$ \jgedit{defined on this line}, then as there are no local maxima in the interior of $h$ by the previous lemma, the maximum must lie at an endpoint. This immediately implies that every point on this line is also a global minimizer. \jgedit{Closedness of the set of global minimizers follows immediately from the continuity guaranteed in \Cref{thm:ratecontinuity}.}
\end{proof}

\begin{thm}[Existence of Nash Equilibria]
\label{thm:nashexists}
There exists a pure equilibrium of the game with costs given by $r:(\Delta^{m-1})^n\to [0,1]^n$.
\end{thm}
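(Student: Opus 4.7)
The plan is to deduce the existence of a pure Nash equilibrium by applying Kakutani's fixed-point theorem to the joint best-response correspondence. Concretely, I would define
\begin{equation*}
    \mathrm{BR}:(\Delta^{m-1})^n \rightrightarrows (\Delta^{m-1})^n, \qquad \mathrm{BR}(\mathbf{p}) = \prod_{i=1}^n \mathrm{BR}_i(\mathbf{p}),
\end{equation*}
where $\mathrm{BR}_i(\mathbf{p}) = \arg\min_{q\in \Delta^{m-1}} r_i(q,p_{-i})$. A fixed point $\mathbf{p}^*\in \mathrm{BR}(\mathbf{p}^*)$ is exactly a Nash equilibrium of the game with costs $r$, so it suffices to verify the three hypotheses of Kakutani: (i) the domain is a nonempty, compact, convex subset of Euclidean space; (ii) $\mathrm{BR}$ is nonempty- and convex-valued; and (iii) $\mathrm{BR}$ has closed graph (equivalently, is upper hemicontinuous into a compact space).

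For (i), the product of simplices $(\Delta^{m-1})^n$ is a finite product of compact convex nonempty sets and so is itself compact, convex, and nonempty. For (ii), both assertions are exactly the content of \Cref{cor:globalmins} applied coordinate-wise: continuity of $r$ from \Cref{thm:ratecontinuity} and the extreme value theorem give nonemptiness, while \Cref{lem:nolocalmax} (which was the substance of the no-local-max argument) together with \Cref{cor:globalmins} gives convexity and closedness of each $\mathrm{BR}_i(\mathbf{p})$; products of convex sets are convex.

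The only step that requires genuine work is (iii), the closed-graph condition, and this is where I would focus attention. I would take sequences $\mathbf{p}^k \to \mathbf{p}^*$ and $\mathbf{q}^k\to \mathbf{q}^*$ with $\mathbf{q}^k \in \mathrm{BR}(\mathbf{p}^k)$ and argue coordinate-wise that $q_i^*\in \mathrm{BR}_i(\mathbf{p}^*)$. Fix $i$ and any alternative $q'\in \Delta^{m-1}$; by definition of best response,
\begin{equation*}
    r_i(q_i^k, p_{-i}^k) \leq r_i(q', p_{-i}^k)
\end{equation*}
for each $k$. Taking $k\to \infty$ and using continuity of $r_i$ in all arguments (\Cref{thm:ratecontinuity}) yields $r_i(q_i^*, p_{-i}^*)\leq r_i(q', p_{-i}^*)$, and since $q'$ was arbitrary, $q_i^*\in \mathrm{BR}_i(\mathbf{p}^*)$. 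This is precisely where \Cref{thm:ratecontinuity} is essential; without joint continuity in $\mathbf{p}$ this step would fail.

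Having verified the three hypotheses, Kakutani's theorem produces a fixed point $\mathbf{p}^*\in \mathrm{BR}(\mathbf{p}^*)$, which by construction is a pure Nash equilibrium of $\mathcal{G}$. The main conceptual obstacle—showing that each best-response slice $\mathrm{BR}_i(\mathbf{p})$ is convex despite $r_i$ not being convex—is already handled by the combination of \Cref{lem:nolocalmax} and \Cref{cor:globalmins}, so the remaining argument is a routine assembly of standard ingredients.
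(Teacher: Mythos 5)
Your proposal is correct and follows essentially the same route as the paper: Kakutani's theorem applied to the best-response correspondence, with nonemptiness and convexity of the best-response sets supplied by \Cref{cor:globalmins} (via \Cref{lem:nolocalmax}) and the closed-graph condition supplied by the continuity in \Cref{thm:ratecontinuity}. Your closed-graph verification passes to the limit directly in the inequality $r_i(q_i^k,p_{-i}^k)\leq r_i(q',p_{-i}^k)$ rather than arguing by contradiction with explicit $\epsilon$'s as the paper does, but this is only a cosmetic difference.
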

\begin{proof}
We will prove this by appealing to Kakutani's Theorem. 
Let $B:(\Delta^{m-1})^n\rightrightarrows (\Delta^{m-1})^n$ be the correspondence that maps $\mathbf{p} \in (\Delta^{m-1})^n$ to the set $B(\mathbf{p})\subseteq (\Delta^{m-1})^n$ where
\begin{equation*}
    B(\mathbf{p})=\{\mathbf{p}'\in (\Delta^{m-1})^n: p'_i\in \arg\min_{x\in \Delta^{m-1}} r_i(x,p_{-i})\}
\end{equation*}
is the best-response correspondence. 

We must verify the preconditions of Kakutani's Theorem. $(\Delta^{m-1})^n$ is clearly compact and convex, and we have shown that $B(\mathbf{p})$ is non-empty and is a convex set by \Cref{cor:globalmins}. The final condition to show is that $r$ has closed graph, which can be done by a completely standard argument; we must show that if $(\mathbf{p}^k,\mathbf{s}^k)\to (\mathbf{p},\mathbf{s})$, where $\mathbf{s}^k\in B(\mathbf{p}^k)$, then $\mathbf{s}\in B(\mathbf{p})$. Suppose for a contradiction that this does not hold for some such convergent sequence. This implies that for some $i\in [n]$, there exists some $s'_i$ and $\epsilon>0$ such that
\begin{equation*}
    r_i(s'_i,p_{-i})+3\epsilon< r_i(s_i,p_{-i}).
\end{equation*}
As $p_{-i}^k\to p_{-i}$, the continuity of $r$ from \Cref{thm:ratecontinuity} gives for large enough $k$ that
\begin{equation*}
    r_i(s'_i,p^k_{-i})\leq r_i(s'_i,p_{-i})+\epsilon.
\end{equation*}
These two inequalities yield
\begin{equation*}
    r_i(s'_i,p^k_{-i})<r_i(s_i,p_{-i})-2\epsilon<r_i(s^k_i,p^k_{-i})-\epsilon,
\end{equation*}
where the last inequality holds for all large enough $k$ by continuity of $r$. This contradicts the optimality of $\mathbf{s}^k\in B(\mathbf{p}^k)$, proving $r$ has closed graph.

Kakutani's Theorem then immediately implies the existence of a pure equilibrium, i.e. $\mathbf{p}\in (\Delta^{m-1})^n$ such that $\mathbf{p}\in B(\mathbf{p})$.
\end{proof}

\section{Asymptotic Convergence to $r$}
\label{section:convergence}
Having established game-theoretic properties of the patient queuing game assuming that the costs are given by the function $r$, rather than the random asymptotic limiting behavior $c$, we now return to the task of showing that these quantities are almost surely equal. Our main technical result asserts precisely this:

\begin{thm}[Almost Sure Asymptotic Convergence]
\label{thm:gamemain}
Let $\mathcal{G}=([n],(c_i)_{i=1}^n,\bm{\mu},\bm{\lambda})$ be a one-shot queuing game. For each fixed $\mathbf{p}$ and all $i\in [n]$, almost surely the long-run aging rate of queue $i$ satisfies
\begin{equation*}
    c_i(\mathbf{p})=\lim_{t\to \infty} \frac{T^i_t}{t}=r_i(\mathbf{p}).
\end{equation*}
\end{thm}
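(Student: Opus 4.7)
The plan is to prove the theorem by induction on the group index $k$, establishing that every queue $i \in S_k(\mathbf{p})$ satisfies $T^i_t/t \to g_k$ almost surely. The inductive step reduces to the base case via the recurrence in~(\ref{eq:recurrence}): on the probability-one event that the top groups $S_1,\ldots,S_{k-1}$ have stabilized, the strict monotonicity from \Cref{lem:monotonicity} ensures $g_{k-1}>g_k$, so queues in $U_{k-1}$ are eventually older than every queue outside and hold perpetual priority. The residual dynamics on $[n]\setminus U_{k-1}$ are then exactly an instance with effective service rates $\mu_j\prod_{i\in U_{k-1}}(1-p_{ij})$, to which the base-case argument applies.

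For the base case, I would first control the weighted aggregate $Y_t := \sum_{i \in S_1}\lambda_i T^i_t$. Using the deferred-decisions description, each step $Y_t$ increases by $\lambda(S_1)$ (modulo emptiness) and decreases by $\sum_{i \in S_1}\lambda_i X^i_t \mathbf{1}[i \text{ clears at time } t]$, where the $X^i_t \sim \mathrm{Geom}(\lambda_i)$ are fresh and independent with mean $1/\lambda_i$. Once queues in $S_1$ hold priority, each server $j$ clears a packet of some $i \in S_1$ with probability $\mu_j(1 - \prod_{i \in S_1}(1-p_{ij}))$, so the expected drift of $Y_t$ tends to $\lambda(S_1) - \alpha(S_1\vert \emptyset) = \lambda(S_1) g_1$. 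The subexponential tails of the Geometric increments allow a Bernstein-type martingale concentration yielding summable failure probabilities, so Borel--Cantelli gives $Y_t/t \to \lambda(S_1)g_1$ almost surely.

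Uniformity within $S_1$ then follows from the Closure Lemma (\Cref{lem:closure}). Take any cluster point $(R_i)_{i \in S_1}$ of $(T^i_t/t)$ in $[0,1]^{S_1}$; the aggregate conclusion forces $\sum_{i \in S_1}\lambda_i R_i = \lambda(S_1)g_1$ along such a subsequence. Let $R^* = \max_{i \in S_1} R_i$ and $S' = \{i \in S_1 : R_i = R^*\}$. The key sub-claim is that along this subsequence the clearing rate of $S'$ converges to $\alpha(S'\vert \emptyset)$, so $R^* = 1 - f(S'\vert \emptyset)$. If $S' \subsetneq S_1$, then by \Cref{lem:closure} and the maximality of $S_1$ as a minimizer, $f(S'\vert \emptyset) > f_1$, forcing $R^* < g_1$; combined with $R_i \leq R^*$ and the weighted-average identity, this is a contradiction. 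Hence every cluster point satisfies $R_i = g_1$ for all $i$, so the limits exist and equal $g_1$ almost surely.

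The main obstacle is rigorously justifying the ``eventually priority'' step, both for $U_{k-1}$ over $[n]\setminus U_{k-1}$ in the inductive step and for $S'$ over $S_1\setminus S'$ in the uniformity argument; a priori, ages can oscillate and inversions may occur infinitely often. I would strengthen the inductive hypothesis by proving simultaneously that $T^i_t \geq (g_k - \varepsilon)t$ for $i \in S_k$ and $T^j_t \leq (g_k + \varepsilon)t$ for $j \notin U_k$, outside an event of summable probability. The strict gap from \Cref{lem:monotonicity} then rules out inter-group inversions for all but finitely many $t$ via Borel--Cantelli, and an analogous uniform estimate within $S_1$ resolves the subsequential priority issue. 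A secondary subtlety is that queues in lower groups may occasionally run empty (so the aggregate does not rise by exactly $\lambda(S_1)$), but this only decreases effective congestion at servers and can be absorbed into the concentration error terms.
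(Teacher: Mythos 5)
Your overall architecture matches the paper's: a weighted-aggregate lower bound on aging for each group, an upper bound on the maximum, a sandwich via \Cref{fact:dansineq} to force uniformity, induction over groups using the strict gap from \Cref{lem:monotonicity} to establish eventual priority, and Borel--Cantelli throughout. However, there is a genuine gap at the single hardest point of the argument. Your base case needs the \emph{upper} bound on how fast the oldest queue can age, and you only ever assert it: the claim that the drift of $Y_t$ ``tends to'' $\lambda(S_1)g_1$ is two-sided, and while the direction $\mathbb{E}[\text{packets cleared by }S_1]\leq \alpha(S_1\vert\emptyset)$ holds unconditionally (giving the aging lower bound), the reverse direction requires that the queues in $S_1$ actually hold priority, which is precisely what is in question. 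Your cluster-point argument does not escape this: the ``key sub-claim'' that the clearing rate of $S'$ converges to $\alpha(S'\vert\emptyset)$ requires controlling priorities along the \emph{entire} trajectory, whereas a subsequential limit of $T^i_t/t$ only constrains the ages at the subsequence times; between those times the ordering can invert, and the time-averaged clearing rate over $[0,t_k]$ is not determined by the age profile near $t_k$. In your final paragraph you propose to fix this by proving $T^j_t\leq (g_k+\varepsilon)t$ for $j\notin U_k$ outside a summable-probability event --- but that statement \emph{is} \Cref{prop:upperbound}, which the paper proves via a delicate induction on the number of near-maximal queues (\Cref{lem:decreaseage}), chaining windows of geometrically growing length in a win--win case analysis to handle the unresolved priorities \emph{within} the set of old queues. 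No mechanism for that step appears in your proposal, and it does not follow from Azuma--Hoeffding or Bernstein bounds alone.

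A secondary error: you claim that $S'\subsetneq S_1$ together with \Cref{lem:closure} forces $f(S'\vert\emptyset)>f_1$. This is false --- $S_1$ is the union of all tight subsets and may contain many proper tight subsets with $f(S'\vert\emptyset)=f_1$ (this multiplicity is a central difficulty in \Cref{section:poa}; see \Cref{fig:fig1}). Your contradiction survives in that case only because $R^*=g_1$ then combines with the weighted-average identity to force all $R_i=g_1$ anyway, but as written the case split is wrong. Finally, note that the per-queue lower bound $T^i_t\geq(g_k-\varepsilon)t$ you propose to carry in the induction is not directly accessible by concentration on individual queues; in the paper it is only obtained \emph{after} combining the aggregate lower bound with the maximum upper bound via the sandwich, so strengthening the inductive hypothesis in the way you suggest does not remove the dependence on the missing maximum bound.
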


Before providing the proof, we give an overview of the details: the high-level idea is to show that this identity holds for all $i\in S_1$, then $S_2$, and so on. The first step is showing that the maximum queue age grows by at most the desired rate on each long-enough window with high probability (\Cref{prop:upperbound}); this can be done using the Azuma-Hoeffding inequality and other general concentration bounds. The key insight is that if a subset of much older queues $S$ is likely to have priority on a long window of length $w$, the quantity $w\cdot f(S_1)\cdot \lambda(S)$ is a \emph{lower bound} on the expected number of packets cleared \emph{collectively} by $S$ on this window by definition of $S_1$. In the case that there is a unique very old queue, this will be enough to establish an upper bound on the aging rate because each packet cleared decreases the age by $1/\lambda$ in expectation. The analysis gets more complicated when there are multiple old queues, as while we know these queues \emph{collectively} have priority over all young queues, we must argue about priorities \emph{within} this subset to bound the growth of the maximum queue age. We deal with this by induction by carefully chaining together large windows to obtain a win-win analysis.

Once we have established this upper bound on aging on all queues, we will use Azuma-Hoeffding to argue the \emph{average} queue in $S_1$ ages at a rate of at least $r(S_1)$ almost surely (\Cref{prop:lowerbound}). Combined with the upper bound, this will allow us to conclude that, because the average queue and oldest queue in $S_1$ ages at the desired rate almost surely, \emph{all} queues in $S_1$ must age at this rate almost surely. To extend this analysis to lower groups $S_2$, etc, we will use a similar analysis to show that the maximum age over every queue not in $S_1$ grows at most like $r(S_2)$. Then, because we know that every queue in $S_1$ grows by a $r(S_1)>r(S_2)$ rate, almost surely, eventually every queue in $S_1$ will be much older that every queue not in $S_1$, giving priority. We leverage this fact to show again that the average queue in $S_2$ must grow by at least $r(S_2)$, and therefore every queue in $S_2$ grows at this rate almost surely. The proof for the lower groups $S_3,\ldots$ is completely analogous.

We now carry out this high-level plan, though deferring the more technical intermediate results to \Cref{sec:appendixconv}. Our main intermediate claim asserts that with high probability, the maximum queue age increases at a rate of at most $(1-(1-\epsilon)\cdot f_1)$ on the next $w$ steps for a large enough $w$. In fact, more generally, the following holds:
\begin{proposition}
\label{prop:upperbound}
Fix $\epsilon>0$. For any integer $a\in \mathbb{N}$, let $w=a\cdot \lceil \frac{6}{\epsilon}\rceil^{n-1}$. Suppose it holds at time $t$ that $\max_{i\in [n]} T^i_t\geq w\cdot f_1$. Then 
\begin{equation*}
    \max_{i\in [n]} T^i_{t+w} -\max_{i\in [n]} T^i_t \leq (1-(1-\epsilon)\cdot f_1)\cdot w
\end{equation*}
with probability at least $1-C_1\exp(-C_2a)$, where $C_1,C_2>0$ are absolute constants depending only on $n,\epsilon,\bm{\lambda}, \bm{\mu},\mathbf{p}$, but not on $a$.

More generally, for each $s\geq 1$, if $\max_{i\not\in U_{s-1}} T^i_t\geq w\cdot f_s$, then 
\begin{equation*}
    \max_{i\not\in U_{s-1}} T^i_{t+w} -\max_{i\not\in U_{s-1}} T^i_t \leq (1-(1-\epsilon)\cdot f_s)\cdot w
\end{equation*}
with probability at least $1-C_1\exp(-C_2a)$, where $C_1,C_2>0$ are absolute constants depending only on $n,\epsilon,\bm{\lambda}, \bm{\mu},\mathbf{p}$, but not on $a$.
\end{proposition}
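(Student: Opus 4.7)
The plan is to prove a stronger statement by induction on the size of a subset $Q\subseteq [n]\setminus U_{s-1}$ that retains priority over its complement throughout the window: if $\max_{i\in Q} T^i_t \geq w_k f_s$ with $w_k = a\cdot K^{k-1}$, $K=\lceil 6/\epsilon\rceil$, and $k=|Q|$, then the max age of $Q$ grows by at most $(1-(1-\epsilon)f_s)w_k$ with probability $1-C_1 e^{-C_2 a}$. The proposition is the specialization $Q=[n]\setminus U_{s-1}$. For the general $s$ case, the priority assumption of $U_{s-1}$ over lower queues is inherited from the outer induction on levels (where higher groups provably age faster), and the service rates are correspondingly discounted to $\mu'_j = \mu_j\prod_{i\in U_{s-1}}(1-p_{ij})$ when evaluating $\alpha(Q\vert U_{s-1})$.

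The base case $|Q|=1$ is handled by direct concentration: a single old queue sends every step, so the number $N$ of successful clearances over $w=a$ steps concentrates around $\alpha(\{i\}\vert U_{s-1})\cdot w = f(\{i\})\lambda_i w \geq f_s \lambda_i w$ by Hoeffding. Each clearance advances $\widetilde T^i$ by an independent $\mathrm{Geom}(\lambda_i)$ of mean $1/\lambda_i$ (the arrival-gap distribution is independent of clearance success), so the total age decrease is a truncated sum of these geometrics that concentrates around $N/\lambda_i \geq (1-\epsilon/2)f_s w$ via a second Hoeffding bound (after a standard truncation to control the geometric tail). For the inductive step with $k>1$, the algebraic workhorse is the $\bm\lambda$-weighted potential $\Phi_{t'}=\sum_{i\in Q}\lambda_i T^i_{t'}$. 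A short calculation gives the one-step drift
\[
\mathbb E[\Phi_{t'+1}-\Phi_{t'}\mid\mathcal F_{t'}] \;=\; \lambda(Q)\;-\;\mathbb E[\text{packets cleared from }Q\text{ at step }t']\;\leq\;\lambda(Q)(1-f(Q))\;\leq\;\lambda(Q)(1-f_s),
\]
using that all queues in $Q$ send (their ages are positive), $Q$ has priority, and $f_s=\min_S f(S)\le f(Q)$; the key cancellation is that $\lambda_i$ times the expected age decrease per clearance of queue $i$ equals exactly $1$. Azuma-Hoeffding (with per-step bounded differences $O(1/\lambda_{\min})$) then gives $\Phi_{t+w_k}-\Phi_t \leq (1-f_s)\lambda(Q)w_k + O(\sqrt{w_k})$ with failure probability $e^{-\Omega(a)}$.

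To convert this $\bm\lambda$-weighted \emph{average} bound into a bound on the \emph{max}, I split $[t,t+w_k]$ into $K$ sub-intervals of length $w_{k-1}=aK^{k-2}$ and apply a win-win analysis at each sub-interval: either all queues in $Q$ remain within a carefully chosen threshold $\tau$ of the argmax at the sub-interval endpoints (\emph{bunched} case), in which case the $\Phi$ bound on that sub-interval combined with $\max\le \Phi/\lambda(Q)+\tau$ transfers the average bound to the max up to an additive $\tau$; or a strict subset $Q'\subsetneq Q$ containing the argmax is separated from $Q\setminus Q'$ by more than $\tau$ (\emph{unbunched} case), in which case $Q'$ inherits priority throughout the sub-interval and the inductive hypothesis on $Q'$ (of size $<k$) at scale $w_{k-1}$ applies. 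The main obstacle is calibrating $\tau$ simultaneously against two competing requirements: $\tau$ must be large enough, on the order of $w_{k-1}/\lambda_{\min}$, to guarantee priority of $Q'$ against large age decreases (controlled via concentration on truncated sums of $\mathrm{Geom}(\lambda_i)$ clearance decrements over the sub-interval), yet small enough that the cumulative $K\cdot\tau$ contribution across sub-intervals stays within the $\epsilon f_s w_k$ growth budget. The choice $K=\lceil 6/\epsilon\rceil$ provides just enough headroom for this calibration, with the dependence on $\bm\lambda,\bm\mu,n,\epsilon$ absorbed into the constants $C_1,C_2$. A final union bound over the $K$ sub-intervals and the two cases, together with the exponentially stronger failure probability $e^{-\Omega(aK)}$ furnished by inductive reductions on shorter windows, delivers the stated bound $1-C_1\exp(-C_2 a)$.
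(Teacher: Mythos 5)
Your overall architecture is the same as the paper's (Lemma~\ref{lem:decreaseage}): an induction on the cardinality of the ``near-max'' set, windows of length $a\lceil 6/\epsilon\rceil^{k-1}$ split into $\lceil 6/\epsilon\rceil$ sub-blocks, a win--win dichotomy between a bunched case (transfer a $\bm{\lambda}$-weighted average bound to the max) and an unbunched case (recurse on a smaller set at the shorter scale), with concentration supplied by Azuma--Hoeffding and geometric tail bounds. But there is a genuine gap in your calibration of the separation threshold $\tau$, and as stated the two requirements you place on it are incompatible. You need $\tau\gtrsim w_{k-1}/\lambda_{\min}$ so that $Q'$ provably retains priority over $Q\setminus Q'$ for the whole sub-interval (a queue clearing up to $w_{k-1}$ packets descends by roughly $w_{k-1}/\lambda_i$ in timestamp), and you need $K\tau\lesssim \epsilon f_s w_k$, i.e.\ $\tau\lesssim \epsilon f_s w_{k-1}$, to stay within the growth budget. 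These are simultaneously satisfiable only if $\lambda_{\min}\gtrsim 1/(\epsilon f_s)$, which is false in general ($\lambda_{\min}<1$ while $\epsilon f_s<1$). No choice of $K$ depending only on $\epsilon$ rescues this, and the window length is fixed by the statement.

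The root cause is that your inductive hypothesis conditions on $Q$ \emph{retaining priority over its complement throughout the window}, which forces the large threshold. The paper's induction instead assumes only a cardinality bound on the set $I$ of queues within $w f_s$ of the maximum at the \emph{start} of the window, takes the gap threshold to be of order $L f_s(1-\epsilon/2)$ (which does fit the budget, via the $(B-3)$ versus $B$ slack), and handles loss of priority by a coupling: the indicators $X^{(b)}_{i,\ell}$ are replaced by surrogates $Y^{(b)}_{i,\ell}$ with conditional mean at least $f_s\lambda_i$ once some queue in $I$ drops below $M^*-wf_s$; since timestamps only increase, such a queue is already permanently below the target and its subsequent priority is irrelevant. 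You should restructure your induction the same way (cardinality hypothesis plus a stopping/coupling argument) rather than a priority-retention hypothesis. A secondary, fixable issue: the increments of $\Phi_{t'}=\sum_{i\in Q}\lambda_i T^i_{t'}$ are \emph{not} bounded by $O(1/\lambda_{\min})$ --- each clearance contributes an unbounded $\mathrm{Geom}(\lambda_i)$ decrement --- so plain Azuma--Hoeffding does not apply to $\Phi$; you must either truncate (as you do in the base case) or, as the paper does, separate the count of clearances (bounded increments, Azuma) from the geometric decrements (handled by \Cref{prop:geom3}).
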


For \Cref{prop:upperbound} to yield anything useful, we will need a corresponding lower bound that asserts roughly that if groups have separated according to what the algorithm asserts, then the aging rate of the \emph{average} queue in a group grows at the conjectured rate. To that end, we show the following result which shows that, \emph{if} we have the conjectured separation between groups $U_{k-1}$ and $S_k$ (i.e. each queue in the former is significantly older than each queue in the latter), then some weighted combination of the queue ages in $S_k$ (whose significance will prove apparent momentarily) must rise significantly.

\begin{proposition}
\label{prop:lowerbound}
For any $s\geq 1$ and any fixed $\epsilon>0$, the following holds: suppose that at time $t$, it holds that
\begin{equation*}
    \min_{i\in U_s} T^i_t -\max_{i\in S_{s+1}} T^i_t\geq  2\cdot \frac{w}{\lambda_n}.
\end{equation*}
Then with probability $1-A\exp(-Bw)$ where $A,B>0$ are absolute constants not depending on $w$, we have
\begin{equation*}
    \sum_{i\in S_{s+1}} \lambda_i T^i_{t+w}-\sum_{i\in S_{s+1}} \lambda_iT^i_t\geq (1-(1+\epsilon)f_{s+1})\cdot w \cdot \left(\sum_{i\in S_{s+1}} \lambda_i\right).
\end{equation*}

Moreover, for any fixed $\epsilon>0$, with probability at least $1-A\exp(-Bw)$ it holds that
\begin{equation*}
    \sum_{i\in S_1} \lambda_iT^i_w \geq (1-(1+\epsilon)f_1)\cdot w\cdot \left(\sum_{i\in S_1} \lambda_i\right).
\end{equation*}
\end{proposition}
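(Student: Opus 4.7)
The plan is to reduce everything to a one-step drift computation for the weighted sum $W_t := \sum_{i \in S_{s+1}} \lambda_i T^i_t$, followed by martingale concentration over the window of length $w$. The key identity is that when queue $i \in S_{s+1}$ has a packet at step $u$, $T^i_{u+1} - T^i_u = 1 - X^i \mathbbm{1}[\text{cleared}_i^u]$ with $X^i \sim \text{Geom}(\lambda_i)$ drawn afresh and independently of the past. Multiplying by $\lambda_i$ and using $\mathbb{E}[\lambda_i X^i] = 1$ gives the one-step conditional drift $\mathbb{E}[W_{u+1} - W_u \mid \mathcal{F}_u] = \lambda(S_{s+1}) - \mathbb{E}[\text{total packets cleared from } S_{s+1} \text{ at step } u \mid \mathcal{F}_u]$. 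Thus the whole argument reduces to upper-bounding the expected clearings by $f_{s+1} \lambda(S_{s+1})$ at each step, and then controlling the deviation from the accumulated expectation.

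For the inductive part, the hypothesis $\min_{i \in U_s} T^i_t - \max_{i \in S_{s+1}} T^i_t \geq 2w/\lambda_n$ ensures that $U_s$ holds priority over $S_{s+1}$ at time $t$. This gap is preserved for the whole window except on an exponentially small event: each decrement to the age of an $i \in U_s$ is a fresh $\text{Geom}(\lambda_i)$ with mean at most $1/\lambda_n$, and at most $w$ such decrements occur per queue, so a standard sum-of-geometrics tail bound keeps the total drop below $(1+o(1))w/\lambda_n$---this is essentially one of the appendix ingredients already in play. Given priority throughout, the expected per-step clearings from $S_{s+1}$ is exactly $\alpha(S_{s+1} \mid U_s) = f_{s+1} \lambda(S_{s+1})$, giving drift $(1 - f_{s+1}) \lambda(S_{s+1})$. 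The ``moreover'' base case needs no priority hypothesis: any packet cleared from $S_1$ at server $j$ requires some $i \in S_1$ both to send to $j$ and to win priority there, so the per-step clearing from $S_1$ is bounded unconditionally by $\alpha(S_1 \mid \emptyset) = f_1 \lambda(S_1)$ and the same drift bound holds.

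To convert drift into concentration, form the compensator martingale $M_u = W_{t+u} - W_t - \sum_{r < u} \mathbb{E}[W_{t+r+1} - W_{t+r} \mid \mathcal{F}_{t+r}]$, stopped at the first failure of the priority gap. Its increments are bounded linear combinations of centered Bernoullis and centered $\text{Geom}(\lambda_i)$'s, hence subexponential with parameters depending only on $\bm{\lambda}, \bm{\mu}, \mathbf{p}$. A Freedman/Bernstein-type inequality for martingales with subexponential increments (or, equivalently, truncating each $X^i$ at $O(\log w / \lambda_n)$ and applying Azuma--Hoeffding, absorbing the negligible truncation error into $\epsilon$) yields $|M_w| \leq \epsilon f_{s+1} \lambda(S_{s+1}) w$ with failure probability $A e^{-Bw}$. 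A union bound against the gap-maintenance failure and combination with the drift lower bound then gives the claim.

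The hard part is coordinating two rare-event analyses---priority-gap maintenance and weighted-sum concentration---on a single exponentially good event; the stopping-time device above is what makes this clean. A secondary subtlety is that a queue in $S_{s+1}$ could momentarily be idle, so the drift identity does not hold verbatim for that queue at that step. However, idleness only \emph{decreases} the total clearings from $S_{s+1}$ below the $\alpha(S_{s+1} \mid U_s)$ bound, so it can only improve the lower bound on the drift of $W_t$, and no separate case analysis is required.
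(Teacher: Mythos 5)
Your proposal is correct and follows the same essential strategy as the paper: bound the expected per-step clearings from the group by $\alpha(\cdot\vert U_s)=f_{s+1}\lambda(S_{s+1})$ (using that idleness and loss of priority only reduce clearings), show the $2w/\lambda_n$ gap keeps $U_s$'s priority for the whole window except on an event of probability $\exp(-\Theta(w))$ via geometric tail bounds, and then concentrate over the window. The only real divergence is in how the concentration is packaged. The paper decouples the two sources of randomness: it pre-samples the i.i.d.\ geometric jump sizes and controls all their partial sums uniformly via \Cref{prop:geom3}, and separately applies Azuma--Hoeffding to the bounded clearing indicators $\sum_{i}X_{i,t}$; this avoids ever forming a martingale with unbounded increments. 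Your single compensated martingale with conditionally subexponential increments plus a Freedman/Bernstein-type inequality also works and is arguably more streamlined, but your parenthetical ``equivalent'' alternative---truncating each $X^i$ at $O(\log w/\lambda_n)$ and applying Azuma--Hoeffding---does not deliver the claimed $1-A\exp(-Bw)$ guarantee: with up to $nw$ geometric draws in the window, a truncation level of $O(\log w)$ fails with probability only polynomially small in $w$, and raising the truncation level to restore an exponential tail destroys the Azuma bound. So either keep the subexponential Bernstein route or adopt the paper's decoupling; one of your two options must be dropped.
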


With these two results in hand, we may return to the proof of \Cref{thm:gamemain}.

\begin{proof}[Proof of \Cref{thm:gamemain}]
We will show the desired statement holds for each $i\in S_1$, then $S_2$, and so on. We first treat the case that the last outputted group $S_k$ satisfies $g_k=0$, or equivalently that $f_k\geq 1$.

Fix $\epsilon>0$ and partition time into consecutive windows of size $w_{\ell}=\ell\cdot \lceil\frac{6}{\epsilon}\rceil^{n-1}$. Let $W_{\ell}=\sum_{q=1}^{\ell-1} w_{q}$ be the time period at the beginning of the $\ell$th window, and note that $w_{\ell}=\Theta(W_{\ell}^{1/2})$.

Consider the following events for $\ell=1,2,\ldots$
\begin{gather*}
    A_{\ell}=\left\{\max_{i\in [n]\setminus U_{k-1}} T^i_{W_{\ell+1}} -\max_{i\in [n]\setminus U_{k-1}} T^i_{W_{\ell}} > (1-(1-\epsilon)\cdot f_{k})\cdot w_{\ell}\right\}\\
    B_{\ell}=\left\{\max_{i\in [n]\setminus U_{k-1}} T^i_{W_{\ell}}\geq w_{\ell}\cdot f_{k}\right\}\\
    C_{\ell} = A_{\ell}\cap B_{\ell}.
\end{gather*}
Clearly, $\Pr(C_{\ell})\leq \Pr(A_{\ell}\vert B_{\ell})$. But by \Cref{prop:upperbound}, we know that for some constants $C_1,C_2>0$ independent of $\ell$, that 
\begin{equation*}
    \Pr(A_{\ell}\vert B_{\ell})\leq C_1\exp(-C_2\cdot \ell).
\end{equation*}
Therefore, we have that
\begin{equation*}
    \sum_{\ell=1}^{\infty} \Pr(C_{\ell})\leq \sum_{\ell=1}^{\infty} C_1\exp(-C_2\cdot \ell)<\infty.
\end{equation*}
The first Borel-Cantelli lemma (\Cref{lem:borel}) thus implies that almost surely at most finitely many of the $C_{\ell}$ occur, or equivalently, almost surely for all but finitely many of the $\ell$, either
\begin{equation*}
    \max_{i\in [n]\setminus U_{k-1}} T^i_{W_{\ell+1}} -\max_{i\in [n]\setminus U_{k-1}} T^i_{W_{\ell}} \leq  (1-(1-\epsilon)\cdot f_{k})\cdot w_{\ell}
\end{equation*}
or 
\begin{equation*}
    \max_{i\in [n]\setminus U_{k-1}} T^i_{W_{\ell}}< w_{\ell}\cdot f_{k}.
\end{equation*}
 Observe that for each of the intervals where the latter holds, the value during the interval is at most $w_{\ell}\cdot f_{k} + w_{\ell+1}=O(W_{\ell}^{1/2})$. In particular, it is not difficult to see that almost surely $\max_{i\in [n]\setminus U_{k-1}} T^i_{W_{\ell}}$ is either $o(W_{\ell})$, in which case we are done, or grows by at most a rate of $(1-(1-\epsilon)\cdot f_{k})\cdot w_{\ell}$. Either way, as $\epsilon>0$ was arbitrary, we may take $\epsilon\to 0$ to deduce the desired result that almost surely\footnote{Technically, for each $\epsilon>0$, this implies it only for $t$ of the form $t=W_{\ell}$. But when $W_{\ell}\leq t< W_{\ell+1}$, $T^i_t$ cannot be more than $w_{\ell}$ from the values at $W_{\ell}$, so the difference in the numerator is $O(w_{\ell})=O(t^{1/2})=o(t)$ and thus vanishes in the limit.}
 
 \begin{equation*}
     \limsup_{t\to \infty}\frac{\max_{i\in [n]\setminus U_{k-1}} T^i_{t}}{t}=0=g_{k},
 \end{equation*}
 using $f_{k}\geq 1$. As ages of queues are  nonnegative, the lower bound of $0$ is trivial.

Now we turn to the rest of the groups, and we now assume that $g_{k}>0$
. We do this inductively. For $S_1$, fix $\epsilon>0$ and define
\begin{equation*}
    A_t=\left\{\sum_{i\in S_1} \lambda_iT^i_t < (1-(1+\epsilon)f_1)\cdot t\cdot \left(\sum_{i\in S_1} \lambda_i\right)\right\}.
\end{equation*}
By \Cref{prop:lowerbound}, we know $\Pr(A_t)\leq A\exp(-Bt)$ for some constants $A,B>0$ independent of $t$. Therefore,
\begin{equation*}
    \sum_{t=1}^{\infty} \Pr(A_t)<\infty,
\end{equation*}
from which the Borel-Cantelli lemma implies that almost surely, for all but finitely many $t$,
\begin{equation*}
    \sum_{i\in S_1} \lambda_iT^i_t\geq (1-(1+\epsilon)f_1)\cdot t\cdot \left(\sum_{i\in S_1} \lambda_i\right).
\end{equation*}
Taking $\epsilon\to 0$, we obtain almost surely
\begin{equation}
\label{eq:averagelb}
    \liminf_{t\to \infty} \frac{\sum_{i\in S_1} \lambda_iT^i_t}{t}\geq g_1\cdot \left(\sum_{i\in S_1} \lambda_i\right).
\end{equation}
Next, note that deterministically, we have from \Cref{fact:dansineq}
\begin{equation}
\label{eq:fracminmax}
    \min_{i\in S_1} T^i_t\leq \frac{\sum_{i\in S_1} \lambda_iT^i_t}{\sum_{i\in S_1} \lambda_i}\leq \max_{i\in S_1} T^i_t.
\end{equation}
In particular, we deduce that almost surely,
\begin{equation}
\label{eq:S1lb}
    \liminf_{t\to \infty} \frac{\max_{i\in S_1} T^i_t}{t}\geq g_1>0.
\end{equation}
For the upper bound, let the $w_{\ell}$ and $W_{\ell}$ be as before, and now define
\begin{gather*}
    A_{\ell}=\left\{\max_{i\in [n]} T^i_{W_{\ell+1}} -\max_{i\in [n]} T^i_{W_{\ell}} > (1-(1-\epsilon)\cdot f_1)\cdot w_{\ell}\right\}\\
    B_{\ell}=\left\{\max_{i\in [n]} T^i_{W_{\ell}}\geq w_{\ell}\cdot f_1\right\}\\
    C_{\ell} = A_{\ell}\cap B_{\ell}.
\end{gather*}
Again, $\Pr(C_{\ell})\leq \Pr(A_{\ell}\vert B_{\ell})$. By \Cref{prop:upperbound}, a now routine application of the Borel-Cantelli lemma implies that almost surely, for all but finitely many $\ell$, either
\begin{equation*}
    \max_{i\in [n]} T^i_{W_{\ell+1}} -\max_{i\in [n]} T^i_{W_{\ell}} < (1-(1-\epsilon)\cdot f_1)\cdot w_{\ell}
\end{equation*}
or
\begin{equation*}
    \max_{i\in [n]} T^i_{W_{\ell}}< w_{\ell}\cdot f_1.
\end{equation*}
But the latter event cannot happen infinitely often with positive probability, as this would imply $\max_{i\in [n]} T^i_{W_{\ell}}=o(W_{\ell})$ infinitely often with nonzero probability, which violates (\ref{eq:S1lb}). Therefore, it must be the case that almost surely, for all but finitely many $\ell$, the former event holds. This implies that almost surely
\begin{equation*}
    \limsup_{t\to \infty} \frac{\max_{i\in [n]} T^i_{t}}{t}\leq (1-(1-\epsilon)\cdot f_1);
\end{equation*}
taking $\epsilon\to 0$ implies that 
\begin{equation*}
    \limsup_{t\to \infty} \frac{\max_{i\in [n]} T^i_{t}}{t}\leq g_1.
\end{equation*}
As clearly the left side is an upper bound for the $\limsup$ of only those queues in $S_1$, almost surely
\begin{equation*}
    \limsup_{t\to \infty} \frac{\max_{i\in S_1} T^i_{t}}{t}\leq g_1.
\end{equation*}
Combining this with (\ref{eq:S1lb}), we finally deduce that almost surely
\begin{equation*}
    \lim_{t\to \infty} \frac{\max_{i\in S_1} T^i_{t}}{t}= g_1.
\end{equation*}
Finally, using (\ref{eq:averagelb}) and (\ref{eq:fracminmax}), we can also conclude that almost surely
\begin{equation*}
    \lim_{t\to \infty} \frac{\min_{i\in S_1} T^i_{t}}{t}= g_1.
\end{equation*}
Recall that $r_i(\mathbf{p})=g_1$ for all $i\in S_1(\mathbf{p})$ by definition of $g_1$. This proves the theorem for all queues in $S_1$.

We now show how to extend this inductively to higher values of $k$ with $g_{k}>0$. Suppose that we have shown for all $i\in U_{k-1}$ that the desired almost sure limit holds, and now consider $S_{k}$. A completely analogous argument using the windows $w_{\ell}$ as above with \Cref{prop:upperbound} via the Borel-Cantelli lemma implies that almost surely
\begin{equation}
\label{eq:Sjub}
    \limsup_{t\to \infty} \frac{\max_{i\in [n]\setminus U_{k-1}}} T^i_t{t}\leq g_{k}.
\end{equation}
Now, with these same windows, fix $\epsilon>0$ and let
\begin{gather*}
    A_{\ell} = \left\{\sum_{i\in S_{k}} \lambda_i T^i_{W_{\ell+1}}-\sum_{i\in S_{k}} \lambda_iT^i_{W_{\ell}}< (1-(1+\epsilon)f_{k})\cdot w_{\ell} \cdot \left(\sum_{i\in S_{k}} \lambda_i\right)\right\}\\
    B_{\ell} = \left\{\min_{i\in U_{k-1}} T^i_{W_{\ell}} -\max_{i\in S_{k}} T^i_t\geq  2\cdot \frac{w_{\ell}}{\lambda_n}\right\}\\
    C_{\ell} = A_{\ell}\cap B_{\ell}.
\end{gather*}
Another completely analogous application of \Cref{prop:lowerbound} and the Borel-Cantelli lemma implies that almost surely, at most finitely many of the $C_{\ell}$ occur. That is, almost surely, for all but finitely many $\ell$, either 
\begin{equation*}
    \sum_{i\in S_{k}} \lambda_i T^i_{W_{\ell+1}}-\sum_{i\in S_{k}} \lambda_iT^i_{W_{\ell}}\geq (1-(1+\epsilon)f_{k})\cdot w_{\ell} \cdot \left(\sum_{i\in S_{k}} \lambda_i\right)
\end{equation*}
or 
\begin{equation*}
    \min_{i\in U_{k-1}} T^i_{W_{\ell}} -\max_{i\in S_{k}} T^i_t<  2\cdot \frac{w_{\ell}}{\lambda_n}.
\end{equation*}
But the latter event cannot happen infinitely often with any nonzero probability by virtue of the inductive hypothesis and (\ref{eq:Sjub}), as $g_{k-1}>g_{k}$ by \Cref{lem:monotonicity}, which implies that these timestamps cannot be so close infinitely often. Therefore, it must be the case that for all but finitely many of the $\ell$, the former event holds. As usual, this immediately implies that 
\begin{equation*}
    \liminf_{t\to \infty}\frac{\sum_{i\in S_{k}} \lambda_i T^i_{t}}{t}\geq (1-(1+\epsilon)f_{k})\left(\sum_{i\in S_{k}} \lambda_i\right).
\end{equation*}
Again taking $\epsilon\to 0$ thus implies that almost surely
\begin{equation*}
    \liminf_{t\to \infty}\frac{\sum_{i\in S_{k}} \lambda_i T^i_{t} }{t}\geq  g_{k}\left(\sum_{i\in S_{k}} \lambda_i\right),
\end{equation*}
which again coupled with (\ref{eq:Sjub}) and \Cref{fact:dansineq} yields that almost surely
\begin{equation*}
    \lim_{t\to \infty} \frac{\min_{i\in S_{k}} T^i_{t}}{t}=\lim_{t\to \infty} \frac{\max_{i\in S_{k}} T^i_{t}}{t}=  g_{k}.
\end{equation*}
The extension to all $i\in S_{k}$ follows in the same manner as before by comparing with the average.

This completes the proof of the theorem.
\end{proof}

Observe that \Cref{thm:gamemain} rather strongly and explicitly characterizes the \emph{linear} almost sure asymptotic growth rates of each queue for any choices of randomizations. Our main result in \Cref{thm:ebound1} will show that, with a small slack in the system capacity, each queue will be guaranteed \emph{sublinear} asymptotic growth almost surely in any equilibrium. While our objective function emphasizes the physical interpretation for each queue as asymptotic linear growth rate, these incentives impose that queues are indifferent between sublinear growth rates. One could instead define the game just using the $f_k$ quantities directly, rather than taking the max with $0$ as is needed to argue about the asymptotic growth rates via $r$. If queues started out equally backed up, the $f_k$ quantities measure the linear speed at which their ages descend to zero. In this setting, we provide the following stronger conclusion, whose proof is deferred to \Cref{sec:appendixconv}:

\begin{thm}
\label{thm:gamess}
Fix $\mathbf{p}$ and suppose that for some group $S_k$ output by the algorithm, $f_k>1$, so that $1-f_k<0$. Then, for each $i\in S_k$, $T^i_t$ is strongly stable.
\end{thm}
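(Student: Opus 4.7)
The plan is to deduce strong stability from a negative drift argument applied to $M_t := \max_{i\in [n]\setminus U_{k-1}} T^i_t$, since $0 \le T^i_t \le M_t$ for all $i\in S_k$, so any uniform moment bound for $M_t$ transfers immediately to each $T^i_t$ in $S_k$. The hypothesis $f_k > 1$ means that when $M_t$ is sufficiently large, the collective service rate available to $[n]\setminus U_{k-1}$ strictly exceeds their collective arrival rate, and we will extract a true (negative) drift from this, rather than the just-sublinear growth $r = 0$ guaranteed by \Cref{thm:gamemain}.

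First, I would fix $\varepsilon > 0$ small enough that $\delta := (1-\varepsilon)f_k - 1 > 0$, which is possible exactly because $f_k > 1$. For any integer $a$, set $w = a\lceil 6/\varepsilon \rceil^{n-1}$. Then \Cref{prop:upperbound} (applied with group index $s = k$) says: conditional on $M_t \ge wf_k$,
\begin{equation*}
    M_{t+w} - M_t \;\le\; (1-(1-\varepsilon)f_k)\, w \;=\; -\delta w
\end{equation*}
with probability at least $1 - C_1 \exp(-C_2 a)$. On the complementary bad event, as well as in the regime $M_t < wf_k$, we use the deterministic bound $M_{t+w} - M_t \le w$, since each $T^i_t$ rises by at most $1$ per step. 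Now choose $a$ large enough that $p := C_1 \exp(-C_2 a)$ satisfies $-\delta w (1-p) + w p \le -\tfrac{\delta}{2} w$. This yields a genuine one-step negative drift for the process $Y_\ell := M_{\ell w}$ whenever $Y_\ell \ge w f_k$, with upward jumps bounded deterministically by $w$.

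Next, apply a standard Foster-Lyapunov / Hajek-type drift lemma for discrete-time processes (e.g., Hajek's lemma for processes with bounded upward jumps and uniform negative drift above a threshold). The conclusion is that there exist constants $\eta, K > 0$, independent of $\ell$, such that $\mathbb{E}[\exp(\eta Y_\ell)] \le K$ uniformly in $\ell$. To pass from multiples of $w$ to arbitrary times $t$, note $M_t \le Y_{\lfloor t/w\rfloor} + w$ deterministically, so $\mathbb{E}[\exp(\eta M_t)] \le K e^{\eta w}$ uniformly in $t$. This uniform bound on the moment generating function dominates every polynomial moment, so $\mathbb{E}[(T^i_t)^r] \le \mathbb{E}[M_t^r] \le C_r$ for all $r \ge 0$, for constants $C_r$ independent of $t$. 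This is precisely strong stability for each $i \in S_k$.

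The main obstacle is ensuring the drift argument goes through despite two subtleties: on Prop upperbound's failure event, $M_t$ could in principle jump up by as much as $w$, and we have no control over its drops (downward moves from successful clearing can be arbitrarily large but are harmless for upper tail bounds). The first issue is handled by the deterministic upward bound $w$ together with the exponentially small failure probability $p$, which together allow us to absorb the bad event into a still-negative expected drift; the second is not an issue because Hajek-type theorems only require upper tail control on the increments, and asymmetric downward jumps only help. Verifying that the Pemantle-Rosenthal/Hajek hypotheses hold cleanly in our setting (particularly dealing with the conditioning on $M_t \ge w f_k$ versus $M_t < w f_k$ in a single statement) is the most delicate bookkeeping step but is standard.
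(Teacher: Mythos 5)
Your proposal follows essentially the same route as the paper: both reduce to the windowed maximum $Y_\ell = \max_{i\notin U_{k-1}} T^i_{\ell w}$, both extract a genuine negative drift above the threshold $wf_k$ from \Cref{prop:upperbound} by absorbing the exponentially unlikely failure event into the deterministic upward bound of $w$ per window (taking $a$ large), and both then invoke an off-the-shelf drift-to-moments theorem. The only divergence is the final black box: the paper uses the Pemantle--Rosenthal criterion (\Cref{thm:pemantle}, already stated in its appendix), verifying the bounded $p$th-moment condition for all $p$ by noting that a window's increment is dominated in absolute value by a sum of at most $nw$ geometric random variables; you instead invoke a Hajek-type lemma to get uniformly bounded exponential moments, which is nominally stronger but requires an external reference. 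One small imprecision in your write-up: the claim that Hajek-type results need \emph{only} upper-tail control of the increments is not quite right as usually stated --- the standard hypothesis controls $\mathbb{E}[e^{\eta|Y_{\ell+1}-Y_\ell|}\mid\mathcal{F}_\ell]$ (or at least low moments of the downward part), and naively truncating large downward jumps can destroy the drift. Here this is harmless because the downward increments are themselves dominated by geometric sums and hence have bounded exponential moments for small $\eta$, so the two-sided hypothesis holds; you should say this explicitly rather than waving at one-sidedness, exactly as the paper does for its $p$th-moment condition.
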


\section{Price of Anarchy}
\label{section:poa}
Having established the almost sure asymptotic convergence of this system for any fixed strategies and the existence of pure equilibria, we finally turn to the game-theoretic problem of understanding what condition ensures the stability at any equilibrium profile. By  considering
deviations by a queue at the Nash equilibrium to a single other server, one can show that the price of anarchy is always at most $2$, matching the learning bound of \cite{DBLP:conf/sigecom/GaitondeT20}. We show that this factor is loose with patience, and that $\frac{e}{e-1}\approx 1.58$ is the right factor by considering continuous deviations. 

The following simple example shows that this is the best possible constant factor: fix $\epsilon>0$ small and suppose there are $n$ queues and $n$ servers, with $\bm{\lambda}=(1-1/e+\epsilon,\ldots, 1-1/e+\epsilon)$ and $\bm{\mu}=(1,\ldots,1)$, and $\mathbf{p}$ has every queue uniformly mixing among the servers. It is easy to see by symmetry that this system is Nash with $S_1=[n]$, for if a queue deviates from this uniform distribution, this does not change the worst ratio. 
Moreover, for any fixed $\epsilon>0$, as $n\to \infty$, this system becomes unstable. One can check that 
\begin{equation*}
    f(S_1\vert \mathbf{p})= f([n]\vert \mathbf{p})=\frac{\sum_{j=1}^n (1-\prod_{i=1}^n (1-1/n))}{n(1-1/e+\epsilon)}\to \frac{1-1/e}{1-1/e+\epsilon}<1,
\end{equation*}
so that $r(S_1)=\max_i c_i(\mathbf{p})>0$. Our main result asserts that this is the worst case, where every queue is maximally colliding subject to being Nash. Concretely, we prove the following instance-dependent bound from which the claimed factor immediately follows:

\begin{thm}[Main]
\label{thm:ebound1}
Let $\mathbf{p}$ be any Nash equilibrium of $\mathcal{G}$, and let $S_1$ be as defined before. Then\footnote{For any fixed $k$, it is not difficult to determine the optimal value of $\mathbf{x}$ to give the tightest lower bound. It suffices to maximize the numerator, which is concave. By standard KKT conditions at optimality, for all $j,j'\in [m]$ such that $x_j>0$, we must have $
    \mu_j (1-x_j/k)^{k-1} = \mu_{j'} (1-x_{j'}/k)^{k-1}$,
and $x_{\ell}=0$ for all lower indices.}
\begin{equation*}
    f(S_1\vert \mathbf{p}) \geq \min\left\{1\,,\,\min_{k\leq n} \max_{\mathbf{x}\in \mathbb{R}^m_{\geq 0}: \sum_{j=1}^m x_j = k} \frac{\sum_{j=1}^m\mu_j (1-(1-x_j/k)^k)}{\sum_{i=1}^k \lambda_i}\right\}.
\end{equation*}
\end{thm}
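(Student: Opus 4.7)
The plan is to establish this bound via a continuous deformation argument: starting from any Nash profile $\mathbf{p}^*$ with top group $S_1$ satisfying $f_1 := f(S_1\vert \mathbf{p}^*) < 1$, I would deform $\mathbf{p}^*$ toward a symmetric ``target'' profile and show that $f_1$ only decreases along the way, so that bounding $f_1$ at the target furnishes a lower bound at $\mathbf{p}^*$. Let $k := |S_1|$; since $\lambda_1 \ge \cdots \ge \lambda_n$ we have $\lambda(S_1) \le \sum_{i=1}^k \lambda_i$, so it suffices to lower bound $\alpha(S_1\vert \mathbf{p}^*)$ by the numerator of the $k$th term in the outer minimum.

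The key inputs are Nash first-order conditions and AM-GM. For each $i \in S_1$, small perturbations of $p^*_i$ that keep $i$ in $S_1$ vary $c_i = 1-f_1$ smoothly, and the KKT conditions on the simplex yield a constant $C_i$ with $\mu_j \prod_{i'\in S_1\setminus\{i\}}(1-p^*_{i'j}) = C_i$ on the support of $p^*_i$ and $\le C_i$ elsewhere. Setting $\bar{p}_j := \tfrac{1}{k}\sum_{i \in S_1} p^*_{ij}$ and $x^*_j := k\bar{p}_j$ (so $\sum_j x^*_j = k$), AM-GM gives $\prod_{i\in S_1}(1-p^*_{ij}) \le (1-\bar{p}_j)^k$, hence
\begin{equation*}
\alpha(S_1\vert \mathbf{p}^*) \;\ge\; \sum_{j=1}^m \mu_j\bigl(1-(1-\bar{p}_j)^k\bigr) \;=\; \sum_{j=1}^m \mu_j\bigl(1 - (1-x^*_j/k)^k\bigr).
\end{equation*}
Multiplying the Nash first-order conditions across $i \in S_1$, I would then show that $\mu_j^k\prod_i(1-p^*_{ij})^{k-1}$ is constant on the common support, and combine this with the AM-GM step to argue that $\mathbf{x}^* = k\bar{\mathbf{p}}$ satisfies the KKT stationarity $\mu_j(1-x^*_j/k)^{k-1} = \mathrm{const}$ flagged in the footnote, so $\mathbf{x}^*$ achieves the outer maximum.

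The main obstacle—and the reason a continuous deformation is needed rather than a pointwise comparison—is that the Nash incentive constraints come from many possible deviations: queues may also benefit from leaving $S_1$ or from deviations that reshape the subset structure altogether, so the derivative analysis above applies only on the ``interior'' of the $S_1$-preserving region. To tame the combinatorial explosion of constraints, I would invoke \Cref{prop:levels} to reduce the Nash incentives to a small essential family organized by the union/intersection closure of \Cref{lem:closure}. Then, along the deformation $\mathbf{p}(t) = (1-t)\mathbf{p}^* + t\mathbf{q}$ (where $\mathbf{q}$ symmetrizes each queue in $S_1$ to $\bar{\mathbf{p}}$), the minimizing subset $S_1(\mathbf{p}(t))$ may itself change, so monotonicity of $\min_S f(S \vert \mathbf{p}(t))$ is not automatic from AM-GM applied with a fixed subset. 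I would use the sparsified constraints at each transition point to verify that $f$ at the newly tight subset is bounded above by the preceding value, thereby preserving monotonicity throughout a path of \emph{non-equilibrium} profiles. This incentive-tracking along a continuous deformation, rather than the one-shot strategy comparison in smoothness-based price-of-anarchy arguments, is the novel technical component.
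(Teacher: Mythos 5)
Your high-level architecture matches the paper's (deformation from Nash, \Cref{prop:levels} to sparsify incentives, AM--GM at the end), but there is a genuine gap in how you handle the maximum over $\mathbf{x}$. Your AM--GM step applied directly at the Nash point only yields $f(S_1\vert\mathbf{p}^*)\geq g(\mathbf{x}^*)$ for the \emph{specific} vector $\mathbf{x}^*=k\bar{\mathbf{p}}$ inherited from the Nash profile, whereas the theorem requires the bound for the \emph{maximizing} $\mathbf{x}$. You try to close this by arguing $\mathbf{x}^*$ is itself the maximizer via first-order conditions, but this fails on two counts. First, the per-queue KKT condition $\mu_j\prod_{i'\in S_1\setminus\{i\}}(1-p^*_{i'j})=C_i$ on the support of $p^*_i$ need not hold at Nash: a deviation that would increase $f(S_1)$ can be blocked because it decreases $f(S)$ for some \emph{smaller} tight subset $S\ni i$ (this is exactly the phenomenon in \Cref{fig:fig1}, and the reason \Cref{prop:levels} exists). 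Second, even granting those conditions, multiplying them over $i\in S_1$ gives $\mu_j^k\bigl(\prod_i(1-p^*_{ij})\bigr)^{k-1}=\mathrm{const}$ on the common support, and passing from $\prod_i(1-p^*_{ij})$ to $(1-\bar{p}_j)^k$ requires equality in AM--GM, i.e.\ that the profile is already symmetric across $i$ --- which is precisely what is not known. So the stationarity of $\mathbf{x}^*$ for the outer maximization is not established.

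The paper avoids this entirely by fixing an \emph{arbitrary} target $\mathbf{x}$ with $\sum_j x_j=k$ up front and deforming the Nash profile so that the aggregate mass at each server becomes exactly $x_j$, moving mass only from \emph{oversaturated} servers ($\sum_{i\in[k]}p_{ij}>x_j$) to \emph{undersaturated} ones, processed level by level from the top level of \Cref{prop:levels} downward. The monotonicity of $f([k]\vert\cdot)$ for the \emph{fixed} set $[k]=S_1(\mathbf{p}^*)$ at non-Nash intermediate profiles is salvaged precisely because the products $\prod_r(1-p_{rj})$ at oversaturated servers only increase and at undersaturated servers only decrease along the path, so the derivative comparison that held at Nash (for the full-level subset $S\cup V^i_{\leq\ell}=[k]$ supplied by \Cref{prop:levels}) still holds later. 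Your straight-line symmetrization $\mathbf{p}(t)=(1-t)\mathbf{p}^*+t\mathbf{q}$ lacks this oversaturated-to-undersaturated structure, and your concern about the minimizing subset $S_1(\mathbf{p}(t))$ changing is a red herring: only $f$ evaluated at the fixed set $[k]$ needs to be tracked, not $\min_S f(S\vert\mathbf{p}(t))$. Only after the deformation reaches $\sum_i p''_{ij}=x_j$ does AM--GM enter, yielding the bound for that arbitrary $\mathbf{x}$ and hence for the maximizer.
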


\begin{corollary}
\label{thm:ebound}
Let $\mathbf{p}$ be a Nash equilibrium of $\mathcal{G}$, and suppose that for each $1\leq k\leq n$,
\begin{equation*}
    \sum_{j=1}^k \mu_j > \left(\frac{e}{e-1}\right)\sum_{i=1}^k \lambda_i.
\end{equation*}
Then every queue is stable at $\mathbf{p}$. In particular, the price of anarchy of the patient queuing game is exactly $\frac{e}{e-1}$.
\end{corollary}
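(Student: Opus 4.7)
The plan is to deduce the corollary directly from \Cref{thm:ebound1} by evaluating the variational lower bound on $f(S_1\vert \mathbf{p})$ at a cleverly chosen $\mathbf{x}$, and then translating $f(S_1\vert \mathbf{p})\geq 1$ into stability using the algorithmic description of \Cref{sec:algdescription} together with the asymptotic characterization of \Cref{thm:gamemain}.

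For the main (upper-bound) direction, I would fix any $k\leq n$ and take $\mathbf{x}\in \mathbb{R}^m_{\geq 0}$ concentrated uniformly on the top $\min(k,m)$ servers: $x_j=k/\min(k,m)$ for $j\leq \min(k,m)$ and $x_j=0$ otherwise. This satisfies $\sum_j x_j=k$ and yields $x_j/k = 1/\min(k,m)$, so
\[
\sum_{j=1}^m \mu_j\bigl(1-(1-x_j/k)^k\bigr) \;=\; \Bigl(1-(1-\tfrac{1}{\min(k,m)})^k\Bigr)\sum_{j=1}^{\min(k,m)}\mu_j.
\]
The classical inequality $(1-1/a)^a\leq 1/e$ (the sequence is increasing to $1/e$), combined with monotonicity for exponents $k\geq a$, gives $(1-1/\min(k,m))^k \leq 1/e$ in both cases $k\leq m$ and $k>m$. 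Hence the left-hand side is at least $\frac{e-1}{e}\sum_{j=1}^{\min(k,m)}\mu_j$, and under the usual convention $\mu_j=0$ for $j>m$ this is $\tfrac{e-1}{e}\sum_{j=1}^{k}\mu_j$. Combined with the hypothesis $\sum_{j=1}^k \mu_j > \tfrac{e}{e-1}\sum_{i=1}^k \lambda_i$, the ratio in \Cref{thm:ebound1} strictly exceeds $1$ for every $k\leq n$, so taking the minimum and invoking \Cref{thm:ebound1} yields $f(S_1\vert \mathbf{p})\geq 1$.

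Next, I would translate this into stability. Since $S_1$ is by construction the minimizer of $f(\cdot\vert \mathbf{p})$ over nonempty subsets of $[n]$, the bound $f(S_1\vert \mathbf{p})\geq 1$ means \emph{every} nonempty subset of queues has ratio at least $1$. The algorithm of \Cref{sec:algdescription} thus terminates at its first step with $S_1=[n]$ and $r_i(\mathbf{p})=0$ for every $i\in [n]$. By \Cref{thm:gamemain}, $c_i(\mathbf{p})=\lim_{t\to\infty} T_t^i/t = r_i(\mathbf{p})=0$ almost surely, which is exactly the stability criterion of \Cref{defn:dualstable}.

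For matching tightness, I would invoke the symmetric family from the paragraph immediately preceding \Cref{thm:ebound1}: with $n=m$, $\lambda_i=1-1/e+\epsilon$, $\mu_j=1$, and $\mathbf{p}$ uniform, symmetry makes $\mathbf{p}$ a Nash equilibrium, while $f(S_1\vert \mathbf{p})\to (1-1/e)/(1-1/e+\epsilon)<1$ as $n\to\infty$, giving instability. The feasibility slack $\sum\mu_j/\sum\lambda_i = 1/(1-1/e+\epsilon)$ approaches $\tfrac{e}{e-1}$ as $\epsilon\to 0$, so no constant smaller than $\tfrac{e}{e-1}$ can replace it in the hypothesis. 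Applying this in conjunction with the $\alpha$-scaling definition (rescaling $\bm{\lambda}$ so that the slack is driven toward $1$ from above), one obtains instances $\mathcal{G}$ with POA$(\mathcal{G})\to \tfrac{e}{e-1}$, completing the $\sup_{\mathcal{G}}\mathrm{POA}(\mathcal{G})=\tfrac{e}{e-1}$ claim. The main obstacle was already absorbed by \Cref{thm:ebound1}; the pleasant content here is recognizing that the symmetric uniform-over-top-$\min(k,m)$-servers choice of $\mathbf{x}$ mirrors the balls-in-bins worst case and produces exactly the factor $\tfrac{e-1}{e}$ via $(1-1/k)^k\leq 1/e$.
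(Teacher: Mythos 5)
Your proposal is correct and follows essentially the same route as the paper: plug a specific $\mathbf{x}$ into \Cref{thm:ebound1} (the paper uses $x_j=1$ for $j\leq k$, which coincides with your choice when $k\leq m$), apply $(1-1/k)^k<e^{-1}$ to get $f(S_1\vert\mathbf{p})\geq 1$, conclude stability via the algorithm and \Cref{thm:gamemain}, and cite the symmetric example for tightness. Your uniform-over-top-$\min(k,m)$ choice is a slightly more careful treatment of the $k>m$ edge case that the paper glosses over, but it is not a different argument.
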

\begin{proof}
In \Cref{thm:ebound1}, for any $k\leq n$, we may set $x_j=1$ for $1\leq j\leq k$. Note that $(1-x_j/k)^k< e^{-1}$, so if $\bm{\mu}$ majorizes $\bm{\lambda}$ by a factor of at least $\frac{e}{e-1}$, \Cref{thm:ebound1} implies that $f(S_1\vert \mathbf{p})\geq 1$. From \Cref{lem:monotonicity} and \Cref{thm:gamemain}, we conclude that all queues are stable.
\end{proof}

We now prepare for the proof of \Cref{thm:ebound1}. The idea will be to continuously deform the Nash profile towards a highly symmetrized strategy vector while only weakly decreasing $f(S_1)$. At the end of this process, we obtain a lower bound on this value at Nash. To do this deformation and ensure monotonicity of the growth rate, we must at some point use the Nash property. The difficulty in proving the tightness of this example lies in the form of the $f$ functions; recall that as $S_1$ is the set of all queues growing at the fastest rate as the union of all tight subsets, it can have many proper tight subsets, and each queue $i\in S_1$ thus has to locally optimize all of the functions $f(S\vert \mathbf{p})$ with $S\ni i$ \emph{simultaneously} at Nash (see \Cref{fig:fig1} for an interesting example). In particular, whenever a queue $i\in S_1$ is at Nash, one possible deviation may weakly decrease $f(S)$ for some tight subset $S\ni i$, while another deviation may be unprofitable because $f(S')$ weakly decreases for some \emph{different} tight subset $S'\ni i$. That is, each queue may be constrained by multiple different objective functions at Nash, making it difficult to generically argue about \emph{why} any given deviation decreases performance. We overcome this barrier by connecting the incentives for each queue in $S_1$ with the structure guaranteed by \Cref{lem:closure}. More concretely, we reduce the number of constraints we must consider for each queue by showing in \Cref{prop:levels} that there exists a much smaller, ordered set of tight subsets that completely characterizes the incentives of any queue: 

\begin{figure}[h]
\begin{center}
\includegraphics[trim = {0 2.4cm 0 0},scale = .3,clip]{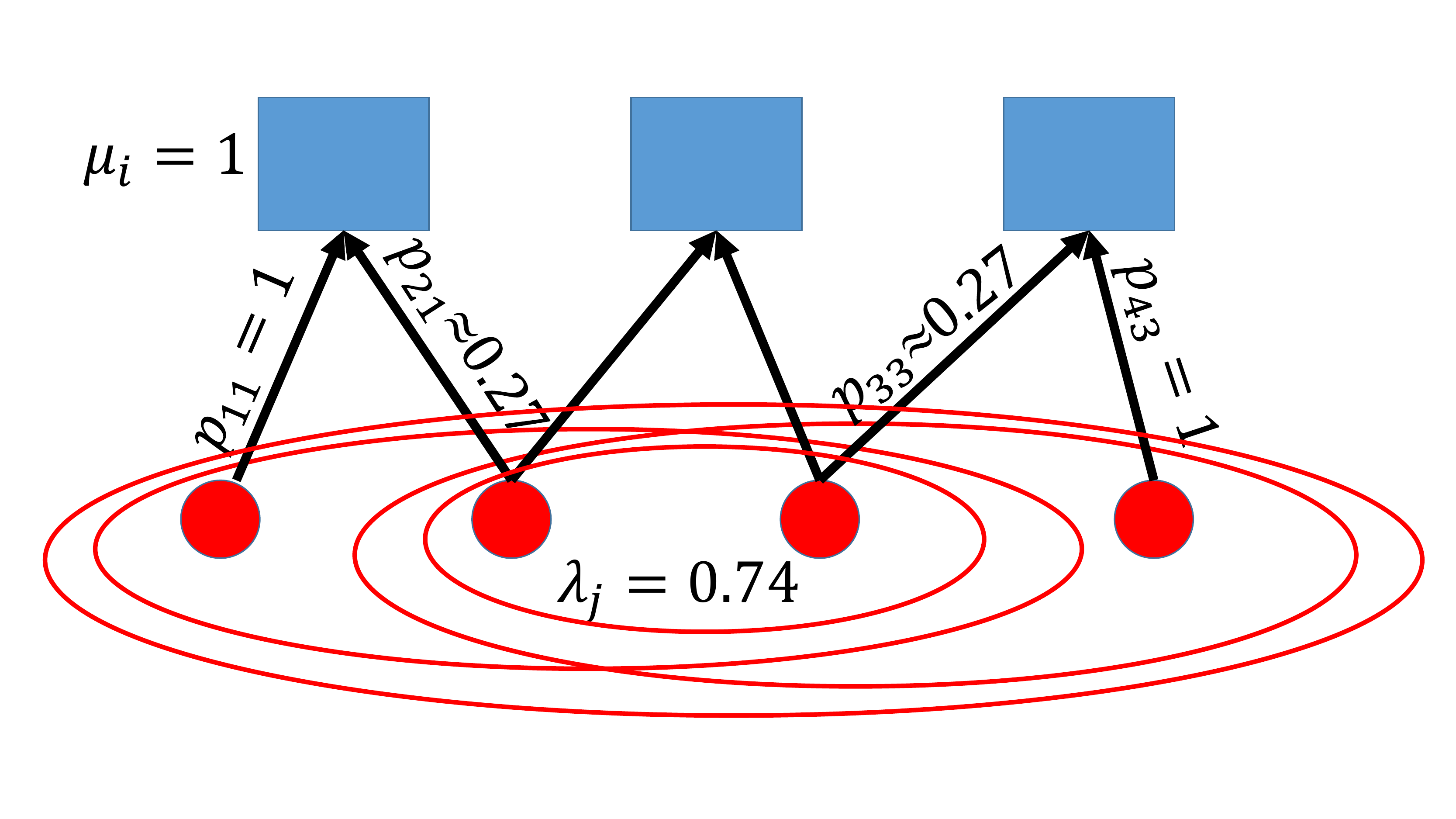}
\end{center}
\caption{\footnotesize Four queues compete for three servers, with maximally tight sets marked.
The two outer queues exclusively send to the outer servers, and the inner queues send to two servers each, as indicated by the figure. One can verify that this profile is Nash and the subsets marked are tight. Consider whether one of the inner queues would want to deviate from the current profile. If she shifts probability to the inner server, the smallest tight set will rise in aging rate. However, if she instead attempts to deviate to one of the outer servers, the rate of all four queues will rise.}
\label{fig:fig1}
\end{figure}

\begin{proposition}
\label{prop:levels}
Let $\mathbf{p}$ be any arbitrary strategy vector by the queues, and without loss of generality, let $[k]$ be the maximal tight subset after relabeling. Then, for some $s>0$, there exists a \textbf{level partition} of $[k]$ into $s$ levels with the following property: if a queue 
$i\in [k]$ belongs to a level-$\ell$ subset, then 
for any deviation by $i$ that shifts probability mass from one server to another and does not increase $f(S)$ for some tight subset $S\ni i$, there exists a tight subset $S'$ containing all queues at all levels $j\leq \ell$ such that $f(S')$ must not increase. 
\end{proposition}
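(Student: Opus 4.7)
The plan is to use the lattice structure of tight subsets from \Cref{lem:closure}---closure under union and non-disjoint intersection---to build a chain $\emptyset = T_0 \subsetneq T_1 \subsetneq \cdots \subsetneq T_s = [k]$ of tight subsets, and define the level partition by $L_\ell := T_\ell \setminus T_{\ell-1}$. Such a chain exists because the tight subsets of $[k]$ are closed under union, so any maximal chain in the tight lattice from a minimal tight subset (extracted using closure under non-disjoint intersection) up to $[k]$ works, and there are at most $k$ levels.

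To verify the property, I would fix $i \in L_\ell$ and a deviation shifting probability mass from server $j_1$ to server $j_2$ such that $\Delta f(S) \leq 0$ for some tight $S \ni i$. Since $\alpha(T)$ is affine in each coordinate $p_{i,j}$ and $\lambda(T)$ does not depend on $\mathbf{p}$, the condition $\Delta f(T) \leq 0$ for any tight $T \ni i$ is equivalent to
\begin{equation*}
    \mu_{j_2}\prod_{k \in T\setminus\{i\}}(1-p_{k,j_2}) \;\leq\; \mu_{j_1}\prod_{k \in T\setminus\{i\}}(1-p_{k,j_1}).
\end{equation*}
The natural candidate is $S' = S \cup T_\ell$, which by \Cref{lem:closure} (closure under union) is tight and contains $T_\ell \supseteq L_1 \cup \cdots \cup L_\ell$. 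Transferring the inequality from $T = S$ to $T = S'$ reduces to controlling the multiplicative factor $\prod_{k \in T_\ell \setminus S}(1-p_{k,j_2}) / \prod_{k \in T_\ell \setminus S}(1-p_{k,j_1})$ contributed by the additional queues in $T_\ell \setminus S$.

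The main obstacle is that this factor may exceed one, in which case the naive candidate $S \cup T_\ell$ fails directly. To handle such cases, I plan to exploit the tightness of $T_\ell$ itself at $\mathbf{p}$: combining the equality $f(S) = f(T_\ell) = f(S \cup T_\ell)$ at $\mathbf{p}$ (from tightness and closure under union) with \Cref{fact:fracdecomp} and \Cref{fact:dansineq} forces product identities linking the probabilities of the queues in $T_\ell \setminus S$ and $S \setminus T_\ell$ on the various servers. A case analysis exploiting these identities will either validate the candidate $S \cup T_\ell$ or, when that fails, select an alternative tight $S'$ lying in the interval $[T_\ell,\, S \cup T_\ell]$ of the tight lattice, obtained via closure under non-disjoint intersection; choosing the chain canonically (for instance, by greedy extension through minimal tight supersets of each $T_\ell$) ensures this case analysis applies uniformly across all queues and deviations.
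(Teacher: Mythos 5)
Your setup is on the right track: the chain of tight sets via \Cref{lem:closure}, the candidate $S'=S\cup T_\ell$, the linearization of $\Delta f(T)$ into the product inequality, and the identification of the obstacle (the extra factor $\prod_{r\in T_\ell\setminus S}(1-p_{r,j_1})$ on the source server may be strictly less than one, flipping the inequality) all match the real difficulty. But the resolution you defer to a ``case analysis exploiting product identities'' is exactly where the proof lives, and the candidate $S\cup T_\ell$ cannot be rescued by those identities as stated. The identities you would extract from $f(S)=f(T_\ell)=f(S\cup T_\ell)$ via \Cref{fact:fracdecomp} and \Cref{fact:dansineq} say that $T_\ell\setminus S$ and $S\setminus T_\ell$ mix over disjoint sets of servers. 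Since $i\in L_\ell\subseteq T_\ell$, the queue $i$ lies in $S\cap T_\ell$, \emph{not} in $S\setminus T_\ell$, so this disjointness tells you nothing about the server $j_1$ that $i$ is deviating away from: queues in $T_\ell\setminus S$ may well send to $j_1$, the factor $\prod_{r\in T_\ell\setminus S}(1-p_{r,j_1})$ can be strictly below one, and the transferred inequality fails. Your fallback of ``some alternative tight $S'$ in the interval $[T_\ell, S\cup T_\ell]$'' is not identified, and an arbitrary maximal chain gives you no structure with which to identify it.

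The missing idea is to build the levels not from an arbitrary chain but from the \emph{minimal} tight subsets (recursively, conditioned on all lower levels having priority), which by the closure lemma are pairwise disjoint and disjointly mixing, and then to union $S$ not with $T_\ell$ but with $V^i_{\leq\ell}$, the union of all level-$\leq\ell$ pieces \emph{excluding the level-$\ell$ piece containing $i$}. Now $i\in S\setminus V^i_{\leq\ell}$, so the disjoint-mixing identity between $V^i_{\leq\ell}\setminus S$ and $S\setminus V^i_{\leq\ell}$ applies to $i$'s servers: since $p_{i,j_1}>0$, no queue of $V^i_{\leq\ell}\setminus S$ sends to $j_1$, hence the extra factor on the $j_1$ side is exactly $1$, while on the $j_2$ side it is at most $1$, which is the correct direction. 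Finally, minimality of $i$'s level-$\ell$ piece forces that piece to be contained in $S$, so $S\cup V^i_{\leq\ell}$ still contains every queue at levels $j\leq\ell$. Without this specific choice of level structure and of the set being unioned in, the argument does not close.
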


\begin{proof}
We construct the desired decomposition of $[k]$ by instead recursively considering \emph{minimal} tight subsets. In particular, let the level-1 subsets be the set of all minimal tight subsets (notice that by \Cref{lem:closure} and minimality, each subset of queues must be disjoint and disjointly mixing). Note that to clear packets at the rate of these subsets, they all must have priority over any other queues.
Then given the level-$j$ subsets for all $j\leq \ell$ for some $\ell\geq 1$, we recursively define the level-$(\ell+1)$ subsets as the set of minimal tight subsets of the remaining queues conditioned on all subsets at lower levels having priority. The same argument as used by \Cref{lem:closure} implies that these subsets at each level must all be disjoint and disjointly mixing.\footnote{In the example given before of a system with multiple tight subsets, the level-1 subset is the subset of two queues that split between the inner and outer servers. The level-2 subsets are the two singleton sets of the outer queues sending each to their own outer server. Notice that these two subsets indeed disjointly mix.} It is easy to verify by iteratively using \Cref{fact:fracdecomp} that for any level-$\ell$, the union of all subsets at levels $j<\ell$ with any subset of the level-$\ell$ subsets must be tight. As $[k]$ is tight and any tight subset must be contained in $[k]$ by the maximality guaranteed in \Cref{lem:closure}, it is immediate that this decomposition exhausts $[k]$. 

Let $i\in [k]$ belong to some level-$\ell$ subset, for some $\ell\geq 1$. Suppose that $i$ has some deviation from server $j$ to another server $j'$ that causes $f(S)$ to not increase for some tight subset $S\subseteq [k]$. As $f(S)$ is a linear function in the randomizations by queue $i$ holding each other queue fixed, this must hold for any sufficiently small deviation as well. By expanding the partial derivatives $\frac{\partial f(S)}{\partial p_{ij}}$ and $\frac{\partial f(S)}{\partial p_{ij'}}$ and clearing denominators, this holds if and only if
\begin{equation}
\label{eq:Sminimal}
    \mu_j\prod_{r\in S\setminus \{i\}}(1-p_{rj})\geq \mu_{j'}\prod_{r\in S\setminus\{i\}}(1-p_{rj'})
\end{equation}
Define $V^i_{\leq \ell}$ to be the set of all queues at level at most $\ell$ that do not belong to the level-$\ell$ subset $i$ belongs to. 
Then $V^i_{\leq \ell}$ is tight, and by construction $S\setminus V^i_{\leq \ell}$ is nonempty as $i$ belongs to this set. As tight subsets are preserved under unions and intersections, we have $S\cup V^i_{\leq \ell}$ and $S\cap V^i_{\leq \ell}$ are tight. On the one hand, we have by \Cref{fact:fracdecomp} that
\begin{equation*}
    f(S) = f(S\cap V^i_{\leq \ell})\oplus f(S\setminus V^i_{\leq \ell}\vert S\cap V^i_{\leq \ell}).
\end{equation*}
On the other, we have
\begin{equation*}
    f(S\cup V^i_{\leq \ell}) = f(V^i_{\leq \ell}) \oplus f(S\setminus V^i_{\leq \ell}\vert V^i_{\leq \ell}).
\end{equation*}
Comparing these two expressions, the left-hand sides are minimal, as well as the first terms on the right, which by \Cref{fact:dansineq} implies the latter terms are equal. But we know from definition that
\begin{equation*}
    f(S\setminus V^i_{\leq \ell}\vert S\cap V^i_{\leq \ell})\geq f(S\setminus V^i_{\leq \ell}\vert V_{\leq \ell}),
\end{equation*}
with equality if and only if $V^i_{\leq \ell}\setminus S$ disjointly mixes from $S\setminus V^i_{\leq \ell}$ in $\mathbf{p}$.  As $i$ belongs to this set, we can combine this with \Cref{eq:Sminimal} to see
\begin{equation*}
    \mu_j\prod_{r\in (V^i_{\leq \ell}\cup S)\setminus \{i\}}(1-p_{rj})= \mu_j\prod_{r\in S\setminus \{i\}}(1-p_{rj})\geq\mu_{j'}\prod_{r\in S\setminus\{i\}}(1-p_{rj'})\geq \mu_{j'}\prod_{r\in (V^i_{\leq \ell}\cup S)\setminus\{i\}}(1-p_{rj'}).
\end{equation*}

This implies that $f(S\cup V^i_{\leq \ell})$ must also not increase from this deviation. The last thing to check is that the level-$\ell$ subset $i$ belongs to is contained in $S$, so that $S\cup V^i_{\leq \ell}$ contains all queues up to level $\ell$. But this follows because $S\setminus V^i_{\leq \ell}$ consists of queues at level at least $\ell$ and intersects this level-$\ell$ subset nontrivially at $i$, and therefore must contain it by minimality in our construction. Therefore, for this choice of deviation by queue $i$, the subset $S\cup V^i_{\leq \ell}$ certifies the desired claim.
\end{proof}

With this result, we may finally return to the proof of \Cref{thm:ebound1}.

\begin{proof}[Proof of \Cref{thm:ebound1}]
Let $\mathbf{p}$ be any Nash equilibrium, and suppose that $S_1$ is the maximal tight subset. If $f(S_1)\geq 1$, then we are done, so suppose that $f(S_1)<1$. The Nash assumption then implies that any deviation by a queue in $S_1$ cannot decrease the value of each tight subset it is part of; note that we need the $f(S_1)<1$ assumption, as incentives are about the rates, and when $f(S_1)\geq 1$, the rate may remain 0 even if $f$ decreases.

For convenience, reindex and relabel so that $\vert S_1\vert = k$ and $S_1=[k]$. Fix any $\mathbf{x}\in \mathbb{R}^{m}_{\geq 0}$ such that $\sum_{i=1}^m x_i=k$. It suffices to show that
\begin{equation*}
    f([k]\vert \mathbf{p}) \geq \frac{\sum_{j=1}^m\mu_j (1-(1-x_j/k)^k)}{\sum_{i=1}^k \lambda_i}.
\end{equation*}
From now on, we omit the dependence on $\mathbf{p}$ in $f$ unless explicitly needed. 

Consider the level partition of $[k]$ guaranteed by \Cref{prop:levels} and suppose that there are $s$ levels. We continuously deform the Nash solution while monotonically decreasing $f([k])$, so that at the end of this process, we have a lower bound on the value of Nash. Given any strategy profile $\mathbf{p}$, we say a server $j$ is \emph{oversaturated} if $\sum_{i\in [k]} p_{ij}> x_j$ and \emph{undersaturated} if $\sum_{i\in [k]} p_{ij}< x_j$. We will continuously move probability mass from the queues from oversaturated to undersaturated servers. If no server is oversaturated, we will be done; notice that if a server is oversaturated, an easy averaging argument implies some server must be undersaturated. 

Suppose that there exists some oversaturated server. Let $i\in [k]$ be a queue at level-$s$, the top level. If $i$ nontrivially sends to an oversaturated server $j$ so that $p_{ij}>0$, we continuously decrease $p_{ij}$ and increase $p_{ij'}$ for some undersaturated server $j'$,
until either $j$ stops being oversaturated, $j'$ stops being undersaturated, or $p_{ij}$ hits zero. We claim that this deformation cannot increase $f([k])$. To see this, observe that because $\mathbf{p}$ is Nash, we know that any deviation by $i$ from one server it is nontrivially mixing at to another cannot increase $f(S)$ for all tight subsets $S$ containing $i$, hence there must be some $S\ni i$ such that $f(S)$ does not increase. But then \Cref{prop:levels} implies that some subset $S'$ containing all queues up to level-$s$ must have $f(S')$ not increase either. As $[k]$ is the only such subset, this deformation could not have actually increased $f([k])$.

Moreover, we claim that we can do this for all level-$s$ queues one-by-one 
without increasing $f([k])$. While the intermediate profiles are not Nash, because we only move probability mass from oversaturated to undersaturated servers, each oversaturated queue only has at most the same probability mass as it did at Nash while each undersaturated queue only has additional 
probability mass compared to what it had 
at Nash. 
As we have shown any such deviation by a level-$s$ server from an oversaturated queue to an undersaturated queue at Nash cannot increase $f([k])$, and now deviations are only worse at this intermediate stage while deforming the level-$k$ queue strategies, each such deformation still cannot increase $f([k])$. Therefore, we can continuously shift all probability mass from level-$s$ queues at oversaturated servers to undersaturated servers while never increasing $f([k])$.

Suppose we have now done this for all levels at least $\ell+1$ for some $\ell<s$ while not increasing $f([k])$, and we want to continue this process at level-$\ell$. Let $\mathbf{p'}$ be this intermediate strategy vector, where we note that for any queue $i$ below level $\ell+1$, $\mathbf{p'}_i=\mathbf{p}_i$. Again, if no server is oversaturated, we are done. Otherwise, suppose some queue $i$ at level-$\ell$ still sends to an oversaturated server $j$, and we again try to decrease $p'_{ij}$ and increase $p'_{ij'}$ for some undersaturated server $j'$ as before until the same stopping criterion. We must show that this too cannot increase $f([k])$.

Suppose otherwise that it did indeed increase $f([k])$ with respect to $\mathbf{p'}$. For a contradiction, it suffices to show that this implies that this same deviation, with respect to the original Nash solution $\mathbf{p}$, must have increased $f(S)$ for every subset $S$ containing all queues up to level-$\ell$. This is sufficient to obtain a contradiction as then \Cref{prop:levels} implies that \emph{every} subset containing $i$ improves at Nash with respect to this deviation, which violates the Nash property.

To prove this claim, let $S\subseteq [k]$ be any arbitrary tight subset at Nash containing all queues up to level-$\ell$. Because we assume that this deviation improves $f([k])$ with respect to $\mathbf{p'}$, it follows from taking partial derivatives that
\begin{equation*}
    \mu_j \prod_{r\in [k]\setminus \{i\}} (1-p'_{rj})< \mu_{j'} \prod_{r\in [k]\setminus \{i\}} (1-p'_{rj'}).
\end{equation*}
However, note that at $\mathbf{p'}$, as $j$ is still oversaturated, $p'_{rj}=0$ for all queues $r$ that are at strictly higher levels. As all queues at level-$\ell$ and below have $\mathbf{p'}_i=\mathbf{p}_i$ and $S$ contains all such queues, this inequality implies
\begin{equation*}
    \mu_j \prod_{r\in S\setminus \{i\}} (1-p_{rj})< \mu_{j'} \prod_{r\in [k]\setminus \{i\}} (1-p'_{rj'}).
\end{equation*}
Moreover, as $j'$ is undersaturated, we must have $p'_{rj'} \geq p_{rj'}$ for all $r\in [k]$ from the construction of this process, and removing terms in the product only increases the right side. Therefore, we deduce that
\begin{equation*}
    \mu_j \prod_{r\in S\setminus \{i\}} (1-p_{rj})< \mu_{j'} \prod_{r\in S\setminus \{i\}} (1-p_{rj'}).
\end{equation*}
This implies that this deviation also increases $f(S)$ at Nash. As $S$ was an arbitrary tight subset containing all queues up to level-$\ell$, the claim is proved. By the reduction described above, this is a contradiction, and therefore $f([k])$ must further decrease with respect to $\mathbf{p}'$. The argument extends analogously at all intermediate points of this process at level-$\ell$ by the same reasoning as before.

Therefore, by induction, it follows that we may continuously deform probability mass from oversaturated servers to undersaturated servers while only decreasing $f([k])$. At the end of this process, there cannot be any oversaturated servers, otherwise the process could have continued. In particular, if $\mathbf{p''}$ is the final probability vector at the end of this process, we have shown that $\sum_{i\in [k]} p''_{ij}=x_{j}$ for all servers $j$ and that $f([k]\vert \mathbf{p}'')\leq f([k]\vert \mathbf{p})$. We have
\begin{align*}
        f([k]\vert \mathbf{p})&=\frac{\sum_{j=1}^m \mu_j(1-\prod_{i=1}^k(1-p_{ij}))}{\sum_{i=1}^k \lambda_i} &&\text{(by definition)}\\
        &\geq \frac{\sum_{j=1}^m \mu_j(1-\prod_{i=1}^k(1-p''_{ij}))}{\sum_{i=1}^k \lambda_i} &&\text{(by construction)}\\
        &\geq \frac{\sum_{j=1}^m \mu_j(1-(1-x_j/k)^k)}{\sum_{i=1}^k \lambda_i} &&\text{(as symmetric profile maximizes product).}
    \end{align*}
The second inequality holds because given $\sum_{i\in [k]} p''_{ij}$, the maximizer of $\prod_{i\in [k]}(1-p''_{ij})$ is attained when each term is equal. As $\mathbf{x}$ was arbitrary, we may take the maximum of the right side over all $\mathbf{x}$ satisfying the constraints, and then the minimum over $k$. As $\mathbf{p}$ was an arbitrary Nash profile, this concludes the proof.
\end{proof}

\section{Discussion}
In this paper, we have studied a patient version of the queuing system of \cite{DBLP:conf/sigecom/GaitondeT20}; using careful probabilistic arguments to establish the incentive structure of the game, along with exploiting the analytic structure of the long-run rates of the induced Markov chain, we show that the correct bicriteria factor in this setting is $\frac{e}{e-1}$ via a novel deformation argument, strictly better than the factor of $2$ obtained in the no-regret learning setting. While the current result is not explicitly a learning result, this gap nevertheless suggests that no-regret behavior is not necessarily the correct notion of agent behavior in repeated games that carry strong interdependencies between rounds as in the priority structure here. It is an interesting question as to whether a natural form of learning can arrive at a Nash equilibrium of the patient version or at least results in stable outcomes without reaching an equilibrium. We note that empirical simulations of standard learning algorithms like EXP3 lead to time-independent, but myopic behavior as in \Cref{ex:ex1}. Moreover, though our restriction to time-independent policies exhibits quite rich behavior while enabling us to completely characterize the game-theoretic properties, perhaps there is a larger space of strategies where similar results hold. To that end, it may be necessary to explore the theoretical properties of more powerful learning algorithms in such settings that get the best of both worlds, namely balancing current rewards while maintaining long-run perspective. Whether such results are possible is an exciting open direction in resolving some of the deficiencies of traditional price of anarchy results; we leave a more systematic investigation of this direction to future work.

\bibliographystyle{alpha}
\bibliography{LearningQueues}

\newpage
\appendix

\section{Probability Tools}
We will use the following concentration results throughout the paper.

\begin{lem}[First Borel-Cantelli Lemma, Theorem 2.3.1 of \cite{durrett2019probability}]
\label{lem:borel}
Let $A_1,A_2,\ldots$ be a sequence of events with $\sum_{i=1}^{\infty}\Pr(A_i)<\infty$. Then with probability one at most finitely many of the $A_i$ occur.
\end{lem}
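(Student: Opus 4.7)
The plan is to prove this by formalizing the event ``at most finitely many $A_i$ occur'' as the complement of a $\limsup$ event, and then applying countable subadditivity of probability together with the convergence of the tail of the sum $\sum_i \Pr(A_i)$. The only measure-theoretic ingredients needed are monotonicity, countable subadditivity, and the basic fact that the tail of a convergent series vanishes.

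Concretely, first I would set $B \triangleq \limsup_{n\to\infty} A_n = \bigcap_{n=1}^{\infty}\bigcup_{k=n}^{\infty} A_k$. By construction, $B$ is precisely the event that infinitely many $A_i$ occur, so it suffices to show $\Pr(B)=0$. For each fixed $n$, the inclusion $B \subseteq \bigcup_{k=n}^{\infty} A_k$ holds by the definition of $B$ as an intersection. Applying monotonicity of probability followed by countable subadditivity yields
\begin{equation*}
\Pr(B) \leq \Pr\!\left(\bigcup_{k=n}^{\infty} A_k\right) \leq \sum_{k=n}^{\infty} \Pr(A_k).
\end{equation*}

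Next, I would invoke the hypothesis $\sum_{i=1}^{\infty}\Pr(A_i)<\infty$: since the full series converges, its tail $\sum_{k=n}^{\infty}\Pr(A_k)$ tends to $0$ as $n\to\infty$. The left-hand side of the displayed inequality does not depend on $n$, so taking $n\to\infty$ gives $\Pr(B)=0$. The complementary event $B^c$, which is exactly the event that at most finitely many $A_i$ occur, therefore has probability one, completing the proof.

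I do not expect any step here to be a genuine obstacle: the main substance is just recognizing the correct set-theoretic description of ``infinitely many $A_i$ occur'' as $\limsup A_n$ and then being careful to apply subadditivity before taking the limit in $n$ (rather than trying to apply it to the intersection directly). Everything else is a one-line consequence of standard properties of probability measures and of convergent series.
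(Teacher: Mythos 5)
Your proof is correct and is exactly the standard argument for the first Borel--Cantelli lemma; the paper itself does not reprove this statement but simply cites Durrett, whose proof (bounding $\Pr(\limsup_n A_n)$ by the tail $\sum_{k\geq n}\Pr(A_k)$ via monotonicity and countable subadditivity, then letting $n\to\infty$) is the same one you give. Nothing further is needed.
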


\begin{lem}[Azuma-Hoeffding]
\label{lem:azuma}
Let $\{\mathcal{F}_k\}_{k\leq n}$ be any filtration and let $A_k,B_k,\Delta_k$  satisfy the following conditions:
\begin{enumerate}
    \item $\Delta_k$ is $\mathcal{F}_k$-measurable and $\mathbb{E}[\Delta_k\vert \mathcal{F}_{k-1}]=0$. That is, the $\Delta_k$ form a martingale difference sequence.
    \item $A_k,B_k$ are $\mathcal{F}_{k-1}$-measurable and satisfy $A_k\leq \Delta_k\leq B_k$ almost surely.
\end{enumerate}
Then 
\begin{equation*}
    \Pr\left(\sum_{k=1}^n \Delta_k \geq t\right)\leq \exp\left(\frac{-2t^2}{\sum_{k=1}^n \|B_k-A_k\|_{\infty}}\right).
\end{equation*}
\end{lem}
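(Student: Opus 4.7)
The plan is to prove this via the standard exponential moment (Chernoff) method combined with Hoeffding's lemma applied conditionally at each step of the filtration. First I would fix a parameter $\lambda > 0$ and use Markov's inequality on the nonnegative random variable $\exp(\lambda \sum_{k=1}^n \Delta_k)$, yielding
\begin{equation*}
    \Pr\left(\sum_{k=1}^n \Delta_k \geq t\right) \leq e^{-\lambda t}\, \mathbb{E}\!\left[\exp\left(\lambda \sum_{k=1}^n \Delta_k\right)\right].
\end{equation*}
The key task is then to bound this exponential moment generating function, which I would do inductively over $k$ using the tower property.

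The inductive step uses Hoeffding's lemma, which states that if $X$ is a random variable with $\mathbb{E}[X]=0$ and $a \leq X \leq b$ almost surely, then $\mathbb{E}[e^{\lambda X}] \leq \exp(\lambda^2(b-a)^2/8)$. I would apply this conditionally: since $A_k$ and $B_k$ are $\mathcal{F}_{k-1}$-measurable, they act as constants under the conditional expectation $\mathbb{E}[\cdot \mid \mathcal{F}_{k-1}]$, and since $\Delta_k$ is a martingale difference, the conditional version of Hoeffding's lemma gives
\begin{equation*}
    \mathbb{E}\!\left[e^{\lambda \Delta_k} \,\big|\, \mathcal{F}_{k-1}\right] \leq \exp\!\left(\frac{\lambda^2 (B_k - A_k)^2}{8}\right) \leq \exp\!\left(\frac{\lambda^2 \|B_k-A_k\|_\infty^2}{8}\right).
\end{equation*}
Peeling off the last term via the tower property gives
\begin{equation*}
    \mathbb{E}\!\left[e^{\lambda \sum_{k=1}^n \Delta_k}\right] = \mathbb{E}\!\left[e^{\lambda \sum_{k=1}^{n-1} \Delta_k}\, \mathbb{E}\!\left[e^{\lambda \Delta_n} \mid \mathcal{F}_{n-1}\right]\right] \leq \exp\!\left(\frac{\lambda^2 \|B_n-A_n\|_\infty^2}{8}\right) \mathbb{E}\!\left[e^{\lambda \sum_{k=1}^{n-1} \Delta_k}\right],
\end{equation*}
and iterating yields $\mathbb{E}[e^{\lambda \sum_{k=1}^n \Delta_k}] \leq \exp\bigl(\tfrac{\lambda^2}{8} \sum_{k=1}^n \|B_k-A_k\|_\infty^2\bigr)$.

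Combining with the Markov bound and optimizing over $\lambda$ by choosing $\lambda = 4t / \sum_{k=1}^n \|B_k-A_k\|_\infty^2$ completes the proof with the desired exponent of $-2t^2/\sum_k \|B_k-A_k\|_\infty^2$ (matching the claimed bound, noting that the paper's displayed denominator appears to omit the square in the norms). The main obstacle is the conditional application of Hoeffding's lemma: one must carefully justify that because $A_k, B_k$ are $\mathcal{F}_{k-1}$-measurable, inside the conditional expectation they are effectively deterministic bounds on the mean-zero random variable $\Delta_k$, so the classical Hoeffding inequality applies pointwise on $\mathcal{F}_{k-1}$. Proving Hoeffding's lemma itself is standard and follows from convexity of $e^{\lambda x}$ bounded above by a secant line between $(a,e^{\lambda a})$ and $(b,e^{\lambda b})$, followed by a Taylor remainder estimate on the resulting cumulant generating function, which I would either cite or briefly include.
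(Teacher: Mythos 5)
Your proposal is correct: the paper states this lemma as a standard probabilistic tool without giving a proof, and your argument is the canonical one (Markov's inequality on $e^{\lambda \sum_k \Delta_k}$, conditional Hoeffding's lemma using that $A_k,B_k$ are $\mathcal{F}_{k-1}$-measurable, a tower-property induction, and optimization at $\lambda = 4t/\sum_k \|B_k-A_k\|_\infty^2$). You are also right that the displayed denominator in the paper's statement should read $\sum_{k=1}^n \|B_k-A_k\|_\infty^2$; the missing square is a typo, and your derivation recovers the correct exponent $-2t^2/\sum_k \|B_k-A_k\|_\infty^2$.
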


\begin{lem}[Theorem 1 in \cite{witt2013fitness}]
\label{lem:geom1}
Let $X_1,\ldots,X_n$ be i.i.d. $\text{Geom}(\lambda)$ random variables, so that $\mathbb{E}[X_i]=\frac{1}{\lambda}$. Let $s=\frac{n}{\lambda^2}$. Then for all $\delta>0$,
\begin{equation*}
    \Pr\left( Z_n-\frac{n}{\lambda}<-\delta\right)\leq \exp\left(\frac{-\delta^2}{2s}\right),
\end{equation*}
and
\begin{equation*}
    \Pr\left( Z_n-\frac{n}{\lambda}>\delta\right)\leq \exp\left(\frac{-\delta}{4}\min\{\delta/s,\lambda\}\right)
\end{equation*}
where $Z_n:=\sum_{i=1}^n X_i$.
\end{lem}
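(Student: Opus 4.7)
The plan is to prove both tails via the Chernoff--Cram\'er method applied to the moment generating function (MGF) of a single geometric random variable and then tensorize by independence. Writing $Y := X_1 - 1/\lambda$ for the centered random variable, a direct computation gives its MGF as
\begin{equation*}
M(t) := \mathbb{E}[e^{tY}] = \frac{\lambda \, e^{t(1-1/\lambda)}}{1 - (1-\lambda)e^{t}},
\end{equation*}
which is finite for all $t < -\log(1-\lambda)$ and, in particular, for all $t \leq 0$. By independence and Markov's inequality,
\begin{equation*}
\Pr(Z_n - n/\lambda > \delta) \leq e^{-t\delta} M(t)^n \text{ for } t>0, \qquad \Pr(Z_n - n/\lambda < -\delta) \leq e^{-t\delta} M(-t)^n \text{ for } t>0,
\end{equation*}
so the task reduces to bounding $\log M(\pm t)$ and then optimizing over $t$.

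For the lower tail the analysis is subgaussian: since $Y \geq 1 - 1/\lambda$, the variable $-tY$ is bounded above for $t>0$, and $\mathrm{Var}(X_1) = (1-\lambda)/\lambda^2 \leq 1/\lambda^2$. A standard second-order Taylor/convexity estimate (essentially the Hoeffding-style lemma for random variables bounded on one side with controlled variance) yields the cumulant bound $\log M(-t) \leq t^2/(2\lambda^2)$ for all $t > 0$. Plugging this into Chernoff gives $\exp(n t^2/(2\lambda^2) - t\delta) = \exp(t^2 s/2 - t\delta)$, minimized at $t^* = \delta/s$ to give exactly $\exp(-\delta^2/(2s))$.

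For the upper tail I instead need a Bernstein-type cumulant bound, because the MGF blows up as $t \to -\log(1-\lambda) \approx \lambda$. The key analytic estimate to establish is
\begin{equation*}
\log M(t) \;\leq\; \frac{t^2/\lambda^2}{2(1 - t/\lambda)} \qquad \text{for } 0 < t < \lambda,
\end{equation*}
which is the standard subexponential cumulant bound with variance proxy $1/\lambda^2$ and scale proxy $1/\lambda$. Substituting this into the Chernoff bound and optimizing produces a Bernstein-type estimate $\exp\bigl(-(\delta^2/2)/(s + \delta/\lambda)\bigr)$; splitting into the two regimes $\delta/s \leq \lambda$ and $\delta/s > \lambda$ then converts this into $\exp\bigl(-(\delta/4)\min\{\delta/s,\lambda\}\bigr)$, matching the stated bound.

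The main obstacle is justifying the Bernstein-type cumulant bound with the correct constants: the direct Taylor expansion of $\log\bigl(\lambda e^{t(1-1/\lambda)}/(1-(1-\lambda)e^t)\bigr)$ does not instantly exhibit the clean form $\tfrac{t^2/\lambda^2}{2(1-t/\lambda)}$, so a careful termwise comparison of the Taylor series (using that the $k$-th central moment of $X_1$ is $O(k!/\lambda^k)$) or a convexity argument is required. Once this single-variable estimate is in hand, tensorization by independence, optimizing the Chernoff bound, and the final case split are all mechanical.
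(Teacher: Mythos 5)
The paper does not prove this lemma at all: it is imported verbatim as Theorem 1 of \cite{witt2013fitness}, so there is no internal argument to compare yours against. Your self-contained Chernoff derivation is structurally sound, and every intermediate claim you state is in fact true. For the lower tail, the cumulant bound $\log M(-t)\le t^2/(2\lambda^2)$ follows most cleanly not from a Hoeffding-type one-sided-boundedness lemma but from differentiating the cumulant generating function twice: with $q=1-\lambda$ one computes $\frac{d^2}{dt^2}\log M(-t)=\frac{qe^{-t}}{(1-qe^{-t})^2}$, which is decreasing in $t\ge 0$ (downward tilting shrinks the tilted variance) and hence at most its value $\mathrm{Var}(X_1)=(1-\lambda)/\lambda^2\le 1/\lambda^2$ at $t=0$; integrating twice gives the bound, and the optimization $t^*=\delta/s$ is exactly as you say.

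The one step you leave open---the bound $\log M(t)\le\frac{t^2/\lambda^2}{2(1-t/\lambda)}$ for $0<t<\lambda$---is true, but be careful about how you close it. The route you suggest via the moment condition $\mathbb{E}[|Y|^k]=O(k!/\lambda^k)$ only yields the Bernstein condition with variance proxy $2/\lambda^2$ (since $\mathbb{E}[X_1^k]\le k!/\lambda^k=\tfrac{k!}{2}\cdot\tfrac{2}{\lambda^2}\cdot\lambda^{-(k-2)}$), which doubles $s$ in the denominator and degrades the final constant below the stated $\delta/4$. The clean fix is again through the second derivative: writing $h(t)=1-qe^t$, one has $h'(t)=h(t)-1\ge-1$, so $h(t)\ge\lambda-t$ on $[0,\lambda)$, whence
\[
\frac{d^2}{dt^2}\log M(t)=\frac{qe^t}{\bigl(1-qe^t\bigr)^2}\le\frac{1}{(\lambda-t)^2}.
\]
Integrating twice (both sides vanish to first order at $t=0$) shows $\log M(t)\le-\log(1-t/\lambda)-t/\lambda=\sum_{k\ge 2}(t/\lambda)^k/k\le\frac{(t/\lambda)^2}{2(1-t/\lambda)}$, i.e., the centered geometric cumulant generating function is dominated by that of a centered exponential with the same mean. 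From there your choice $t=\delta/(s+\delta/\lambda)<\lambda$, the resulting bound $\exp\bigl(-\tfrac{\delta^2}{2(s+\delta/\lambda)}\bigr)$, and the two-case comparison against $\tfrac{\delta}{4}\min\{\delta/s,\lambda\}$ all check out. So the proposal is correct modulo this one single-variable estimate, and the argument above closes it in two lines.
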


\begin{proposition}[Corollary 6.2 of \cite{DBLP:conf/sigecom/GaitondeT20}, derived from \cite{witt2013fitness}]
\label{prop:geom3}
Let $\{G_{i,j}\}_{i\in [n],j\in [w]}$ be a family of independent geometric random variables such that for all $i,j$,
\begin{equation*}
    G_{i,j}\sim \text{Geom}(\lambda_i).
\end{equation*}
Let $Z_k^i=\sum_{j=1}^k G_{i,j}$. Then for any $\epsilon\in [0,1]$, 
\begin{equation}
\label{eq:geombound}
    \Pr\left(\exists i\in [n],j\in [w]: \bigg\vert Z_k^i-\frac{k}{\lambda_i}\bigg\vert \geq \frac{\epsilon w}{\lambda_i}\right)\leq 6n\exp\left(\frac{-\epsilon^2 w}{36}\right).
\end{equation}
\end{proposition}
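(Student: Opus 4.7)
The plan is to bound the failure probability for each fixed pair $(i,k) \in [n] \times [w]$ using the tail inequalities of \Cref{lem:geom1}, combine them via a union bound, and then absorb the resulting combinatorial factor into the exponent. Since $G_{i,1},\dots,G_{i,k}$ are i.i.d.\ $\text{Geom}(\lambda_i)$, the partial sum $Z_k^i$ has mean $k/\lambda_i$ and the ``variance parameter'' appearing in \Cref{lem:geom1} is $s = k/\lambda_i^2$.

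Setting $\delta = \epsilon w/\lambda_i$ and substituting into \Cref{lem:geom1}, the lower tail becomes
\begin{equation*}
\Pr\!\bigl(Z_k^i - k/\lambda_i < -\delta\bigr) \leq \exp\!\bigl(-\epsilon^2 w^2/(2k)\bigr) \leq \exp\!\bigl(-\epsilon^2 w/2\bigr),
\end{equation*}
where the last inequality uses $k \leq w$, and the upper tail becomes
\begin{equation*}
\Pr\!\bigl(Z_k^i - k/\lambda_i > \delta\bigr) \leq \exp\!\left(-\frac{\epsilon w}{4}\min\{\epsilon w/k,\, 1\}\right) \leq \exp\!\bigl(-\epsilon^2 w/4\bigr),
\end{equation*}
since in the case $\epsilon w \leq k$ the exponent is $\epsilon^2 w^2/(4k) \geq \epsilon^2 w/4$ (again by $k \leq w$), while in the opposite case it is $\epsilon w/4 \geq \epsilon^2 w/4$ using $\epsilon \leq 1$. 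Each individual deviation event thus has probability at most $\exp(-\epsilon^2 w/4)$ per tail.

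Adding the two tails and applying a union bound over all $nw$ pairs $(i,k) \in [n] \times [w]$ yields a total failure probability at most $2nw \exp(-\epsilon^2 w/4)$. The last step is to massage this into the stated form $6n \exp(-\epsilon^2 w/36)$, which amounts to absorbing the linear factor $w$ coming from the union bound into the exponent at the cost of shrinking the rate constant from $1/4$ toward $1/36$. This can be done directly via the elementary inequality $w \leq 3\exp(c\epsilon^2 w)$ for a suitable $c$, with the caveat that whenever $\epsilon^2 w$ is too small for this to be useful the target bound $6n \exp(-\epsilon^2 w/36)$ is already at least $1$ and is vacuously true; alternatively, one can avoid the factor of $w$ altogether by rephrasing the event as a uniform deviation of the companion Bernoulli counting process $\sum_{s\leq t} B_{i,s}$ (where $B_{i,s} \sim \text{Bern}(\lambda_i)$ i.i.d.) and using a martingale maximal inequality in place of the naive union bound over $k$. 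The only obstacle here is the constant bookkeeping; the substantive probabilistic content is already packaged in \Cref{lem:geom1}, which itself is a direct specialization of the Bernstein-type bounds for sums of geometric variables from \cite{witt2013fitness}.
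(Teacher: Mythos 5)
First, a framing note: the paper offers no proof of \Cref{prop:geom3} at all---it is imported verbatim as Corollary 6.2 of \cite{DBLP:conf/sigecom/GaitondeT20}---so your derivation is judged on its own terms rather than against an in-paper argument. Your per-pair computations are correct: substituting $s=k/\lambda_i^2$ and $\delta=\epsilon w/\lambda_i$ into \Cref{lem:geom1} and using $k\le w$, $\epsilon\le 1$ does give each tail probability at most $\exp(-\epsilon^2 w/4)$, and the union bound over the $nw$ pairs legitimately yields $2nw\exp(-\epsilon^2 w/4)$. The gap is in the final absorption step. You need $2nw\exp(-\epsilon^2 w/4)\le 6n\exp(-\epsilon^2 w/36)$, i.e.\ $w\le 3\exp(2\epsilon^2 w/9)$, and your caveat only excludes the regime $\epsilon^2 w\le 36\ln(6n)$ where the target is vacuous. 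But $w$ is not controlled by $\epsilon^2 w$, since $\epsilon$ may be tiny: take $n=1$, $\epsilon=10^{-6}$, $w=10^{14}$, so $\epsilon^2 w=100$. The target bound is $6e^{-100/36}\approx 0.37<1$ (not vacuous), yet your union bound evaluates to $2\cdot 10^{14}\,e^{-25}\approx 2.8\times 10^3>1$. So the absorption fails, and no rate constant $c$ fixes it: any bound of the form $Cnw\exp(-c'\epsilon^2 w)$ retains a genuine factor of $w$ that cannot be traded against $\exp(-c\epsilon^2 w)$ uniformly in $(\epsilon,w)$. The factor of $w$ must be avoided, not absorbed.

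Your parenthetical ``alternative'' is therefore not optional---it is the proof. Since $M_k:=Z_k^i-k/\lambda_i$ is a martingale in $k$, $e^{\theta M_k}$ (resp.\ $e^{-\theta M_k}$) is a nonnegative submartingale for $\theta>0$ in the range where the geometric moment generating function is finite, and Doob's maximal inequality gives $\Pr\bigl(\max_{k\le w} M_k\ge \delta\bigr)\le e^{-\theta\delta}\,\mathbb{E}\bigl[e^{\theta M_w}\bigr]$, which is exactly the Chernoff bound that the argument of \cite{witt2013fitness} produces for the single endpoint $k=w$. Running your own endpoint computation with $k=w$ (so $s=w/\lambda_i^2$, $\delta=\epsilon w/\lambda_i$) then bounds the uniform-in-$k$ deviation for each queue $i$ by $2\exp(-\epsilon^2 w/4)$, and a union bound over $i$ alone gives $2n\exp(-\epsilon^2 w/4)\le 6n\exp(-\epsilon^2 w/36)$ with room to spare; the equivalent reformulation via the companion Bernoulli counting process is how the cited Corollary 6.2 proceeds. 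So you have correctly identified all the ingredients, but the route you commit to as primary does not close, and the one you relegate to an aside is the one that does.
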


\begin{thm}
[Theorem 1 in \cite{pemantle1999moment}]
\label{thm:pemantle}
Let $X_1,X_2,\ldots$ be a sequence of nonnegative random variables with the property that
\begin{enumerate}
    \item There exists constants $\alpha,\beta>0$ such that if $x_n>\beta$, then 
    \begin{equation*}
        \mathbb{E}[X_{n+1}-X_n\vert \mathcal{F}_n]<-\alpha,
    \end{equation*}
    where $\mathcal{F}_n=\sigma(X_0,\ldots,X_n)$ is the history until period $n$ and $X_n=x_n$.
    
    \item There exists $p>2$ and $\theta>0$ a constant such that for any history, 
    \begin{equation*}
        \mathbb{E}[\vert X_{n+1}-X_n\vert^p\vert \mathcal{F}_n]\leq \theta.
    \end{equation*}
\end{enumerate}
Then, for any $0<r<p-1$, there exists an absolute constant $M=M(\alpha,\beta,\theta,p,r)$ not depending on $n$ such that $\mathbb{E}[X_n^r]\leq M$ for all $n$.
\end{thm}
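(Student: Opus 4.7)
The plan is to establish a uniform tail bound of the form $\mathbb{P}(X_n > t) \leq C \cdot t^{-(p-1)}$ for all $n$ and all sufficiently large $t$, where $C$ depends only on the parameters $\alpha, \beta, \theta, p$. Integrating this against the polynomial density $r t^{r-1}$ yields
\[
\mathbb{E}[X_n^r] \leq \beta^r + r C \int_\beta^\infty t^{r-p}\,dt,
\]
which converges precisely when $r < p - 1$, producing the claimed uniform constant $M = M(\alpha, \beta, \theta, p, r)$.

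To establish the tail bound, I would use a last-exit decomposition: for each $n$ and $t > \beta$, let $\tau = \max\{k \leq n : X_k \leq \beta\}$ (with $\tau = -1$ if no such $k$ exists). On the excursion $(\tau, n]$, the drift hypothesis ensures every conditional increment has mean $\leq -\alpha$, so writing $X_k - X_{k-1} = -\alpha + M_k$ with $M_k$ a martingale difference whose conditional $p$-th moment is uniformly bounded (by $2^{p-1}(\theta + \alpha^p)$), the event $\{\tau = j, X_n > t\}$ forces the partial martingale sum over the excursion interval to exceed $t - \beta + \alpha(n - j)$. Summing contributions across possible excursion lengths $m = n - j$, the dominant heavy-tailed contribution comes from a ``one-big-jump'' analysis: the probability that some $|M_i| > t/2$ during an excursion of length $m$ is at most $m \cdot \mathbb{P}(|M_i| > t/2) \leq C' m/t^p$ by Markov, and combined with a persistence estimate for the excursion itself this gives the $t^{-(p-1)}$ tail. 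The complementary all-moderate-jumps regime is handled by Bernstein-type concentration using the conditional variance $\mathbb{E}[M_k^2 \mid \mathcal{F}_{k-1}] \leq \theta^{2/p}$ and contributes only lower-order corrections.

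The main obstacle is pinning down the sharp exponent $p - 1$: a naive summation of one-big-jump probabilities alone yields only $t^{2-p}$, which would force the strictly weaker moment bound $r < p - 2$. The sharper $t^{1-p}$ bound requires simultaneously exploiting that (i) jumps of size $t$ occur with probability $\sim \theta/t^p$ per step, (ii) reaching height $t$ at the endpoint of the excursion forces the excursion length to be $\Theta(t/\alpha)$ in the typical case, and (iii) the complementary contributions from very long excursions are strongly suppressed since a walk with persistent drift $-\alpha$ remaining above $\beta$ for $m$ steps is itself a large-deviation event of probability at most $O(m^{1-p})$ (again by the one-big-jump principle applied to the persistence event). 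Balancing these three ingredients rigorously, via truncation of $M_i$ at level $t/2$ and careful application of martingale inequalities uniformly over the window length, is what pins down the tight threshold $r < p - 1$ in the final moment bound.
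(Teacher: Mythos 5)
First, a point of reference: the paper does not prove this statement at all --- it is imported verbatim as Theorem 1 of Pemantle and Rosenthal (1999), so there is no in-paper proof to compare against. Judging your sketch on its own merits: the overall strategy (a uniform tail bound $\mathbb{P}(X_n>t)\le Ct^{-(p-1)}$, integrated against $rt^{r-1}$, obtained via a last-exit decomposition from $[0,\beta]$ and a one-big-jump analysis) is the right moral picture, and your integration step correctly recovers the threshold $r<p-1$. But there is a genuine gap at exactly the step you flag as ``the main obstacle,'' and the sketch does not overcome it. The bounds you actually write down --- union-bounding the event $\{\exists i:\ |M_i|>t/2\}$ over an excursion of length $m$ to get $C'm/t^p$, and suppressing long excursions by a persistence bound $O(m^{1-p})$ --- combine to give only $\sum_m \min(C'm t^{-p},\, Cm^{1-p}) = \Theta(t^{2-p})$ (the crossover is at $m\asymp t$ and both sides contribute $t^{2-p}$), i.e.\ exactly the weaker threshold $r<p-2$ you are trying to beat. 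The same happens if one replaces the threshold $t/2$ by $t+\alpha m$: $\sum_m m(t+\alpha m)^{-p}=\Theta(t^{2-p})$. Moreover, the persistence estimate $O(m^{1-p})$ that you invoke ``by the one-big-jump principle'' is itself a statement of the same type and difficulty as the main claim; the estimate that follows from off-the-shelf martingale inequalities (Rosenthal plus Markov) is only $O(m^{-p/2})$.

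The missing idea is that the decomposition must be over the \emph{location $k$ of the single large increment within the excursion}, with a \emph{position-dependent} threshold: a jump at step $k$ of an excursion of length $m$ must have size at least roughly $t+\alpha(m-k)$ (it must carry the process high enough to survive the remaining $m-k$ steps of drift $-\alpha$ and still exceed $t$ at time $n$), and it must be preceded by $k$ steps of survival above $\beta$. Writing the double sum as $\sum_{k}\mathrm{surv}(k)\sum_{j\ge 0}(t+\alpha j)^{-p}$ then factors into a convergent series in $k$ (for which even the crude $O(k^{-p/2})$ persistence bound suffices, since $p>2$) times $O(t^{1-p})$, which is the sharp exponent. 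Your sketch never performs this decomposition, and without it the argument as written proves only $r<p-2$. The complementary ``all moderate jumps'' regime also needs more care than a single Bernstein bound at truncation level $t/2$ (Freedman's inequality with that truncation gives only an $O(1)$ bound when $m\gtrsim t$); one needs a Fuk--Nagaev-type truncation argument there. So the proposal correctly identifies the destination and the obstacle, but does not supply the argument that gets past it.
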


\section{Table of Notation}
\label{sec:table}
For convenience, we repeat the definitions of various expressions considered in the main text.
\renewcommand{\arraystretch}{1.2}
\begin{center}
\begin{tabular}{c c p{2.5in}}
\hline
Symbol & Formula & Definition \\
\hline\hline
   $\bm{\lambda}$  &  & Vector of length $n$ of queue arrival rates in descending order.\\
   \hline
   $\Delta^{m-1}$ & & Probability simplex over $m$ element set.\\
   \hline
   $\bm{\mu}$  &  & Vector of length $m$ of server success rates in descending order.\\
   \hline
   $\mathbf{p}$ & & Vector of queue randomizations over servers  in $(\Delta^{m-1})^n$.\\
   \hline
   $\widetilde{T_t^i}$ & & Timestamp of oldest packet at queue $i$ at time step $t$.\\
   \hline
   $T_t^i$ & $ \max\{0,t-\widetilde{T_t^i}\}$ & Age of queue $i$ at time $t$.\\
   \hline
   $\alpha(S\vert \mathbf{p},\mathbf{\mu},S')$ & $\sum_{j=1}^m \mu_j\prod_{i\in S'}(1-p_{i,j})(1-\prod_{i\in S}(1-p_{i,j}))$ & Expected number of packets cleared by queues in $S$ if all have packets in a round and have priority over all queues except for those in $S'$ and each such queue also has packets in the round.\\
   \hline
   $\lambda(S)$ & $\sum_{i\in S}\lambda_i$ & Sum of arrival rates of queues in $S$.\\
   \hline
   $f(S\vert \mathbf{p},\bm{\mu},\bm{\lambda},S')$ & $\alpha(S\vert \mathbf{p},\mathbf{\mu},S')/\lambda(S)$ & Ratio of expected packets cleared by $S$ with priority over all queues except $S'$ to total arrival rate of $S$.\\
   \hline
   $S_k(\mathbf{p},\bm{\mu},\bm{\lambda})$ & & $k$th subset output in the algorithm of \Cref{section:basicproperties}.\\
   \hline
   $U_k(\mathbf{p},\bm{\mu},\bm{\lambda})$ & $\cup_{\ell=1}^k S_{\ell}(\mathbf{p},\bm{\mu},\bm{\lambda})$ & Union of first $k$ outputted subsets in the algorithm of \Cref{section:basicproperties}.\\
   \hline
   $r_i(\mathbf{p},\bm{\mu},\bm{\lambda})$ & & Outputted aging rate of queue $i$ in the algorithm of \Cref{section:basicproperties}.\\
   \hline
   $f_k(\mathbf{p},\bm{\mu},\bm{\lambda})$ & $f(S_k(\mathbf{p},\bm{\mu},\bm{\lambda})\vert \mathbf{p},\bm{\mu},\bm{\lambda},U_{k-1})$ & Value of $f$ for $S_k$ when $U_{k-1}$ has priority.\\
   \hline
   $g_k(\mathbf{p},\bm{\mu},\bm{\lambda})$ & $\max\{0,1-f_k(\mathbf{p},\bm{\mu},\bm{\lambda})\}$ & Outputted rate for $S_k$; equivalently, value of $r_i$ for any $i\in S_k$.

\end{tabular}
\end{center}
When the values of, or dependencies on, $\mathbf{p},\bm{\mu},\bm{\lambda}$ are clear from context, we omit them for notational ease.

\section{Proofs Deferred from \Cref{section:convergence}}

\label{sec:appendixconv}
This section is devoted to the remaining details of the intermediate claims in the proof of \Cref{thm:gamemain}. Our main technical result of this section asserts that with high probability, the maximum queue age increases at a rate of at most $(1-(1-\epsilon)\cdot f_1)$ on the next $w$ steps for a large enough $w$. In fact, more generally, the following holds:
\begin{proposition}[\Cref{prop:upperbound}, restated]
Fix $\epsilon>0$. For any integer $a\in \mathbb{N}$, let $w=a\cdot \lceil \frac{6}{\epsilon}\rceil^{n-1}$. Suppose it holds at time $t$ that $\max_{i\in [n]} T^i_t\geq w\cdot f_1$. Then 
\begin{equation*}
    \max_{i\in [n]} T^i_{t+w} -\max_{i\in [n]} T^i_t \leq (1-(1-\epsilon)\cdot f_1)\cdot w
\end{equation*}
with probability at least $1-C_1\exp(-C_2a)$, where $C_1,C_2>0$ are absolute constants depending only on $n,\epsilon,\bm{\lambda}, \bm{\mu},\mathbf{p}$, but not on $a$.

More generally, for each $s\geq 1$, if $\max_{i\not\in U_{s-1}} T^i_t\geq w\cdot f_s$, then 
\begin{equation*}
    \max_{i\not\in U_{s-1}} T^i_{t+w} -\max_{i\not\in U_{s-1}} T^i_t \leq (1-(1-\epsilon)\cdot f_s)\cdot w
\end{equation*}
with probability at least $1-C_1\exp(-C_2a)$, where $C_1,C_2>0$ are absolute constants depending only on $n,\epsilon,\bm{\lambda}, \bm{\mu},\mathbf{p}$, but not on $a$.
\end{proposition}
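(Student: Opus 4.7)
The plan is to prove this via a win-win induction, combining Azuma-Hoeffding (\Cref{lem:azuma}) and the geometric tail bound \Cref{prop:geom3} with the minimality property of $S_s$ from the algorithmic description.

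The key algebraic input is that, since $S_s$ minimizes $f(\cdot \vert U_{s-1})$ in the algorithm, every nonempty subset $T \subseteq [n] \setminus U_{s-1}$ satisfies $\alpha(T \vert U_{s-1}) \geq f_s \lambda(T)$. This quantity is the expected per-round collective clearance by $T$ when $U_{s-1}$ has priority over $T$ and $T$ has priority over $([n]\setminus U_{s-1})\setminus T$; importantly, the same expression is a \emph{lower} bound on clearances even when $U_{s-1}$ queues do not in fact hold priority, since removing blockers only increases the clearance probability. The proof then leverages this lower bound on clearances to control age growth.

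Step 1 (Single-queue base case). Suppose some queue $i^* \in [n] \setminus U_{s-1}$ has age at time $t$ exceeding every other age in this set by more than $w$. For each $r \leq w$, while $T^{i^*}_{t+r} \geq T^{i^*}_t$, queue $i^*$ remains strictly oldest in $[n] \setminus U_{s-1}$; whenever $T^{i^*}_{t+r}$ dips below $T^{i^*}_t$, the overall maximum over $[n] \setminus U_{s-1}$ is at most $\max_{j\neq i^*} T^j_t + w \leq T^{i^*}_t$, so such rounds cannot contribute to max growth. On rounds where $i^*$ is oldest, its per-round clearance probability at server $j$ is at least $\mu_j \prod_{i' \in U_{s-1}}(1 - p_{i'j})$; integrated over $p_{i^*,\cdot}$ this gives per-round expected clearances at least $\alpha(\{i^*\} \vert U_{s-1}) \geq f_s \lambda_{i^*}$. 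Applying Azuma-Hoeffding to bound the number of clearances from below, then \Cref{prop:geom3} to bound the total geometric timestamp jump $\Sigma$ from below by $(1-\epsilon) w f_s$, we conclude that with probability at least $1 - C'_1 \exp(-C'_2 a)$, the maximum age grows by at most $w(1 - (1-\epsilon) f_s)$.

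Step 2 (Win-win recursion). In the absence of such a dominant queue, there is a candidate cluster $T \subseteq [n]\setminus U_{s-1}$ with $|T|\geq 2$ of queues within $w$ of the current max (only these can attain the max during the window). I partition the window into $\lceil 6/\epsilon\rceil$ sub-windows of length $w' = w/\lceil 6/\epsilon \rceil$ and recurse. Over each sub-window, either (a) some queue in $T$ separates from the others by more than $w'$, reducing to Step 1 on a uniquely oldest queue within the smaller subsystem; or (b) all ages in $T$ remain close throughout, in which case the subsystem satisfies the hypotheses of the proposition restricted to $T$ with effective rates $\mu_j \prod_{i' \in U_{s-1}}(1-p_{i'j})$, and we apply the inductive hypothesis. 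Because $|T|$ strictly decreases with each recursion and the base case is $|T|=1$, the recursion depth is at most $n-1$, explaining the window scaling $w = a \cdot \lceil 6/\epsilon \rceil^{n-1}$. A union bound over recursion levels, sub-windows, and the $2^n$ possible cluster choices contributes only constants (depending on $n, \epsilon, \bm{\lambda}, \bm{\mu}, \mathbf{p}$) to $C_1, C_2$, preserving the exponential-in-$a$ failure probability.

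The main obstacle is case (b) in Step 2: the collective concentration yields a bound on $\sum_{i \in T} \lambda_i T^i_{t+w} - \sum_{i \in T} \lambda_i T^i_t \leq \lambda(T)(1-f_s) w' + O(\sqrt{w'})$ via Azuma, but weighted averages only lower-bound the max. The resolution is structural: if the cluster $T$ remains tight for the entire sub-window, the within-$T$ priority rule (oldest packet wins at each server) preferentially funnels clearances to the oldest queue in $T$, so its growth is dominated by the average growth rate of $T$; if $T$ fails to remain tight, some queue has separated from the cluster and we recurse on a strictly smaller candidate set. Chaining these cases through the sub-window structure and carefully propagating the failure probabilities is the delicate technical content, but the overall shape of the bound (exponential decay in $a$) follows because at each recursion level the required concentration only loses a factor of $\lceil 6/\epsilon\rceil$ in effective window size.
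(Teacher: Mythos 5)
Your skeleton matches the paper's proof closely: a base case with one dominant old queue (using that minimality of $f_s$ over all subsets of $[n]\setminus U_{s-1}$ applies in particular to singletons, so the uniquely oldest queue clears at rate at least $f_s\lambda_{i^*}$), followed by a win-win induction that chunks the window into $\lceil 6/\epsilon\rceil$ blocks and dichotomizes on whether the near-maximal cluster stays tight or splits. The base case and the large-gap branch are essentially right.

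The gap is in your resolution of the tight-cluster case, which is the crux of the whole argument. You claim that ``the within-$T$ priority rule preferentially funnels clearances to the oldest queue in $T$, so its growth is dominated by the average growth rate of $T$.'' This is not a valid mechanism. When several queues in $T$ have comparable ages, the identity of the oldest changes from round to round, and no individual queue's clearance rate is bounded below by $f_s\lambda_i$: a queue $i\in T$ that concentrates its mixing on a slow server clears at rate $\sum_j \mu_j p_{ij}\prod_{i'\in U_{s-1}}(1-p_{i'j})$ even with full priority, which can be far below $f_s\lambda_i$ (this is precisely why $S_s$ is a nontrivial group rather than a union of singletons). Moreover, you cannot ``apply the inductive hypothesis'' in this branch as first stated, since $|T|$ has not decreased; the recursion only fires in the large-gap branch. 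The correct argument, which your proposal does not supply, is as follows: the collective Azuma bound gives $\sum_{\ell}\sum_{i\in T} X_{i,\ell}\geq (1-\epsilon/4)\,f_s\,\lambda(T)\cdot L$ with high probability; by pigeonhole, \emph{some} queue $i\in T$ clears at least $(1-\epsilon/4)f_s\lambda_i L$ packets; the geometric concentration (\Cref{prop:geom3}) converts this into a timestamp decrease of at least $(1-\epsilon/2)f_s L$ for that queue, so the \emph{minimum} age in $T$ is small; and only then does the no-gap hypothesis transfer this bound to the maximum. You correctly identify that ``weighted averages only lower-bound the max'' as the obstacle, but the fix is this min-to-max transfer via tightness, not any preferential funneling of clearances to the oldest queue. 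One further bookkeeping issue: chaining the blocks requires tracking, at the end of each block, a cumulative target of the form $M^*-(b-O(1))\cdot Lf_s(1-\epsilon/2)$ together with the event that all the per-block concentration bounds held so far (the paper's nested events $A_b=A_{b-1}\cap D_b\cap E_b\cap F_b$); a union bound over ``$2^n$ cluster choices'' is not needed since the cluster is measurable at the start of the window, but the conditional structure across blocks does need to be made explicit for the probabilities to multiply correctly.
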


Note that the first part is simply the $j=1$ case of the more general statement. This proposition will follow from the following lemma that will prove 
inductively:
\begin{lem}
\label{lem:decreaseage}
Fix $s\geq 1$ and $\epsilon>0$. Then, for each $1\leq \tau\leq n$ and all $a\in \mathbb{N}$, the following holds: let $w=a\cdot \lceil \frac{6}{\epsilon}\rceil^{\tau-1}$, and suppose that at time $t$, $M^*:=\max_{i\not\in U_{s-1}} T^i_t\geq w\cdot f_s$, and that the set 
\begin{equation*}
    I=\{ i\not \in U_{s-1}: T^i_t\geq M^*-w\cdot f_s\}
\end{equation*}
has $\vert I\vert\leq \tau$. Then, with probability at least $1-C_1\exp(-C_2a)$, we have
\begin{equation*}
    \max_{i\not\in U_{s-1}} T^i_{t+w}- \max_{i\not\in U_{s-1}} T^i_t \leq (1-(1-\epsilon)\cdot f_s)\cdot w,
\end{equation*}
where $C_1,C_2>0$ are absolute constants depending only on $n,\epsilon,\bm{\lambda}, \bm{\mu},\mathbf{p}$, but not $a$.
\end{lem}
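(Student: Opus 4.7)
My plan is to prove the lemma by induction on $\tau$, with the base case $\tau=1$ handled by direct concentration on the unique priority-dominant queue, and the inductive step using a subwindow decomposition paired with a case analysis on whether the near-max cluster shrinks or persists.

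For the base case, $I=\{i^*\}$ is a singleton and every other queue outside $U_{s-1}$ is strictly $w f_s$ behind at time $t$. I would first apply \Cref{prop:geom3} to conclude that no queue's age moves by more than $O(w)$ (with constants easily beating the gap $w f_s$) over the window of length $w=a$, so with probability at least $1-C_1 e^{-C_2 a}$, $i^*$ remains the oldest queue outside $U_{s-1}$ throughout $[t,t+w]$. Conditional on this priority, in each step $i^*$'s packet clears with probability exactly $\alpha(\{i^*\}\mid U_{s-1})$, and when cleared the timestamp advances by an independent $\mathrm{Geom}(\lambda_{i^*})$ with mean $1/\lambda_{i^*}$. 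The one-step expected age change is thus $1-f(\{i^*\}\mid U_{s-1}) \leq 1 - f_s$, using minimality of $f_s$ over all nonempty subsets of $[n]\setminus U_{s-1}$. Combining Azuma--Hoeffding on the bounded clearance indicators with \Cref{lem:geom1} on the geometric jump sum yields concentration of the total age change within $\epsilon f_s w$ of its mean, except with probability $C_1 e^{-C_2 a}$, giving the target.

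For the inductive step, assume the claim for $\tau-1$ and suppose $|I|=\tau$ (the $|I|<\tau$ case follows from IH directly). Partition $[t,t+w]$ into $N=\lceil 6/\epsilon\rceil$ consecutive subwindows of length $w'=a\lceil 6/\epsilon\rceil^{\tau-2}$, and at each subwindow start $t_k$ define $I_k=\{i\not\in U_{s-1}: T^i_{t_k}\geq M^*_{t_k}-w' f_s\}$. For subwindows with $|I_k|\leq \tau-1$, I would apply the IH to obtain the per-subwindow bound $w'(1-(1-\epsilon)f_s)$. For subwindows with $|I_k|=\tau$, I would leverage that the cluster $I_k$ collectively has priority over all other non-$U_{s-1}$ queues (whp, using the gap $w' f_s$) and that by minimality of $S_s$, $f(I_k\mid U_{s-1})\geq f_s$. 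The expected one-step change of the weighted sum $\sum_{i\in I_k}\lambda_i T^i$ is then at most $\lambda(I_k)(1-f_s)$, and concentration via Azuma plus geometric tails over $w'$ steps gives a sharp bound on the $\lambda$-weighted average age. A union bound over the $N$ subwindows gives total failure probability $C_1' e^{-C_2 a}$ after absorbing $N$ into constants, and summing the per-subwindow max bounds yields the claim $w(1-(1-\epsilon) f_s)$.

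The hardest step will be converting the weighted-sum bound to a bound on the individual maximum in subwindows where $|I_k|=\tau$: the weighted average controls only a $\lambda$-weighted mean, while the max could deviate if the spread in $I_k$ widens over the subwindow. To handle this, I plan to couple the weighted-sum upper bound with per-queue lower bounds on age decreases (each queue's clearance-induced age drop is a sum of independent $\mathrm{Geom}(\lambda_i)$ terms, controlled by \Cref{prop:geom3}), showing that any queue substantially outpacing the weighted mean would force the others to fall behind by more than these concentration estimates permit. The tower scaling $\lceil 6/\epsilon\rceil^{\tau-1}$ in the window size is calibrated precisely so that the available $\epsilon f_s$ slack absorbs both the union bound factor across the $N$ subwindows and the max-vs-average conversion loss within each persisting-cluster subwindow.
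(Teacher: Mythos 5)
Your skeleton matches the paper's: induction on $\tau$, a base case handled by concentration on the single dominant queue, and an inductive step that splits the window into $\lceil 6/\epsilon\rceil$ sub-blocks of length $a\lceil 6/\epsilon\rceil^{\tau-2}$ and dichotomizes between a small top cluster (apply the inductive hypothesis) and a full cluster (use collective priority). However, the step you correctly flag as hardest --- converting the $\lambda$-weighted-average bound into a bound on the \emph{maximum} in a sub-block where $|I_k|=\tau$ --- is where your proposed resolution fails. You suggest that a queue outpacing the weighted mean ``would force the others to fall behind by more than these concentration estimates permit,'' but no estimate forbids this: the available concentration results upper-bound the ages (lower-bound the clearing) of a cluster with priority, and the lower bounds of \Cref{prop:lowerbound} apply only to the algorithm's sets $S_k$ under a separation hypothesis, not to an arbitrary cluster $I_k$. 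A proper subset of $I_k$ can legitimately clear much faster than rate $f_s$, so the weighted average can drop sharply while the single oldest queue barely moves (it may simply fail to win its coin flips on that sub-block); the averaging argument only certifies that \emph{some} queue in the cluster, hence the \emph{minimum}, has dropped. Consequently you do not get a per-sub-block bound on the max in the $|I_k|=\tau$ case, and ``summing the per-subwindow max bounds'' is not available.

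The paper's fix is a win-win evaluated at the \emph{end} of each block: either the cluster is still tight (max minus min at most $Lf_s(1-\epsilon/2)$), in which case the min bound --- obtained from the \emph{cumulative} clearing count since the start of the whole window, not a single block --- transfers to the max at the cost of the small gap; or a gap has opened, in which case the set within $Lf_s$ of the new max has size at most $\tau$ and the inductive hypothesis is applied to the \emph{next} block. This costs a constant number of ``wasted'' blocks, which is why the target thresholds are of the form $M^*-(b-3)Lf_s(1-\epsilon/2)$ and why $B=\lceil 6/\epsilon\rceil$ is chosen so that $(B-3)(1-\epsilon/2)\geq B(1-\epsilon)$; your plan has no analogue of this bookkeeping, and the events across blocks must be chained conditionally rather than union-bounded independently. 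Two smaller points on the base case: priority of $i^*$ while she remains above the threshold follows deterministically from the fact that other queues' timestamps never decrease (no appeal to \Cref{prop:geom3} is needed, and your statement that she remains oldest for the whole window need not hold); and her per-step clearing probability is only \emph{at least} $\lambda_{i^*}f(\{i^*\}\vert U_{s-1})$ (queues in $U_{s-1}$ need not all have priority), with the lower bound valid only up to the random time she crosses the target --- a stopping-time issue the paper handles by coupling with auxiliary indicators whose conditional means are bounded below by construction.
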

The proposition follows as for any $s\geq 1$ and $a\in \mathbb{N}$, $\vert I\vert\leq n$.\\

We now turn to proving the lemma inductively. The case for $\tau=1$ will turn out to be relatively easy; indeed, this case just says that there is a single very old queue among those not in $U_{s-1}$, so we will be able to lower bound the number of packets she clears by simply assuming every queue in $U_{s-1}$ is older than her. To extend this to higher $\tau$ will be more difficult. To do this, we will chunk together many windows that we know have this property for smaller values of $\tau$ and then leverage two facts to get a win-win situation. We will be able to easily show that at least one queue in $I$ is always decreasing at the correct rate; then if all queues in $I$ are ``close,'' they also are at the correct rate. If not, then on the next chunk, they will clear at the correct rate on the next chunk inductively with high probability. If we do this chunking properly, we can provide good control on all the queues on every chunk via conditioning and using one of these two cases.

We now carry out this high-level plan. Fix $s\geq 1$. For reference, we will use the following notation and conventions to ``organize randomness'' in a convenient way to analyze the process:
\begin{enumerate}
    \item $w$ will denote a given \emph{window} length composed of $B$ consecutive \emph{blocks} of $L$ \emph{steps}, so that $w = B\cdot L$. In the following, we will make this convention, and because we will be considering the behavior of the process on some fixed window, we may as well reindex $t=1$ for convenience, so that each window we consider will go from $t=1$ to $w$. We will reserve the superscript $t=0$ to denote the value of the ages at the very beginning of the window we consider.
    
    \item Recall we use the shorthand $f_s:=f(S_s\vert U_{s-1})$ and $g_s = \max\{0,1-f_s\}$.
    
    \item With this convention, we will often define (and will make clear from context) at the beginning of some considered window of fixed length $w$, fixed $s\geq 1$, and fixed $\epsilon>0$:
    \begin{gather*}
    M^* :=\max_{i\in [n]\setminus U_{s-1}} T^i_0\\
        T^*:=M^* -(1-\epsilon)\cdot w\cdot f_s.
    \end{gather*}
    We will often refer to $T^*$ as the \emph{target} value for this window, which does not change over the course of the window (notice it is measured at the beginning of the window).
    Then, define
    \begin{equation*}
        I=\{i\notin U_{s-1}: T^i_0\geq M^*-w\cdot f_s\}.
    \end{equation*}
    That is, $I$ is the set of queues whose age is within $w\cdot f_s$ of the oldest age, measured at the beginning of the window. Our goal will be to eventually show if $w$ is sufficiently large, then with high probability, every queue in $I$ has age below $T^*$ at the end of the next $w$ steps before accounting for $w$ steps of aging, and of course all queues not in $I$ are already strictly below $T^*$ by definition. This will imply that the maximum age grows by at most $(1-(1-\epsilon)\cdot f_s)\cdot w$ once we account for the $w$ steps of aging over this window.
    
    \item Given a window of length $w=B\cdot L$, $\mathcal{F}^{(b)}_{\ell}$ is the filtration of $\sigma$-algebras generated up to step $\ell$ in the $b$th block, for $b=1,\ldots,B$. In particular,
    \begin{equation*}
        \mathcal{F}_0^{(1)}\subseteq \mathcal{F}_1^{(1)}\subseteq \ldots\subseteq \mathcal{F}_{L}^{(1)}\subseteq \mathcal{F}_{0}^{(2)}\subseteq \ldots\subseteq \mathcal{F}_{\ell}^{(b)}\subseteq\ldots\subseteq \mathcal{F}_{L}^{(B)}.
    \end{equation*}
    
    \item $X_{i,\ell}^{(b)}$ will be the indicator that queue $i$ cleared a packet in timestep $\ell$ in the $b$th block. Note that $X_{i,\ell}^{(b)}$ is $\mathcal{F}^{(b)}_{\ell}$-measurable.
    \item $Y_{i,\ell}^{(b)}$ will be a sequence of random variables, with same interpretation of the indices, defined as follows: for $b,\ell$ such that every queue in $I$ is still above $M^*-w\cdot f_s$ at the start of the $\ell$th step of the $b$th block, set $Y_{i,\ell}^{(b)}=X_{i,\ell}^{(b)}$. If this does not hold for some $b,\ell$, then let the $Y_{i,\ell}^{(b)}$ be arbitrary indicator random variables satisfying 
    \begin{equation*}
        \mathbb{E}\left[\sum_{i\in I} Y_{i,\ell}^{(b)}\bigg\vert \mathcal{F}_{\ell-1}^{(b)}\right]\geq f_s\cdot \left(\sum_{i\in I} \lambda_i\right).
    \end{equation*}
    Note that $Y_{i,\ell}^{(b)}$ is $\mathcal{F}^{(b)}_{\ell}$-measurable.
    
    These random variables are purely for technical convenience because they have an \emph{a priori} lower bound on the conditional expectation, which is not always true of $X$ (if for instance, queues have already cleared a lot and thus some queues in $I$ have lost priority over those not in $I$). These have no real meaning and are just convenient for the analysis. 
    
    \item $G_{i,\ell}^{(b)}$ will be i.i.d. Geom($\lambda_i$) random variables in the same way for $\ell\in [L],b\in [B]$. We define the partial sums on each window $Z_{i,k}^{(b):= \sum_{\ell=1}^k G_{i,\ell}^{(b)}}$. 
    
    For each $b=1,\ldots,B$, we make the convention that the $G_{i,\ell}^{(b)}$ are sampled for all $i\in [n]$ and $\ell\in [L]$ at the beginning of the $b$th block, so that they are all $\mathcal{F}_0^{(b)}$ measurable; there is no corresponding queuing step. When queue $i$ clears her $k$th packet in the $b$th block, her timestamp decreases by $G_{i,k}^{(b)}$; equivalently, when this happens, her timestamp will have decreased on the $b$th block by $Z_{i,k}^{(b)}$ so far.
\end{enumerate}

We now proceed with the proof:

\begin{proof}[Proof of \Cref{lem:decreaseage}]
Fix any group $s\geq 1$ and $\epsilon>0$. Note that if $f_s=0$ (i.e. queues never clear packets), then the lemma holds trivially with probability 1 as every queue in this subset increases deterministically by $w$ in age on any window of length $w$, so we suppose otherwise. The proof is by induction on $\tau=\vert I\vert$.\\

\textbf{Base Case: $\tau=1$ (One Old Queue)}: We first consider $\tau=1$. Let $a\in \mathbb{N}$ be arbitrary, and then set $w=a$. If $i^*\in I$ is the unique queue in the subset, then $i^*=\arg\max_{i\in [n]\setminus U_{s-1}} T_0^i$. We must show that at the end of $w$ steps, this queue has decreased age (before accounting for aging) by at least $(1-\epsilon)\cdot w\cdot f_s$ with high probability; the desired conclusion then follows from adding back in the $w$ steps of aging. For simplicity, write $\lambda=\lambda_{i^*}$ as the arrival rate of this unique queue. We prove the claim in the following two steps:
\begin{enumerate}
    \item \textbf{First, we show that with high probability, the number of packets this queue must clear on this window to get below the target is not too large.}
    
    Recall that to model this random process on the next $w$ steps, let $G_1,\ldots,G_w$ be i.i.d. Geom($\lambda$) random variables (write $w=1\cdot w$ to indicate our window is just one block of $w$ steps, so we omit the superscripts for blocks).  As usual, write $Z_{k}=\sum_{\ell=1}^k G_{\ell}$. We model the next $w$ steps for this queue by saying that when she clears her $k$th packet on this window, her timestamp decreases by $G_k$; equivalently, clearing $k$ packets decreases her timestamp by $Z_k$ collectively. Sample all of these random variables before hand, and let $\mathcal{F}_0$ be the $\sigma$-algebra generated by the previous history, as well as these random variables. $\mathcal{F}_{\ell}$ will be the filtration generated by all prior events up to the $\ell$th step in this window.

Next, define the random variable
\begin{equation*}
    K^*=\min\{k\in [w]: Z_k\geq (1-\epsilon)\cdot w\cdot f_s\},
\end{equation*}
with the convention that if this set is empty, then $K^*=w+1$. Observe that $K^*$ is exactly the number of packets that this queue must clear to clear the target.

We claim that with high probability, $K^*\leq K:=\lambda\cdot (1-\epsilon/2)\cdot w\cdot f_s$. To see this, apply \Cref{lem:geom1} with $K=(1-\epsilon/2)\cdot w\cdot f_s$ and $\delta=\frac{wf_s\epsilon}{2}$ to see that for this choice of $K$,
\begin{align}
    \Pr(K^*>K)&=\Pr\left(Z_K < (1-\epsilon)\cdot w\cdot f_s\right)\\
    &=\Pr\left(Z_K< \frac{\lambda\cdot (1-\epsilon/2)\cdot w\cdot f_s}{\lambda}-\frac{\epsilon\cdot w\cdot f_s}{2}\right)\\ &\leq\exp\left(\frac{-w^2f_s^2\lambda^2\epsilon^2 }{8\lambda wf_s(1-\epsilon/2)}\right)\\
    \label{eqn:KBound}
    &\leq \exp\left(\frac{-\lambda wf_s\epsilon^2 }{8}\right).
\end{align}

Recall that $f_s>0$; therefore, we have shown that with high probability, the queue will only need to clear at most $K$ packets to get below the desired target.
    \item \textbf{Next, we show that with high probability, this queue will clear at least as many packets as required by the previous claim.}
    
    Let $X_1,\ldots,X_w$ be the indicator variables for if the queue cleared a packet on each step of the window. Note that these random variables are path-dependent, but $X_{\ell}$ is $\mathcal{F}_{\ell}$-measurable. Then the queue's timestamp decreases by at least $(1-\epsilon)\cdot w\cdot f_s$ if and only if $\sum_{\ell=1}^w X_{\ell}\geq K^*$ by definition. Therefore, the probability the queue's timestamp decreases by at least $(1-\epsilon)\cdot w\cdot f_s$ on the next $w$ steps is
\begin{equation*}
    \Pr\left(\sum_{\ell=1}^w X_{\ell}\geq K^*\right)=\sum_{k=1}^w \Pr\left(\sum_{\ell=1}^w X_{\ell}\geq K^*\bigg\vert K^*=k\right)\Pr(K^*=k)
\end{equation*}
Consider the family $Y_1,\ldots,Y_w$ of indicator random variables that we couple with $X_1,\ldots,X_w$ as follows: while $\sum_{q=1}^{\ell-1} X_q < K^*$, set $Y_{\ell}=X_{\ell}$. While $\sum_{q=1}^{\ell-1} X_q \geq K^*$ on a sample path, let $Y_{\ell}$ be an arbitrary indicator random variable satisfying $\mathbb{E}[Y_{\ell}\vert \mathcal{F}_{\ell-1}]\geq \lambda\cdot  f_s$. Notice that by construction, we always have $\mathbb{E}[Y_{\ell}\vert \mathcal{F}_{\ell-1}]\geq \lambda\cdot f_s$; this is because if $\sum_{q=1}^{\ell-1} X_{q}<K^*$, then the queue is still above the target, and therefore by assumption has priority. As this queue is the oldest not in $U_{s-1}$, she has priority over all other queues. If $V\subseteq U_{s-1}$ is some subset of queues with priority over her before she reaches her target, we know that in this case,

\begin{align*}
    \mathbb{E}[Y_{\ell}\vert \mathcal{F}_{\ell-1}]&= \mathbb{E}[X_{\ell}\vert \mathcal{F}_{\ell-1}]\\
    &= \lambda\cdot f(\{i^*\}\vert V)\\
    &\geq \lambda\cdot f(\{i^*\}\vert U_{s-1})\\
    &\geq \lambda\cdot f_s,
\end{align*}
where we use set monotonicity and the fact that $f_s$ is the minimal value of $f(\cdot \vert U_{s-1})$ over all subsets contained in the complement.

 Of course, if $\sum_{q=1}^{\ell-1} X_q\geq K^*$, $\mathbb{E}[Y_{\ell}\vert \mathcal{F}_{\ell-1}]\geq \lambda f_s$ simply by construction. However, because $X_{\ell}$ and $Y_{\ell}$ are equal while the queue is above the target, or equivalently before having cleared $K^*$ packets, it follows that the events that $\sum_{\ell=1}^w X_{\ell}\geq K^*$ and $\sum_{\ell=1}^w Y_{\ell}\geq K^*$ have the same probability. Recall again that $K:=\lambda\cdot (1-\epsilon/2)\cdot w\cdot f_s$. We obtain the probability the queue's timestamp gets below the target is at least
\begin{align*}
    \sum_{k=1}^K \Pr \left(\sum_{\ell=1}^w X_{\ell}\geq k \bigg\vert K^*=k\right)  &\Pr(K^*=k)= \sum_{k=1}^K \Pr\left(\sum_{\ell=1}^w Y_{\ell}\geq k \bigg\vert K^*=k\right)\Pr(K^*=k)\\
    &\geq \sum_{k=1}^K \Pr\left(\sum_{\ell=1}^w Y_{\ell}\geq K \bigg\vert K^*=k\right)\Pr(K^*=k)\\
    &\geq \sum_{k=1}^K \Pr\left(\sum_{\ell=1}^w Y_{\ell}\geq \sum_{\ell=1}^w \mathbb{E}[Y_{\ell}\vert \mathcal{F}_{\ell-1}]-\frac{\lambda wf_s\epsilon}{2}\bigg\vert K^*=k\right)\Pr(K^*=k)\\
    &\geq \left(1-\exp\left(-\lambda^2wf_s^2\epsilon^2/4\right)\right)\Pr(K^*\leq K)\\
    &\geq \left(1-\exp\left(-\lambda^2wf_s^2\epsilon^2/4\right)\right)\left(1-\exp\left(\frac{-\lambda wf_s\epsilon^2 }{8}\right)\right)\\
    &\geq 1-2\exp\left(\frac{-\lambda^2wf_s^2\epsilon^2}{8}\right).
\end{align*}

We use Azuma-Hoeffding in the fourth line applied to $\Delta_{\ell}:=Y_{\ell}-\mathbb{E}[Y_{\ell}\vert \mathcal{F}_{\ell-1}]$, which we note is surely between $-1$ and $1$.
\end{enumerate}
As $w=a=a\cdot \lceil \frac{6}{\epsilon}\rceil^{1-1}$, it is clear that the probability this occurs is of the claimed form. To be safe, one should take $\lambda=\min_i \lambda_i$ to make the bound hold uniformly, independent of the identity of this queue.\\

\noindent\textbf{Inductive Step for $\tau>1$}
Suppose that the proposition holds up to $\tau$, and now we must show it holds for $\tau+1$. Let $a\in \mathbb{N}$, and then $w=a\cdot \lceil \frac{6}{\epsilon}\rceil^{\tau}=\lceil \frac{6}{\epsilon}\rceil\cdot \big(a\cdot \lceil \frac{6}{\epsilon}\rceil^{\tau-1}\big)$. In our notation, we have $w=B\cdot L$, where $B=\lceil \frac{6}{\epsilon}\rceil$ and $L=a\cdot \lceil \frac{6}{\epsilon}\rceil^{\tau-1}$.

Define as usual \begin{gather*}
    M^* = \max_{i\in [n]\setminus U_{s-1}} T^i_0\\
    T^*:=M^* -(1-\epsilon)\cdot w\cdot f_s\\
    I=\{i\in [n]\setminus U_{s-1}: T^i_t\geq M^*-w\cdot f_s\},
\end{gather*}
and suppose now that $\vert I\vert\leq  \tau+1$ and $M^*\geq w\cdot f_s$. For all $i\in I$, $\ell=1,\ldots,L$, and $b=1,\ldots,B$, define the random variables $X_{i,\ell}^{(b)}$ and $Y_{i,\ell}^{(b)}$ as described above, as well as the $\sigma$-algebras $\mathcal{F}_{\ell}^{(b)}$.

First, note that for any $b,\ell$, if no queue in $I$ has timestamp below $M^*-w\cdot f_s$ at the $\ell$th step of the $b$th block (without accounting for aging), then every queue in $I$ has priority over queues not in $I$. Then for the same reason as in the base case we have
\begin{equation*}
    \mathbb{E}\left[\sum_{i\in I}X_{i,\ell}^{(b)}\bigg\vert \mathcal{F}_{\ell-1}^{(b)}\right]\geq f_s\cdot \left(\sum_{i\in I} \lambda_i\right).
\end{equation*}
Then by construction, we always have
\begin{equation*}
   \mathbb{E}\left[\sum_{i\in I}Y_{i,\ell}^{(b)}\bigg\vert \mathcal{F}_{\ell-1}^{(b)}\right]\geq f_s\cdot \left(\sum_{i\in I} \lambda_i\right),
\end{equation*}
and this holds regardless of the conditioning. In particular, it follows that for every $b=1,\ldots,B$, we have
\begin{equation*}
    \sum_{\ell=1}^{L}\mathbb{E}\left[\sum_{i\in I}Y_{i,\ell}^{(b)}\bigg\vert \mathcal{F}_{\ell-1}^{(b)}\right]\geq L\cdot f_s\cdot \left(\sum_{i\in I} \lambda_i\right).
\end{equation*}

Now, we define the following events for $b=1,\ldots,B$:
\begin{gather*}
    D_b = \left\{ T^i_{b\cdot L}\leq M^*-(b-3)\cdot L\cdot f_s(1-\epsilon/2),\forall i\in I\right\}\\
    E_b=\left\{Z_{i,\ell}^{(b)}\geq \frac{\ell}{\lambda}-\frac{\epsilon L f_s}{4},\forall i\in I,\ell=1,\ldots, L\right\}\\
    F_b = \left\{ \sum_{\ell=1}^{L}\sum_{i\in I} Y_{i,\ell}^{(b)}\geq L\cdot f_s(1-\epsilon/4)\left(\sum_{i\in I} \lambda_i\right)\right\}.
\end{gather*}

We make the convention that $D_0,E_0,F_0$ are just the trivial event $\Omega$ that happens surely. Then define inductively for $b<B$:
\begin{gather*}
    A_0 =\Omega\\
    A_{b+1}=A_b\cap D_{b+1}\cap E_{b+1}\cap F_{b+1}.
\end{gather*}
We note that $A_B$ implies $D_B$ by construction so that if $A_B$ holds, then at the end of this window of $w$ steps (and again without accounting for $w$ steps of aging), for all $i\in I$
\begin{equation*}
    T^i_w=T^i_{B\cdot L}\leq M^*-(B-3)\cdot L\cdot f_s(1-\epsilon/2).
\end{equation*}
Recall that our goal is that $T^i_w\leq M^*-B\cdot L\cdot f_s(1-\epsilon)$ (before accounting for $w$ steps of aging); note that our choice of $B=\lceil \frac{6}{\epsilon}\rceil$ satisfies
\begin{equation*}
    (B-3)\cdot L\cdot f_s\cdot(1-\epsilon/2)\geq B\cdot L\cdot f_s\cdot (1-\epsilon).
\end{equation*}

Now, we lower bound the probability that $A_{b}$ holds inductively. As $A_{b+1}\subseteq A_b$, we have
\begin{align*}
    \Pr(A_{b+1})&=\Pr(D_{b+1}\cap E_{b+1}\cap F_{b+1}\vert A_b)\Pr(A_b)\\
    &\geq (\Pr(D_{b+1}\vert A_b)-\Pr(E_{b+1}^c\cup F_{b+1}^c\vert A_b))\Pr(A_b),
\end{align*}
where we just use a union bound. By a union bound and \Cref{prop:geom3}, 
\begin{equation*}
    \Pr(E_{b+1}^c\vert A_b)\leq 6(\tau+1)\exp\left(\frac{- L\cdot \lambda^2f_s^2\epsilon^2 }{144}\right),
\end{equation*}
where $\lambda=\min_{i\in [n]} \lambda_i$. A familiar argument by Azuma-Hoeffding gives
\begin{equation*}
    \Pr(F_{b+1}^c\vert A_b)\leq \exp\left(-\left(\sum_{i\in I} \lambda_i\right)^2L f_s^2\epsilon^2/16n\right)\leq \exp(-\lambda^2 L f_s^2\epsilon^2/16n),
\end{equation*}
where we note that $\sum_{i\in I} Y_{i,\ell}^{(b+1)}-\mathbb{E}\left[\sum_{i\in I} Y_{i,\ell}^{(b+1)}\right]$ is surely between $-n$ and $n$ for that extra factor.
Therefore, a union bound implies
\begin{equation*}
   \Pr(A_{b+1})\geq \left(\Pr(D_{b+1}\vert A_b)-6(\tau+2)\exp\left(\frac{-\lambda^2 L f_s^2\epsilon^2 }{144n}\right)\right)\Pr(A_b)
\end{equation*}

We now show that $\Pr(D_{b+1}\vert A_b)$ is large using a case analysis:
\begin{enumerate}
    
    \item \textbf{Case 1: No Gap} First suppose that after the $b$th block, there is no large gap between the maximum and minimum timestamp in $I$; that is, (without accounting for aging, which affects queues equally)
    \begin{equation*}
        \max_{i\in I} T^i_{b\cdot L}-\min_{i\in I} T^i_{b\cdot L}\leq  L\cdot f_s(1-\epsilon/2).
    \end{equation*}
    We show that this, along with the other assumptions in $A_b$, already imply the event $D_{b+1}$, so there is no need to analyze what happens on the $b+1$th block.
    
    Note that because there is no large gap, $D_{b+1}$ will be implied by
    \begin{equation}
    \label{eq:nogap}
        \min_{i\in I} T^i_{b\cdot L}\leq  M^*-b\cdot L\cdot f_s (1-\epsilon/2),
    \end{equation}
    because then (again, as we only account for aging at the end)
    \begin{equation*}
        \max_{i\in I} T^i_{(b+1)\cdot L}\leq \max_{i\in I} T^i_{b\cdot L}\leq  M^*-(b-1)\cdot L\cdot f_s(1-\epsilon/2).
    \end{equation*}
    We now show \Cref{eq:nogap} is indeed implied by $A_b$, which we recall implies $E_b$ and $F_b$ for all $q\leq b$. This means that for each $q\leq b, i\in I, \ell=1,\ldots, L$
    \begin{equation}
    \label{eqn:partsums}
        Z_{i,\ell}^{(q)}\geq \frac{\ell}{\lambda}-\frac{\epsilon L f_s}{4},
    \end{equation}
    as well as
    \begin{equation*}
        \sum_{q=1}^b\sum_{\ell=1}^{L}\sum_{i\in I} Y_{i,\ell}^{(q)}\geq b\cdot L\cdot f_s(1-\epsilon/4)\left(\sum_{i\in I} \lambda_i\right).
    \end{equation*}
    If $\min_{i\in I} T_{b\cdot L}^i\leq M^* - B\cdot L\cdot f_s$, we are done by the above discussion, so suppose not; this means no queue in $I$ has timestamp below $M^*-wf_s$ (before accounting for aging). By our coupling, we have $X_{i,\ell}^{(q)}=Y_{i,\ell}^{(q)}$ up until now as no queue in $I$ has lost priority to those outside of $I$, so this last inequality is equivalent to
    \begin{equation*}
        \sum_{q=1}^b\sum_{\ell=1}^{L}\sum_{i\in I} X_{i,\ell}^{(q)}\geq b\cdot L\cdot f_s(1-\epsilon/4)\left(\sum_{i\in I} \lambda_i\right).
    \end{equation*}
    By averaging, this implies that there exists some $i\in I$ such that
    \begin{equation*}
        \sum_{q=1}^{b}\sum_{\ell=1}^{L} X_{i,\ell}^{(q)}\geq b\cdot \lambda_i\cdot L\cdot f_s(1-\epsilon/4).
    \end{equation*}
    By (\ref{eqn:partsums}), this implies this queue $i$ has decreased in age by at least
    \begin{equation*}
        b\cdot L\cdot f_s(1-\epsilon/4)-\frac{b\cdot \epsilon \cdot L f_s}{4}=b\cdot L\cdot f_s(1-\epsilon/2).
    \end{equation*}
    Because all queues in $I$ have timestamp at most $M^*$ at the beginning of this process, this implies that (before accounting for aging)
    \begin{equation*}
        \min_{i\in I} T^i_{b\cdot L}\leq  M^*-b\cdot L\cdot f_s(1-\epsilon/2),
    \end{equation*}
    which combined with the no-gap assumption implies $D_{b+1}$. Therefore, when there is no gap, the probability of $D_{b+1}$ conditioned on $A_b$ is $1$.
    
    \item \textbf{Case 2: Large Gap} Suppose after the $b$th block, there is a large enough gap between the maximum and minimum timestamp in $I$; specifically, suppose that (again, without aging) 
    \begin{equation*}
        \max_{i\in I} T_{b\cdot L}^i-\min_{i\in I} T_{b\cdot L}^i> L\cdot f_s(1-\epsilon/2).
    \end{equation*}
    The previous case showed that $A_b$ itself implies
    \begin{equation*}
        \min_{i\in I} T^i_{b\cdot L}\leq  M^*-b\cdot L\cdot f_s(1-\epsilon/2)
    \end{equation*}
    As we have conditioned on $A_b$, $\max_{i\in I}T_{b\cdot L}^i\leq M^*-(b-3)\cdot L f_s(1-\epsilon/2)$; if this held instead with $b-2$, we would already be done. If not, as we have set $L=a\cdot \lceil \frac{6}{\epsilon}\rceil^{\tau-1}$ and evidently the set of queues with timestamp within $L\cdot f_s$ of the maximum in $I$ has size at most $\tau$, we may apply the inductive hypothesis to see that with probability at least $1-C_1\exp(-C_2a)$, for some absolute constants $C_1$ and $C_2$ depending only on $n,\epsilon,\bm{\lambda}, \bm{\mu},\mathbf{p}$ that every such queue decreases by at least $L\cdot f_s\cdot(1-\epsilon/2)$ on this block by our choice of $L$; therefore, conditioned on $A_b$, with probability at least $1-C_1\exp(-C_2a)$, these queues decrease by enough to satisfy $D_{b+1}$.
\end{enumerate}

Combining these cases, we conclude that
\begin{equation*}
    \Pr(A_{b+1})\geq \left(1-C_1\exp(-C_2a)-6(\tau+2)\exp\left(\frac{-\lambda^2 L f_s^2\epsilon^2 }{144n}\right)\right)\Pr(A_b)
\end{equation*}
 Unravelling this recurrence and using $\Pr(A_0)=1$, we conclude that
\begin{equation*}
    \Pr(A_m)\geq 1-B(C_1\exp(-C_2a))-6B(\tau+2)\exp\left(\frac{-\lambda^2 L f_s^2\epsilon^2 }{144n}\right).
\end{equation*}
As $L=a\cdot \lceil \frac{6}{\epsilon}\rceil^{\tau-1}$ and $B=\lceil \frac{6}{\epsilon}\rceil$, this evidently has the claimed form of $1-C_1'\exp(-C_2'a)$ for absolute constants $C_1',C_2'>0$ depending only on system parameters and not $a$, completing the inductive step. With this, the lemma is proved.
\end{proof}

For this proposition to yield anything useful, we will need a corresponding lower bound that asserts roughly that if groups have separated according to what the algorithm asserts, then the aging rate of the \emph{average} queue in a group grows at the conjectured rate.

\begin{proposition}[\Cref{prop:lowerbound}, restated]
For any $s\geq 1$ and any fixed $\epsilon>0$, the following holds: suppose that at time $t$, it holds that
\begin{equation*}
    \min_{i\in U_s} T^i_t -\max_{i\in S_{s+1}} T^i_t\geq  2\cdot \frac{w}{\lambda_n}.
\end{equation*}
Then with probability $1-A\exp(-Bw)$ where $A,B>0$ are absolute constants not depending on $w$, we have
\begin{equation*}
    \sum_{i\in S_{s+1}} \lambda_i T^i_{t+w}-\sum_{i\in S_{s+1}} \lambda_iT^i_t\geq (1-(1+\epsilon)f_{s+1})\cdot w \cdot \left(\sum_{i\in S_{s+1}} \lambda_i\right).
\end{equation*}

Moreover, for any fixed $\epsilon>0$, with probability at least $1-A\exp(-Bw)$ it holds that
\begin{equation*}
    \sum_{i\in S_1} \lambda_iT^i_w \geq (1-(1+\epsilon)f_1)\cdot w\cdot \left(\sum_{i\in S_1} \lambda_i\right).
\end{equation*}
\end{proposition}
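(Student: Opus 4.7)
The plan is to lower-bound $\sum_{i\in S_{s+1}}\lambda_i(T^i_{t+w}-T^i_t)$ by establishing two high-probability events: (i) the queues in $U_s$ maintain priority over $S_{s+1}$ throughout the entire window of length $w$, and (ii) conditional on this priority, the total clearings from $S_{s+1}$ are at most $(1+\epsilon/4)wf_{s+1}\lambda(S_{s+1})$. For (i), the age of each $i\in U_s$ can decrease over $w$ steps by at most the cumulative timestamp advance $\sum_{k=1}^{K}X_{i,k}$ where $X_{i,k}\sim\text{Geom}(\lambda_i)$ are independent and $K\le w$; by \Cref{prop:geom3} this is at most $(1+o(1))w/\lambda_n$ uniformly with probability $1-\exp(-\Omega(w))$. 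Combined with the deterministic fact that ages in $S_{s+1}$ grow by at most $w$ per window and the gap assumption of $2w/\lambda_n$, the ordering $\min_{i\in U_s}T^i_s>\max_{i\in S_{s+1}}T^i_s$ is preserved at every intermediate time $s\in[t,t+w]$.

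Next, condition on this priority event and let $N_t$ denote packets cleared from $S_{s+1}$ at step $t$. The probability that any server $j$ clears a packet of $S_{s+1}$ in a given step is at most $\mu_j\prod_{i\in U_s}(1-p_{ij})\bigl(1-\prod_{i\in S_{s+1}}(1-p_{ij})\bigr)$, since no queue in $U_s$ may send to $j$ (else it preempts) while some queue in $S_{s+1}$ must send; summing over $j$ yields $\mathbb{E}[N_t\mid \mathcal{F}_{t-1}]\le \alpha(S_{s+1}\mid U_s)=f_{s+1}\lambda(S_{s+1})$ regardless of which queues outside $U_s\cup S_{s+1}$ also send or whether queues in $S_{s+1}$ happen to be empty. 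Since $N_t\le n$ always, Azuma-Hoeffding applied to $N:=\sum_{t=1}^w N_t$ gives $N\le (1+\epsilon/4)wf_{s+1}\lambda(S_{s+1})$ with exponentially high probability.

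Now use the timestamp recurrence $\widetilde T^i_{t+w}=\widetilde T^i_t+\sum_{k=1}^{N_i}X_{i,k}$ and the identity $(t+w)-\widetilde T^i_{t+w}=(t-\widetilde T^i_t)+w-\sum_{k=1}^{N_i}X_{i,k}$; taking the max with $0$ on both sides gives $T^i_{t+w}-T^i_t\ge w-\tau_i-\sum_{k=1}^{N_i}X_{i,k}$, where $\tau_i:=\max\{0,\widetilde T^i_t-t\}$ is the initial idle wait, bounded by a single $\text{Geom}(\lambda_i)$ variable, so $\sum_{i\in S_{s+1}}\lambda_i\tau_i=O(\log w)$ with exponentially high probability. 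Summing the inequality $\lambda_i$-weighted and applying \Cref{prop:geom3} a second time, with parameter $\epsilon''$ chosen so that $n\epsilon''\le (\epsilon/4)f_{s+1}\lambda(S_{s+1})$, gives $\sum_{i\in S_{s+1}}\lambda_i\sum_{k=1}^{N_i}X_{i,k}\le N+(\epsilon/4)wf_{s+1}\lambda(S_{s+1})$. Chaining:
$$\sum_{i\in S_{s+1}}\lambda_i(T^i_{t+w}-T^i_t)\ge w\lambda(S_{s+1}) - (1+\epsilon/2)wf_{s+1}\lambda(S_{s+1}) - O(\log w),$$
which is at least $(1-(1+\epsilon)f_{s+1})w\lambda(S_{s+1})$ for $w$ sufficiently large; for smaller $w$ the claim holds vacuously by choosing $A$ large. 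The ``moreover'' statement ($s=0$) follows by the same argument with the priority step omitted, since the bound $\mathbb{E}[N_t\mid\mathcal{F}_{t-1}]\le f_1\lambda(S_1)$ already holds unconditionally (it merely uses that each server clears at most one packet per step), and the $\tau_i$ arising from each queue's first arrival is again $O(\log w)$.

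The main technical obstacle is bookkeeping the coupling between the three high-probability events (priority, Azuma concentration of $N$, and geometric concentration of the $X_{i,k}$), which are not independent: the priority event itself involves the $X$'s for $i\in U_s$, and the clearing process is coupled to which queues are nonempty. The cleanest way around this is to sample all geometric variables upfront per the deferred-decisions formulation, so that the ``geometric concentration'' event for each queue lies in a $\sigma$-algebra that can be intersected with the priority and clearing events via a union bound without losing the exponential rate.
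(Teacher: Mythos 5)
Your proposal is correct and follows essentially the same route as the paper: establish that $U_s$ retains priority throughout the window via geometric concentration (\Cref{prop:geom3}) and the $2w/\lambda_n$ gap, bound $\mathbb{E}[N_t\mid\mathcal{F}_{t-1}]$ by $f_{s+1}\lambda(S_{s+1})$ under that priority, apply Azuma--Hoeffding to the total clearings, and convert clearings to age decrease with a second geometric concentration bound. Your explicit treatment of the initial idle wait $\tau_i$ and of the coupling between the three events is a slightly more careful bookkeeping of details the paper handles implicitly, but the argument is the same.
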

\begin{proof}
We prove the second statement first, as it is slightly simpler and the main idea will reappear. 

Using similar notation to the last proof, let $X_{i,t}$ denote the indicator variable that queue $i$ cleared a packet at time $t$. Similar to as in that proof, we have for every $t\geq 1$,
\begin{equation*}
    \mathbb{E}\left[\sum_{i\in S_1} X_{i,t}\bigg\vert \mathcal{F}_{t-1}\right]\leq \left(\sum_{i\in S_1} \lambda_i\right)\cdot f_1.
\end{equation*}
Indeed, other queues may have priority at time $t$ and some queues in $S_1$ may be empty at time $t$, so this is just an upper bound.

Again, let $G_{i,\ell}$ be i.i.d. geometric random variables with parameter $\lambda_i$ for $\ell=1,\ldots,w$. In a now familiar argument, the interpretation is that when queue $i$ clears her $\ell$th packet, her age decreases by $G_{i,\ell}$; in particular, the cumulative decrease from clearing $k$ packets is then $Z_i^{k}:=\sum_{\ell=1}^{k} G_{i,\ell}$. Another now familiar application of \Cref{prop:geom3} implies that with probability at least $1-A\exp(-Bw)$ (where $A,B$ are absolute constants depending only on $n,\epsilon, \bm{\lambda}$, not $w$), we have
\begin{equation*}
    \bigg\vert Z_i^{k}- \frac{k}{\lambda_i}\bigg\vert \leq \frac{(f_1\cdot \left(\sum_{i\in S_1} \lambda_i\right)\cdot \epsilon/2)\cdot w}{\lambda_i}.
\end{equation*}

Because the expected number of packets cleared by the queues in $S_1$ is at most $\lambda(S_1)\cdot f_1$ by definition, another familiar application of the Azuma-Hoeffding inequality also implies with probability at least $1-A'\exp(-B'w)$ (where $A',B'$ do not depend on $w$) that 
\begin{equation*}
    \sum_{t=1}^w \sum_{i\in S_1} X_{i,t}\leq (1+\epsilon/2)\cdot \left(\sum_{i\in S_1} \lambda_i\right)\cdot f_1\cdot w.
\end{equation*}
Combining these two estimates, we recall that
\begin{equation*}
    T^i_w=\max\left\{0,w-\sum_{t=1}^{J_i} G_{i,\ell}\right\} \geq w-\sum_{t=1}^{J_i} G_{i,\ell},
\end{equation*}
where $J_i=\sum_{t=1}^w X_{i,t}$. Thus, it follows under these good events that
\begin{align*}
    \sum_{i\in S_1} \lambda_i T^i_w&\geq w\cdot \left(\sum_{i\in S_1} \lambda_i\right) -\sum_{i\in S_1} \lambda_i \left(\sum_{\ell=1}^{J_i} G_{i,\ell}\right)\\
    &\geq w\cdot \left(\sum_{i\in S_1} \lambda_i\right)-\sum_{i\in S_1} \lambda_i\cdot \left(\frac{J_i}{\lambda_i}+\frac{(f_1\cdot \left(\sum_{i\in S_1} \lambda_i\right)\cdot \epsilon/2)\cdot w}{\lambda_i}\right)\\
    &= w\cdot \left(\sum_{i\in S_1} \lambda_i\right)-\sum_{i\in S_1} \left(J_i +\left(f_1\cdot \left(\sum_{i\in S_1} \lambda_i\right)\cdot \epsilon/2\right)\cdot w\right)\\
    &\geq w\cdot \left(\sum_{i\in S_1} \lambda_i\right)-w\cdot f_1 \cdot \left(\sum_{i\in S_1} \lambda_i\right)\cdot \epsilon/2-\sum_{t=1}^w \sum_{i\in S_1} X_{i,t}\\
    &\geq  w\cdot \left(\sum_{i\in S_1} \lambda_i\right)-w\cdot f_1 \left(\sum_{i\in S_1} \lambda_i\right)\epsilon/2-w\cdot(1+\epsilon/2)\cdot \left(\sum_{i\in S_1} \lambda_i\right)\cdot f_1\\
    &= w\cdot \left(\sum_{i\in S_1} \lambda_i\right) \cdot (1-(1+\epsilon)\cdot f_1).
\end{align*}
By a union bound, we thus find this occurs with probability $1-A\exp(-Bw)$ for some absolute possibly different constants $A,B>0$ that do not depend on $w$, thus concluding the proof of the second statement.

The proof for $s\geq 1$ is very similar, just with one extra condition. Suppose at time $t$, we have

\begin{equation*}
    \min_{i\in U_s} T^i_t -\max_{i\in S_{s+1}} T^i_t\geq  2\cdot \frac{w}{\lambda_n}.
\end{equation*}
Consider now the next $w$ steps, and let $G_{i,\ell}$ be as above, with $\ell=1$ to $w$ being the same geometric random variables, with $Z_{i}^k$ being the $k$th partial sum. Another familiar application of \Cref{prop:geom3} implies that with probability at least $1-A_1\exp(-A_2w)$, we have
\begin{equation*}
    Z_i^w < 1.5\cdot \frac{w}{\lambda_n}, \quad \forall i\in U_{s}.
\end{equation*}
Moreover, this event occurring implies that no queue in $U_s$ can possibly become younger than the any queue in $S_{s+1}$ on the next $w$ steps, even if they clear a packet in every step by our assumption. Therefore, on this event, we can apply the same analysis with the variables $X_{i,t}$ as above, just noting that conditioned on this occurring,
\begin{equation*}
    \mathbb{E}\left[\sum_{i\in S_{s+1}} X_{i,t+\ell}\bigg\vert \mathcal{F}_{t+\ell-1}\right]\leq \left(\sum_{i\in S_{s+1}} \lambda_i\right)\cdot f_{s+1},
\end{equation*}
as every queue in $U_s$ will have priority over every queue in $S_{s+1}$ on this window. An extremely similar argument via Azuma-Hoeffding and \Cref{prop:geom3} and taking a union bound so that the concentration of queues in $U_s$ above implies that the desired result holds with probability at least $1-A\exp(-Bw)$ for some constants $A,B>0$ not depending on $w$ (but again, on $n,\epsilon,\bm{\lambda}, \bm{\mu},\mathbf{p}$).

\end{proof}

While these above results characterize the long-run rates up to linear terms, we can provide slightly better control than offered by \Cref{thm:gamemain} for the groups such that $f_{k}>1$:
\begin{thm}[\Cref{thm:gamess}, restated]
Fix $\mathbf{p}$ and suppose that for some group $S_k$ output by the algorithm, $f_k>1$, so that $1-f_k<0$. Then, for each $i\in S_k$, $T^i_t$ is strongly stable.
\end{thm}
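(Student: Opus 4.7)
The plan is to apply Pemantle's moment bound theorem (\Cref{thm:pemantle}) to the subsampled process $(M_{nw})_{n \geq 0}$, where $M_t := \max_{i \in S_k} T^i_t$ and $w$ is a window length to be chosen. Since $T^i_t \leq M_t$ for every $i \in S_k$, uniform moment bounds on $(M_{nw})_n$, combined with within-window control of $M_t - M_{nw}$ by a bounded-moment random variable for $t \in [nw, (n+1)w]$, extend to uniform moment bounds on $(T^i_t)_t$, yielding strong stability.

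The bounded $p$-th moment condition on the subsampled increments follows from standard tail estimates: over a window of length $w$, each queue's age changes by at most $+w$ (aging) or decreases by at most $\sum_{\ell=1}^w G^i_\ell$ for independent $G^i_\ell \sim \text{Geom}(\lambda_i)$ (arising from a sequence of at most $w$ clearings), so $|M_{(n+1)w} - M_{nw}|$ is dominated by $w + \max_{i \in S_k} \sum_{\ell=1}^w G^i_\ell$. This random variable has moments of all orders by standard tail bounds on geometric sums, verifying condition (2) of Pemantle's theorem for every $p$.

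The negative drift condition follows almost directly from \Cref{prop:upperbound} applied at level $s = k$. Since $f_k > 1$, pick $\epsilon > 0$ small enough that $\gamma := (1-\epsilon) f_k - 1 > 0$, and set $w := a \lceil 6/\epsilon \rceil^{n-1}$ for a large integer $a$. For any $M_{nw} \geq w f_k$, \Cref{prop:upperbound} yields $M_{(n+1)w} - M_{nw} \leq -\gamma w$ with probability $\geq 1 - C_1 e^{-C_2 a}$; on the complementary (exponentially unlikely) event, we use the deterministic bound $M_{(n+1)w} - M_{nw} \leq w$. Choosing $a$ large enough that $C_1 e^{-C_2 a} \cdot w \leq \gamma w / 2$ gives $\mathbb{E}[M_{(n+1)w} - M_{nw} \mid \mathcal{F}_{nw}] \leq -\gamma w / 2$ whenever $M_{nw} \geq w f_k$, which is exactly condition (1) of Pemantle's theorem for the subsampled chain with $\alpha = \gamma w / 2$ and $\beta = w f_k$.

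Invoking \Cref{thm:pemantle} on $(M_{nw})_n$ then yields $\mathbb{E}[M_{nw}^r] \leq M_r$ uniformly in $n$ for every $r > 0$ (taking $p > r + 1$ and using that all moments of the increment are finite). Combining with $M_t \leq M_{nw} + w + \max_i \sum_{\ell \leq w} G^i_\ell$ for $t \in [nw, (n+1)w]$ extends this to $\mathbb{E}[M_t^r] \leq M_r'$ uniformly in $t$, and $T^i_t \leq M_t$ completes the proof for each $i \in S_k$. The hard technical content is thus already contained in \Cref{prop:upperbound}---whose careful inductive argument controls the shrinkage rate of the oldest age in $S_k$ when $f_k > 1$---while the remaining work is a clean reduction to Pemantle's framework; the only subtle point to verify is that the strict gap $f_k > 1$ (as opposed to merely $f_k \geq 1$) is what provides the nondegenerate drift $\gamma > 0$ needed to close the argument.
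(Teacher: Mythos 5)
Your proposal is correct and follows essentially the same route as the paper: establish negative drift for the window-subsampled maximum age via \Cref{prop:upperbound} (absorbing the exponentially unlikely bad event into the expectation by taking the window long enough), verify the bounded $p$-th moment condition by dominating each increment with sums of geometric random variables, and invoke \Cref{thm:pemantle}. The only cosmetic difference is that you track $\max_{i\in S_k}T^i_t$ while the paper tracks $\max_{i\notin U_{k-1}}T^i_t$, but these coincide since $f_k>1$ forces $S_k$ to be the final group output by the algorithm, so $S_k=[n]\setminus U_{k-1}$.
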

\begin{proof}
We just sketch the proof. It suffices to show this for the random variable $\max_{i\notin U_{k-1}}T^i_t$. Let $\epsilon>0$ be small enough such that $(1-(1-\epsilon)\cdot f_{k})<\eta<0$. Then let $w=a\cdot \lceil \frac{6}{\epsilon}\rceil^{n-1}$ be large enough so, on the event that $\max_{i\notin U_{k-1}}T^i_{\ell\cdot w}\geq f_{k}\cdot w$, then 
\begin{equation*}
    \mathbb{E}\left[\max_{i\notin U_{k-1}}T^i_{(\ell+1)\cdot w}-\max_{i\notin U_{k-1}}T^i_{\ell\cdot w} \bigg\vert \mathcal{F}_{\ell\cdot w}\right]<\beta<0
\end{equation*}
for some $\beta<0$, where $\mathcal{F}_{\ell\cdot w}$ is the natural filtration of the system; this can be done by \Cref{prop:upperbound}, noting that on the event where the proposition fails, the queue age can increase at most by $w$, and this can be drowned out in the expectation by the exponential decay of the probability bound by taking $w$ large enough. This proves the negative drift condition for the random process $Y_{\ell}:=\max_{i\notin U_{k-1}}T^i_{\ell\cdot w}$ with threshold value $f_{k}\cdot w$.

Then, for any even $p\geq 0$, $\mathbb{E}[\vert \max_{i\notin U_{k-1}}T^i_{(\ell+1)\cdot w}-\max_{i\notin T_{j-1}}T^i_{\ell\cdot w}\vert^p \vert \mathcal{F}_{\ell\cdot w}]$ is bounded by some constant $C_p>0$ for each $p$ depending only on $n,w, \bm{\lambda}$. This is because the difference between these is crudely upper bounded as random variables by a sum of at most $n\cdot w$ geometric random variables in the case that queues somehow clear a packet every round, which are easily seen to have bounded moments. By \Cref{thm:pemantle} (from \cite{pemantle1999moment}), this implies stochastic stability as the $p$th moment condition holds for arbitrarily large $p$.
\end{proof}
\end{document}